\newtheorem{theorem}{Theorem}
\newtheorem*{theorem*}{Theorem}
\newtheorem{conjecture}{Conjecture}
\newtheorem{corollary}[theorem]{Corollary}
\newtheorem{claim}{Claim}
\newtheorem{lemma}[theorem]{Lemma}
\newtheorem{dfn}{Definition}
\newtheorem{proposition}[theorem]{Proposition}
\newcommand{\1}{\mathbf{1}}
\newcommand{\fs}{\mathcal S}
\newcommand{\fp}{\mathcal P}
\newcommand{\rhoo}{\rho_\1}
\newcommand{\leftover}{\mathit{leftover}\hspace{-.07em}}
\newcommand{\capacity}{\mathit{capacity}}
\newcommand{\prob}{\mathit{prob}}
\newenvironment{enumerate*}{
\begin{enumerate}
  \setlength{\itemsep}{5pt}
  \setlength{\parskip}{0pt}
  \setlength{\parsep}{0pt}
}{\end{enumerate}}
\newenvironment{itemize*}{
\begin{itemize}
  \setlength{\itemsep}{5pt}
  \setlength{\parskip}{0pt}
  \setlength{\parsep}{0pt}
}{\end{itemize}}
\title{Bounding the fractional chromatic number of $K_\Delta$-free graphs}
\author{Katherine Edwards\thanks{Email: ke@princeton.edu.  Supported by an NSERC PGS Fellowship and a Gordon Wu Fellowship.}}
\affil{Department of Computer Science\\Princeton University, Princeton, NJ}
\author{Andrew D.~King\thanks{Email: adk7@sfu.ca.  Supported by an EBCO/Ebbich Postdoctoral Scholarship and the NSERC Discovery Grants of Pavol Hell and Bojan Mohar.}}
\affil{Departments of Mathematics and Computing Science\\Simon Fraser University, Burnaby, BC}
\begin{document}

\maketitle

\begin{abstract}
King, Lu, and Peng recently proved that for $\Delta\geq 4$, any $K_\Delta$-free graph with maximum degree $\Delta$ has fractional chromatic number at most $\Delta-\tfrac{2}{67}$ unless it is isomorphic to $C_5\boxtimes K_2$ or $C_8^2$.  Using a different approach we give improved bounds for $\Delta\geq 6$ and pose several related conjectures.  Our proof relies on a weighted local generalization of the fractional relaxation of Reed's $\omega$, $\Delta$, $\chi$ conjecture.
\end{abstract}

\section{Introduction}

In this paper we consider simple, undirected graphs, and refer the reader to \cite{westbook} for unspecified terminology and notation.  We also work completely within the rational numbers.

The idea of bounding the chromatic number $\chi$ based on the clique number $\omega$ and maximum degree $\Delta$ goes all the way back to Brooks' Theorem, which states that for $\Delta\geq 3$, any $K_{\Delta+1}$-free graph with maximum degree $\Delta$ has chromatic number at most $\Delta$.  More recently, Borodin and Kostochka conjectured that if $\Delta \geq 9$, then any $K_{\Delta}$-free graph with maximum degree $\Delta$ has chromatic number at most $\Delta-1$ \cite{borodink77}.  The example of $C_5\boxtimes K_3$ (see Figure \ref{fig:upperbounds}) tells us that we cannot improve the condition that $\Delta\geq 9$.  Reed \cite{reed99} proved a weaker result that had been conjectured independently by Beutelspacher and Hering \cite{beutelspacherh84}:
\begin{theorem}\label{thm:reedbrooks}
For graph with $\Delta\geq 10^{14}$, if $\omega\leq \Delta-1$ then $\chi \leq \Delta-1$.
\end{theorem}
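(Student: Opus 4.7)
The plan is to follow a probabilistic approach in the spirit of Reed's original paper, combining the Lovász Local Lemma with a structural dichotomy that extracts savings from the non-edges forced by the hypothesis $\omega \leq \Delta - 1$. The key insight is that every neighborhood $N(v)$ contains at least one non-edge, and we must convert this combinatorial fact into a color saving at each vertex.

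First I would classify each vertex $v$ as \emph{sparse} if $N(v)$ contains at least $\epsilon \Delta^2$ non-edges (for a small constant $\epsilon$), and \emph{dense} otherwise. For dense vertices, a standard near-clique decomposition theorem yields a partition of $N(v)$ into a bounded number of pseudo-cliques whose internal non-edge counts are tightly controlled; crucially, since $\omega \leq \Delta - 1$, no single pseudo-clique can have size $\Delta$, so each dense neighborhood still contains a usable ``savior pair'' $\{u_v, w_v\}$ of non-adjacent vertices.

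The coloring would be built in two stages. In stage one, assign each vertex an independent uniformly random color from $\{1, \ldots, \Delta-1\}$, retaining a vertex's color only if no neighbor received the same color. For a sparse vertex $v$, an expected $\Omega(\epsilon \Delta)$ pairs of non-adjacent neighbors collide, so with high probability many colors remain available at $v$. For a dense vertex $v$, the pair $\{u_v, w_v\}$ collides with probability $1/(\Delta-1)$, producing a savings of one available color in expectation; by building in a small bias or an activation procedure, this can be boosted enough for Local Lemma concentration. Applying the Lovász Local Lemma to the bad events (``vertex $v$ has fewer than one spare color'') gives a partial coloring in which every uncolored vertex has at least one more available color than uncolored neighbors.

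Stage two completes the coloring via a Haxell-type or Hall-type list-coloring extension theorem on the residual list-coloring instance. The main obstacle I expect is the bookkeeping in stage one: one must show concentration for the available list sizes while controlling the dependency graph for the Local Lemma, and the dense case in particular requires the near-clique decomposition to interact well with the random process. This is where the enormous bound $\Delta \geq 10^{14}$ enters, absorbing the constants lost in concentration inequalities, in the $1/(\Delta-1)$ savings probability for the savior pairs, and in the exponential slack required by the Local Lemma.
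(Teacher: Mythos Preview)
The paper does not prove Theorem~\ref{thm:reedbrooks}; it is quoted as a known result of Reed~\cite{reed99}, used only as background and motivation for the paper's own work on fractional colourings. There is therefore no proof in the paper to compare your proposal against.

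Your outline is in the spirit of Reed's original argument---a sparse/dense dichotomy, a random partial colouring analysed via the Lov\'asz Local Lemma, and a list-colouring completion---but the dense case as you describe it does not go through. A single savior pair $\{u_v,w_v\}$ collides with probability $1/(\Delta-1)$, yielding an expected gain of order $1/\Delta$ at $v$; no ``bias'' or ``activation'' trick converts that into the exponentially small failure probability the Local Lemma requires against $\Delta^{O(1)}$ dependent bad events. In Reed's actual proof the dense vertices are not handled one neighbourhood at a time: they are partitioned into large dense clusters (near-cliques of size close to $\Delta$), and each cluster is coloured as a unit by a tailored random procedure that with very high probability produces $\Theta(\Delta)$ repeated colours in the neighbourhood of every vertex in or adjacent to the cluster. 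Without that global cluster decomposition and per-cluster colouring, your stage one cannot supply enough slack at dense vertices for stage two to succeed.
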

In the paper, Reed claims that a more careful analysis could replace $10^{14}$ with $10^3$.

\begin{figure}
\begin{center}
\includegraphics[scale=.3]{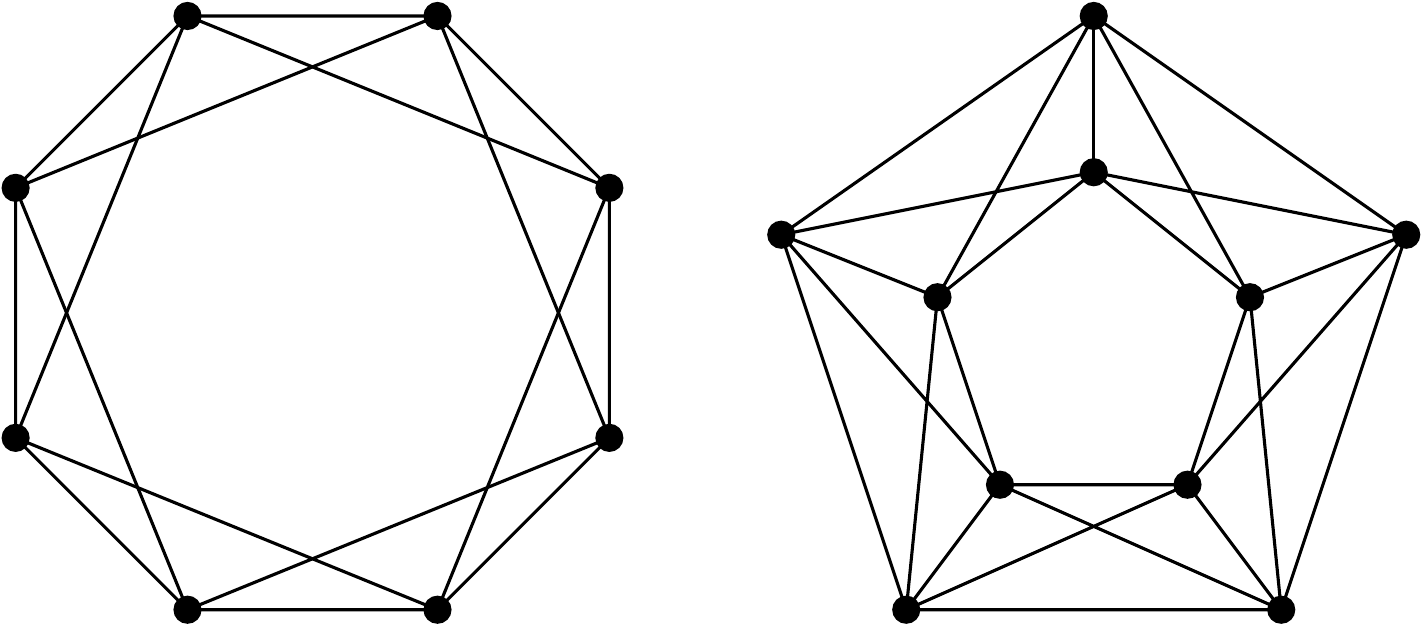}
\end{center}\caption{$C_8^2$ (left) and $C_5\boxtimes K_2$ (right).}\label{fig:counterexamples}
\end{figure}

This is the state of the art on the chromatic number of $K_\Delta$-free graphs, but what about the {\em fractional} chromatic number $\chi_f$ (we will define it soon) of $K_\Delta$-free graphs?  Albertson, Bollob\'{a}s, and Tucker noted in the 1970s that even when $\Delta\geq 3$, there are at least two $K_\Delta$-free graphs with $\chi_f=\Delta$, namely $C_8^2$ and $C_5\boxtimes K_2$ \cite{albertsonbt76} (see Figure \ref{fig:counterexamples}).  It turns out that these are the only such graphs.  For $\Delta\geq 3$ we define $f(\Delta)$ as:
$$f(\Delta)\ = \ \min_G\left\{\ \Delta-\chi_f(G) \ \mid \ \Delta(G) \leq \Delta;\ \ \omega(G)<\Delta;\ \ G\notin \{ C_8^2, C_5\boxtimes K_2 \} \ \right\}.$$

From Brooks' Theorem we know that $f(\Delta)$ is always nonnegative.  Considering Theorem \ref{thm:reedbrooks}, one may be inclined to believe that $f(\Delta)$ increases with $\Delta$.  As proven by King, Lu, and Peng, this is indeed the case for $\Delta \geq 4$ \cite{kinglp12}\footnote{For $\Delta\geq 6$, this is a consequence of the fact that when $\omega > \frac 23(\Delta+1)$, there is a stable set hitting every maximum clique \cite{king11}. For $\Delta \in \{4,5\}$ more work is required.}.  In Table \ref{tab:1} we show the known and conjectured bounds for various values of $\Delta$.  Figure \ref{fig:upperbounds} shows graphs demonstrating the best known (and conjectured) upper bounds on $f(\Delta)$ for $3\leq \Delta\leq 8$.

\begin{figure}[!h]
\begin{center}
\includegraphics[scale=.3]{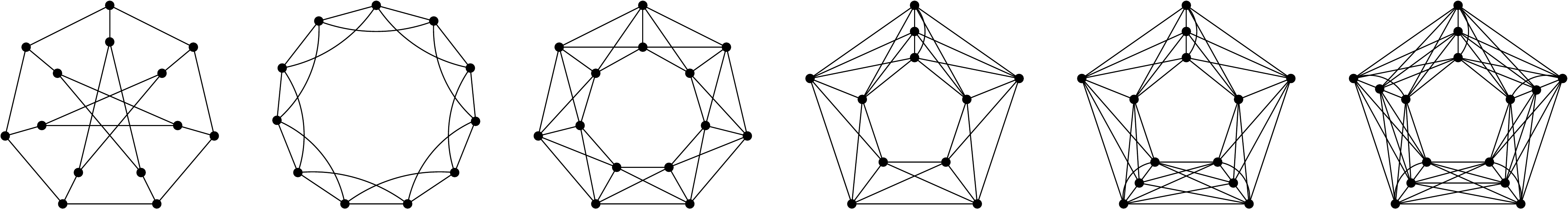}
\end{center}\caption{From left to right, the graphs $P(7,2)$, $C_{11}^2$, $C_7\boxtimes K_2$, $(C_5\boxtimes K_3)-4v$, $(C_5\boxtimes K_3)-2v$, and $C_5\boxtimes K_3$.}\label{fig:upperbounds}
\end{figure}

\begin{table}
\begin{center}
\begin{tabular}{|c|ccc|ccc|cc|}
\hline
&  \multicolumn{3}{|c|}{$f(\Delta)$} & \multicolumn{3}{|c|}{$f(\Delta)$} & \multicolumn{2}{|c|}{conjectured}\\
$\Delta$ &  \multicolumn{3}{|c|}{lower bounds} & \multicolumn{3}{|c|}{upper bound} &  \multicolumn{2}{|c|}{value}\\
\hline
3 & $3/64$ & $0.0468$ &\cite{hatamiz09}&  &  & & & \\
3 & $3/43$ & $0.0697$ &\cite{lupeng10}&  &  & & & \\
3 & $1/11$ & $0.0909$ &\cite{fergusonkk12}&  &  & & & \\
3 & $2/15$ & $0.1333$ &\cite{liu12}& $1/5$ & $P(7,2)$ & \cite{fajtlowicz77}& $1/5$ &\cite{heckmant01}\\
\hline
4 & $2/67$ & $0.0298$ &\cite{kinglp12}& $1/3$ & $C_{11}^2$ && $1/3$ &\cite{kinglp12} \\
5 & ${2/67}$ & ${0.0298}$ &\cite{kinglp12}& $1/3$ &$C_7\boxtimes K_2$ && $1/3$ &\cite{kinglp12}\\
6 & ${1/22.5}$ & $\mathbf{0.0445}$ && $1/2$ & $(C_5\boxtimes K_3)-4v$&& $\mathbf{1/2}$&\\
7 & ${1/11.2}$ & $\mathbf{0.0899}$ && $1/2$ & $(C_5\boxtimes K_3)-2v$&& $\mathbf{1/2}$&\\
8 & ${1/8.9}$ & $\mathbf{0.1135}$ && $1/2$ & $C_5\boxtimes K_3$ & \cite{catlin79}& $\mathbf{1/2}$& \\
9 &  ${1/7.7}$ & $\mathbf{0.1307}$ && $1$ & $K_8$ & & $1$ &\cite{borodink77}\\
10 & ${1/7.1}$ & $\mathbf{0.1423}$ && $1$ & $K_9$ & & $1$ &\cite{borodink77}\\
1000 & $1$ & $1$ &\cite{reed99}& $1$ & $K_{999}$&& $1$ &\cite{beutelspacherh84}\\
\hline
\end{tabular}
\end{center}
\caption{The state of the art.  New results and conjectures are in boldface.  For $\Delta \leq 5$, the fractional bound is the proven bound.  For $\Delta\geq 6$, the decimal bound approximates the proven bound, and the fractional expression approximates the decimal bound for ease of comparison.}
\label{tab:1}
\end{table}

In this paper we give improved bounds on $f(\Delta)$ for $\Delta \geq 6$ up until whenever Theorem \ref{thm:reedbrooks} takes effect, which we assume to be $\Delta=1000$.  We also conjecture that the upper bound of $f(\Delta)\leq \frac 12$ is tight for $\Delta\in \{6,7,8\}$:

\begin{conjecture}\label{con:1}
For $\Delta\in \{6,7,8\}$, let $G$ be a graph with maximum degree $\Delta$ and clique number at most $\Delta-1$.  Then 
the fractional chromatic number of $G$ is at most $\Delta-\frac 12$.
\end{conjecture}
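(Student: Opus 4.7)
The plan is to follow the minimum-counterexample strategy used elsewhere in this paper, applying the weighted local fractional version of Reed's conjecture to derive a contradiction. Let $G$ be a counterexample with $|V(G)|$ minimum, so $\Delta(G) \leq \Delta$, $\omega(G) \leq \Delta - 1$, $\chi_f(G) > \Delta - \tfrac{1}{2}$, and $G$ is $\chi_f$-critical. The unweighted local fractional Reed-type bound
\[
\chi_f(G) \;\leq\; \max_{v \in V(G)} \tfrac{1}{2}\bigl(d(v) + \omega(N[v]) + 1\bigr)
\]
is already at most $\Delta - \tfrac{1}{2}$ unless some vertex $v^\ast$ has $d(v^\ast) = \Delta$ and $\omega(N[v^\ast]) = \Delta - 1$, in which case $v^\ast$ lies in a clique $K$ of size $\Delta - 1$. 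So we may assume such a $K$ exists.

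Every vertex $u \in K$ has $\omega(N[u]) \geq |K| = \Delta - 1$, hence equals $\Delta - 1$, and the local bound again forces $d(u) = \Delta$, so $u$ has exactly two neighbours outside $K$. Since $G$ is $K_\Delta$-free and $\omega(G) \leq \Delta - 1$, the external neighborhood $N(K) \setminus K$ is heavily constrained. For $\Delta \in \{6,7,8\}$ one expects a short list of local configurations, all resembling subgraphs of $C_5 \boxtimes K_3$ as in Figure \ref{fig:upperbounds}.

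With the structure in hand, I would invoke the weighted local bound (the main tool of this paper) by placing extra weight on the vertices of $K$ and redistributing in light of the above restrictions. Since $\omega = \Delta - 1 > \tfrac{2}{3}(\Delta + 1)$ for each $\Delta \in \{6,7,8\}$, King's theorem also furnishes a stable set meeting every maximum clique; combining this with the weighted bound should trim the remaining slack and yield $\chi_f(G) \leq \Delta - \tfrac{1}{2}$, contradicting our assumption.

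The main obstacle will be the final tightness. The extremal graphs $(C_5 \boxtimes K_3) - kv$ saturate the target $\chi_f = \Delta - \tfrac{1}{2}$ exactly, so the estimates must be fine enough to admit these configurations while excluding all others. Given the sizeable gap between the currently proven bounds $f(\Delta) \geq 1/22.5,\ 1/11.2,\ 1/8.9$ for $\Delta = 6,7,8$ and the conjectured $\tfrac{1}{2}$, closing it probably requires genuinely new structural information about how $K_{\Delta - 1}$-subgraphs can coexist in a $K_\Delta$-free graph of maximum degree $\Delta$, quite plausibly combined with case analysis specific to each of $\Delta \in \{6,7,8\}$.
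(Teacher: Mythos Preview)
This statement is Conjecture~\ref{con:1} in the paper, not a theorem: the paper does \emph{not} prove it, and there is no ``paper's own proof'' to compare against. The paper's main theorem (Theorem~\ref{thm:main}) establishes only that $f(\Delta)\geq \tilde y(\Delta)\mu(\Delta)$, giving the much weaker lower bounds $\approx 0.0445,\ 0.0899,\ 0.1135$ for $\Delta=6,7,8$ recorded in Table~\ref{tab:1}; the value $\tfrac12$ is the conjectured optimum, matched by the graphs $(C_5\boxtimes K_3)-4v$, $(C_5\boxtimes K_3)-2v$, and $C_5\boxtimes K_3$.

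Your outline is essentially a recapitulation of the paper's own method---minimum counterexample, the local fractional Reed bound (Theorem~\ref{thm:mcdiarmid}), structural control of maximum cliques (Lemma~\ref{lem:mce}), and the stable-set-hitting-every-maximum-clique idea---and that method is precisely what yields the weaker bounds above, not $\tfrac12$. You recognise this yourself in the final paragraph: the gap between $\tilde y(\Delta)\mu(\Delta)$ and $\tfrac12$ is not an artefact of loose bookkeeping but reflects a genuine barrier in the approach (indeed the paper remarks after Theorem~\ref{thm:main} that $\tilde y(\Delta)\mu(\Delta)<\tfrac15$ in all computed cases). So what you have written is not a proof proposal so much as a correct diagnosis that the existing machinery is insufficient, together with an honest admission that ``genuinely new structural information'' would be required. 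That assessment is accurate; there is no missing step to fill in, because the conjecture is open.
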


One of the major questions in this area, as is evident from Table \ref{tab:1}, is the following:

\begin{conjecture}\label{con:2}
For $\Delta\geq 3$, $f(\Delta)\leq f(\Delta+1)$.
\end{conjecture}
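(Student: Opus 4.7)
The plan is to establish the stronger local statement: for every graph $G$ satisfying the constraints defining $f(\Delta+1)$---i.e.\ $\Delta(G) \leq \Delta+1$, $\omega(G) \leq \Delta$, and $G \notin \{C_8^2, C_5\boxtimes K_2\}$---produce a graph $G'$ satisfying the constraints defining $f(\Delta)$ with $\chi_f(G') \geq \chi_f(G) - 1$.  Applied to a graph attaining the minimum in $f(\Delta+1)$, this immediately yields $f(\Delta) \leq \Delta - \chi_f(G') \leq (\Delta+1) - \chi_f(G) = f(\Delta+1)$.

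The first attempt is single vertex deletion, which is attractive thanks to the elementary bound $\chi_f(G) - \chi_f(G - v) \leq 1$ for every $v$, obtained by extending any fractional coloring of $G-v$ with the singleton $\{v\}$ of weight $1$.  It would therefore suffice to find a vertex $v$ that dominates every other vertex of degree $\Delta+1$ and belongs to every maximum clique of $G$, and whose deletion avoids the two exceptional graphs.  Such a $v$ exists in the generic case where $G$ has a single pivotal high-degree vertex embedded in a near-clique neighbourhood; I would dispose of this case first, and then try to reduce to it by preliminary surgery (e.g.\ deleting pendant components, or identifying twin vertices that do not affect $\chi_f$).

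The hard part will be the regular case, because every known or conjectured extremal graph---$P(7,2)$, $C_{11}^2$, $C_7 \boxtimes K_2$, and the derivatives of $C_5 \boxtimes K_3$---is vertex-transitive (or very close), with every vertex of degree exactly $\Delta+1$.  No single vertex deletion reduces $\Delta$; although single edge deletion also drops $\chi_f$ by at most $1$, iterating to destroy $(\Delta+1)$-regularity can cost much more, as witnessed by $K_4$ (which has $\chi_f = 4$) versus $K_4$ minus a perfect matching (which is $C_4$, with $\chi_f = 2$).  To handle this I would exploit the specific structure of candidate extremal graphs: for strong products of the form $H \boxtimes K_k$ one can use $\chi_f(H \boxtimes K_k) = k \chi_f(H)$ and reduce to a smaller fibre, and for powers of cycles $C_n^k$ a similar reduction by deletion of a carefully chosen consecutive arc is available.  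A fully general proof appears to demand either a universal reduction operation for $(\Delta+1)$-regular $K_{\Delta+1}$-free graphs of high fractional chromatic number that preserves $\chi_f$ up to $+1$, or a structural classification of extremal graphs followed by ad hoc construction; this is the principal obstacle.  The exceptional graphs $C_8^2$ and $C_5\boxtimes K_2$ pose no additional difficulty since they arise only at small $\Delta$.
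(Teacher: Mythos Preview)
This statement is presented in the paper as a \emph{conjecture}, not a theorem: the paper offers no proof, and indeed lists it among the ``major questions in this area.'' There is therefore no paper proof to compare your proposal against.

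Your proposal is not a proof but a plan, and you yourself flag the essential gap. The reduction strategy---pass from an extremal $G$ for $f(\Delta+1)$ to some $G'$ witnessing $f(\Delta)\le f(\Delta+1)$ via $\chi_f(G')\ge\chi_f(G)-1$---is a reasonable reformulation, but the execution breaks down exactly where you say it does. In the regular case no single deletion lowers the maximum degree, and your suggested workarounds only address the \emph{known} candidate extremal graphs ($P(7,2)$, $C_{11}^2$, $C_7\boxtimes K_2$, the $C_5\boxtimes K_3$ variants). A proof must handle an arbitrary extremal $G$, whose structure is not classified; knowing that the conjecture holds for the graphs in Table~\ref{tab:1} is not new information, since those graphs are precisely how the upper-bound column was populated. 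Note also that your fibre-reduction idea for $H\boxtimes K_k$ runs into the excluded graphs: reducing $C_5\boxtimes K_3$ to $C_5\boxtimes K_2$ lands on a forbidden graph, so even in that concrete case one must do something less mechanical (the paper uses $(C_5\boxtimes K_3)-2v$ instead).

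In short: the statement is open, your proposal correctly identifies the natural line of attack and its principal obstruction, and that obstruction is genuine and unresolved.
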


\section{Fractionally colouring weighted and unweighted graphs}

In this paper we must consider fractional colourings of both vertex-weighted and unweighted graphs, because we will begin to fractionally colour an unweighted graph $G$ in one way that does very well on particularly tricky vertices, then finish the colouring in another way that does fairly well on all vertices.  The second step requires a weighted generalization of a known result; the weight on a vertex reflects how much colour we have yet to assign to the vertex.

Let $G=(V,E)$ be a graph, let $\fs = \fs(G)$ be the set of stable sets of $G$, and let $k$ be a nonnegative rational.  Now let $\kappa: \fs\rightarrow \fp([0,k))$ be a function assigning each stable set $S$ of $G$ a subset of $[0,k)$ such that for every $S\in \fs$, $\kappa(S)$ is the union of disjoint half-open intervals\footnote{(containing their lower endpoint but not their upper)} with rational endpoints between $0$ and $k$, and for any distinct $S,S'$ in $\fs$, $\kappa(S)\cap \kappa(S') = \emptyset$.  For a set $\fs'\subseteq \fs$ of stable sets, define $\kappa(\fs')$ as $\cup_{S\in \fs'}\kappa(S)$.  For each $v\in V$, define $\kappa[v]$ as $\cup_{S\ni v}\kappa(S)$.  For a set $X\subseteq V$, define $\kappa[X]$ as $\cup_{S\cap X\neq \emptyset}\kappa(S) = \cup_{v\in X}\kappa[v]$.

Now consider a nonnegative vertex weight function $w: V\rightarrow [0,\infty)$; in this case we say that $G$ is a {\em $w$-weighted graph}.  (Recall that $w$, like all numbers considered in this paper, is rational.)  If for every vertex $v\in V$ we have $|\kappa[v]|\geq w(v)$, then $\kappa$ is a {\em fractional $\wr w$-colouring} of $G$ with {\em weight} $k$; in other words it is a {\em fractional $k\wr w$-colouring} of $G$.  The minimum weight of a fractional $\wr w$-colouring $G$ is denoted $\chi_f^w(G)$, or simply $\chi_f^w$ when the context is clear.  If $w = \1$ (i.e.\ the weight function uniformly equal to 1), then we may omit it from the notation, i.e.\ we define fractional colourings and the fractional chromatic number of unweighted graphs.  If some vertex $v$ has $|\kappa[v]| < w(v)$, we say that we have a {\em partial fractional $k\wr w$-colouring} of $G$.

In both settings, $\kappa[v]$ is the {\em colour set} assigned to $v$.  We denote the colours {\em available to $v$} (i.e.\ not appearing on the neighbourhood of $v$) by $\alpha(v)$, that is, $\alpha(v) = [0,k)\setminus \kappa[N(v)]$.

This is just one of several ways to think about fractional colourings; we hold the following proposition to be self-evident\footnote{The unweighted version is described as folklore in \cite{fergusonkk12} and was used earlier in \cite{kaiserkk10}, and probably elsewhere.}: 

\begin{proposition}\label{prop:tfae}
Let $G$ be a $w$-weighted graph.  The following are equivalent:
\begin{enumerate}
\item[(1)] $G$ has a fractional $k\wr w$-colouring.
\item[(2)] There is an integer $c$ and a multiset of $ck$ stable sets of $G$ such that every vertex $v$ is contained in at least $c\cdot w(v)$ of them.
\item[(3)] There is a probability distribution on $\fs$ such that for each $v\in V$, given a stable set $S$ drawn from the distribution, $\Pr(v\in S) \geq w(v)/k$.
\end{enumerate}
\end{proposition}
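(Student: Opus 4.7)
The plan is to establish the cycle of implications $(1)\Rightarrow(2)\Rightarrow(3)\Rightarrow(1)$, exploiting the paper's blanket rationality assumption to clear denominators whenever convenient. The finiteness of $\fs(G)$ is the only ``structural'' input needed.

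For $(1)\Rightarrow(2)$, I would start with a fractional $k\wr w$-colouring $\kappa$ and choose a positive integer $c$ such that $ck$ is an integer and every endpoint of every interval appearing in any $\kappa(S)$ is an integer multiple of $1/c$; such a $c$ exists because $\fs(G)$ is finite and all endpoints are rational. Build the multiset by including each stable set $S$ with multiplicity $c\,|\kappa(S)|$. Since the sets $\kappa(S)$ are pairwise disjoint subsets of $[0,k)$, the total multiplicity is at most $ck$, so I may pad with copies of the empty stable set (which is always stable, and contains no vertex) to reach exactly $ck$ members. For every vertex $v$, the number of multiset-members containing $v$ is $c\,|\kappa[v]|\geq c\,w(v)$.

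For $(2)\Rightarrow(3)$, take the probability distribution that samples a member of the multiset uniformly at random; then for every $v$, $\Pr(v\in S)$ equals the count of multiset-members containing $v$ divided by $ck$, which is at least $w(v)/k$. For $(3)\Rightarrow(1)$, let $c$ be a positive integer large enough that $ck$ and each $ck\,p(S)$ are non-negative integers, which exists because $k$ and the finitely many $p(S)$ are rational. Since $\sum_{S\in\fs} ck\,p(S) = ck$, I can partition $\{0,1,\ldots,ck-1\}$ by assigning exactly $ck\,p(S)$ positions to each $S$, and set $\kappa(S) := \bigcup_{i\text{ assigned to }S}[i/c,(i+1)/c)$. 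The resulting colour sets are pairwise disjoint half-open subsets of $[0,k)$ with rational endpoints, and $|\kappa[v]| = \sum_{S\ni v}|\kappa(S)| = k\,\Pr(v\in S) \geq w(v)$, so $\kappa$ is a valid fractional $k\wr w$-colouring.

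The only real obstacle is bookkeeping around rationality: checking that a suitable common denominator $c$ exists at each step and that the padding with the empty stable set is legitimate when the coloured intervals fail to exhaust $[0,k)$. Both are mild, and no graph-theoretic ingredient beyond $\fs(G)$ being finite is invoked.
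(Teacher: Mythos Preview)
Your argument is correct; the cycle of implications is straightforward once the rationality convention lets you clear denominators, and you handle the bookkeeping (padding with the empty stable set, checking that $ck\,p(S)$ is an integer) cleanly. The paper itself gives no proof at all: it declares the proposition ``self-evident'' and remarks that the unweighted version is folklore, so there is nothing to compare your approach against.
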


For more background on fractional colourings we refer the reader to \cite{scheinermanubook}.  At this point it is convenient to prove a useful consequence of Hall's Theorem that we will use repeatedly in Section \ref{sec:mce}:
\begin{lemma}\label{lem:hall}
Let $\kappa$ be a partial fractional $k\wr w$-colouring of $G$, and let $X$ be the set of vertices $v$ with $|\kappa[v]|<w(v)$.  Suppose for every $X'\subseteq X$ we have
\begin{equation}\label{eq:hall}\left | \bigcup_{v\in X'}\alpha(v) \right | \geq  \sum_{v\in X'} w(v).\end{equation}
Then there is a fractional $k\wr w$-colouring of $G$.
\end{lemma}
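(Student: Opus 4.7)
The plan is to reduce to the integer/multiset formulation given by Proposition \ref{prop:tfae}(2). Since all data are rational, I can scale by an integer $M$ and view $\kappa$ as a multiset of $Mk$ stable sets $S_1,\ldots,S_{Mk}$ of $G$, so that each $u\notin X$ lies in at least $Mw(u)$ of them, while each $v\in X$ lies in fewer. Under this translation, $\alpha(v)$ corresponds to the index set $\{i : S_i\cap N(v)=\emptyset\}$, and the hypothesis becomes
\[
\Bigl|\bigcup_{v\in Y}\alpha(v)\Bigr|\ \geq\ \sum_{v\in Y} Mw(v)
\qquad\text{for every } Y\subseteq X.
\]

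Next I would invoke the systems-of-disjoint-representatives form of Hall's theorem (equivalently, standard bipartite Hall applied to the graph in which each $v\in X$ is cloned $Mw(v)$ times, with edges from every clone of $v$ to every element of $\alpha(v)$). The displayed inequality is exactly the Hall condition needed to produce, for each $v\in X$, a set $B(v)\subseteq\alpha(v)$ with $|B(v)|=Mw(v)$ such that the $B(v)$'s are pairwise disjoint as subsets of $\{1,\ldots,Mk\}$.

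I would finish by ``wiping and replacing'' $X$ inside the multiset: for each colour $i$, if $i\in B(v)$ for the (necessarily unique) $v\in X$ owning it, replace $S_i$ by $(S_i\setminus X)\cup\{v\}$; otherwise replace $S_i$ by $S_i\setminus X$. Each replacement is a stable set, since $S_i\setminus X$ is stable and, when $i\in B(v)\subseteq\alpha(v)$, no neighbour of $v$ lies in $S_i$. In the new multiset, every $v\in X$ occurs exactly $|B(v)|=Mw(v)$ times, and every $u\in V\setminus X$ occurs exactly as often as before, hence at least $Mw(u)$ times. Proposition \ref{prop:tfae}(2) then converts this multiset back into a fractional $k\wr w$-colouring of $G$.

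The main obstacle is recognising that the stated hypothesis is tuned for this wipe-and-replace procedure rather than for a deficit-only augmentation: the use of the full weight $w(v)$ (instead of the deficit $w(v)-|\kappa[v]|$) together with the full list $\alpha(v)$ (instead of $\alpha(v)\setminus\kappa[v]$) is exactly the Hall condition that lets us select disjoint full-weight representatives from the $\alpha(v)$ and then reinstall the vertices of $X$ inside the existing stable-set structure without damaging the colours used on $V\setminus X$.
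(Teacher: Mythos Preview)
Your proof is correct and follows essentially the same approach as the paper: scale to the multiset formulation of Proposition~\ref{prop:tfae}(2), apply Hall's theorem to the bipartite graph obtained by cloning each $v\in X$ into $Mw(v)$ copies, and use the resulting system of disjoint representatives to augment the stable sets. The only cosmetic difference is that the paper uncolours $X$ at the outset (working with stable sets of $G-X$), whereas you strip $X$ from each $S_i$ during the replacement step; these are equivalent.
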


\begin{proof}
We may assume (by uncolouring $X$) that for every $v\in X$, $\kappa[v]=\emptyset$.  Thus we have a fractional $k\wr w$-colouring of $G-X$.  By Proposition \ref{prop:tfae} there is an integer $c$ and a multiset of $ck$ stable sets $S_1,\ldots,S_{ck}$ of $G-X$ such that every vertex $v \notin X$ is in at least $c\cdot w(v)$ of them.

We now set up Hall's Theorem by constructing a bipartite graph $H$ with vertex set $A\cup B$.  Let $A$ consist of, for every $v\in X$, $c\cdot w(v)$ copies of $v$.  Let $B$ consist of vertices $b_1,\ldots b_{ck}$.  For every vertex $a$ of $A$, let $a$ be adjacent to $b_i$ if and only if the vertex $v$ in $X$ corresponding to $a$ has no neighbour in $S_i$.  Equation (\ref{eq:hall}) guarantees that for every $A'\subseteq A$, $|N(A')| \geq |A'|$, so by Hall's Theorem we have a matching in $H$ saturating $A$.  This matching corresponds to a partial mapping $m:[ck]\rightarrow X$ such that
\begin{itemize}
\item for every $i\in [ck]$ in the domain of $m$, $S_i\cup m(i)$ is a stable set, and
\item for every $v\in X$, at least $c\cdot w(v)$ elements of $[ck]$ map to $v$.
\end{itemize}
Thus we can extend the stable sets $S_i$ appropriately; by Proposition \ref{prop:tfae}, this gives the desired fractional $k\wr w$-colouring of $G$.
\end{proof}

We remark that this lemma is most sensibly applied when $X$ is a clique.

\subsection{Reed's Conjecture and fractional colourings}

Our approach to fractionally colouring $K_\Delta$-free graphs is inspired by the following result of Reed (\cite{molloyrbook}, \S 21.3):

\begin{theorem}\label{thm:reed}
Every graph $G$ satisfies $\chi_f(G) \leq \frac 12(\Delta(G)+1+\omega(G))$.
\end{theorem}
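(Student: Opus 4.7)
The plan is to use Proposition~\ref{prop:tfae}(3) and exhibit a probability distribution $\mathcal{D}$ on the stable sets of $G$ satisfying $\Pr_{S\sim\mathcal{D}}[v \in S] \ge \tfrac{2}{\Delta+1+\omega}$ for every $v$. Since $\tfrac{2}{\Delta+1+\omega}$ is exactly the harmonic mean of $\tfrac{1}{\omega}$ and $\tfrac{1}{\Delta+1}$, the construction should balance a ``clique-side'' contribution (covering $v$ with probability of order $1/\omega$) against a Caro--Wei-style ``neighborhood-side'' contribution (probability of order $1/(\Delta+1)$).

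First I would fix, for each vertex $v$, a maximum clique $K_v$ containing $v$; note $|K_v| \le \omega$ and $|N(v)\setminus K_v| = d(v) - |K_v| + 1$. I would then draw a uniform random permutation $\pi$ of $V(G)$ and build a random stable set $S_\pi$ by a two-part rule: include $v$ iff (i) $v$ wins an appropriate internal contest associated with $K_v$ under $\pi$, and (ii) $v$ precedes all of its neighbors outside $K_v$ under $\pi$. Once the rule is pinned down, the bound would follow from
\[
\Pr[v \in S_\pi] \;\ge\; \tfrac12\!\left(\tfrac{1}{|K_v|}+\tfrac{1}{d(v)+1}\right) \;\ge\; \tfrac{2}{|K_v|+d(v)+1} \;\ge\; \tfrac{2}{\omega+\Delta+1},
\]
where the middle step is AM--HM and the last uses $|K_v|\le\omega$ and $d(v)\le\Delta$.

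The hard part is twofold. First, verifying stability: the internal contests on different maximum cliques need to be coordinated so that no edge $uv$ simultaneously satisfies the selection rule at both endpoints, which is delicate when $K_u \ne K_v$ and the two endpoints belong to different maximum cliques. Second, recovering the full factor of $2$: the naive rule ``$v$ is first in $K_v \cup N(v)$ under $\pi$'' only yields $1/(d(v)+1)$ and so misses the harmonic-mean improvement, so the construction must genuinely leverage the fact that $K_v$ is a clique rather than a generic subset of $N[v]$. I would expect the cleanest route is to introduce an auxiliary independent random bit that selects between the two ``races,'' i.e., to form $\mathcal{D}$ as the equal convex combination of a Caro--Wei distribution (covering $v$ with probability $\tfrac{1}{d(v)+1}$) and a clique-based distribution whose existence one establishes separately via an LP-duality or Hall-type argument of the kind developed in Lemma~\ref{lem:hall}.
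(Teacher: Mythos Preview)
The paper does not prove Theorem~\ref{thm:reed}; it is quoted from \cite{molloyrbook}, \S21.3. The only in-paper hint is the remark in Section~5 that the proofs of Theorems~\ref{thm:reed} and~\ref{thm:mcdiarmid} proceed by iteratively drawing random maximal stable sets, assigning them small colour weight, and deleting vertices once their demand is met. So there is little to compare against directly, and I can only assess your sketch on its own terms.

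Your preferred route---taking $\mathcal D$ as the equal mixture of a Caro--Wei distribution and a ``clique-based distribution'' covering each $v$ with probability at least $1/|K_v|$---does not work. A distribution on stable sets with $\Pr[v\in S]\ge 1/|K_v|\ge 1/\omega(G)$ for every $v$ would, via Proposition~\ref{prop:tfae}(3), certify $\chi_f(G)\le\omega(G)$, which is false in general (for $C_5$ one has $\omega=2$ but $\chi_f=5/2$). No LP-duality or Hall-type argument can conjure such a distribution, and Lemma~\ref{lem:hall} in particular is a local extension tool for colouring a single clique, not a device for building global stable-set distributions. So the convex-combination plan is wrong in principle, not merely incomplete.

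Your first route---a single random permutation with a selection rule combining a clique race and a neighbourhood race---is closer in spirit to how the result is actually proved, but you leave both the rule and its verification unspecified, and you correctly flag stability and the factor-of-$2$ gain as the real obstacles. The point of the known proof is precisely that one does \emph{not} try to realise $1/|K_v|$ and $1/(d(v)+1)$ as separate coverage probabilities to be averaged; rather one runs an iterative process with maximal stable sets and tracks the potential $\rho_w=\tfrac12(w_d+w_c)$, using maximality to force $|S\cap\tilde N(v)|\ge 1$ (controlling $w_d$) together with $|S\cap C|\le 1$ for every clique (controlling $w_c$). Your sketch does not arrive at this accounting.
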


This is the fractional relaxation of Reed's $\omega$, $\Delta$, $\chi$ conjecture \cite{reed98}, which proposes that every graph satisfies $\chi \leq \lceil \frac 12(\Delta+1+\omega)\rceil$.  However, we do not consider the conjecture, or even the fractional relaxation, but rather a weighted version of a local strengthening observed by McDiarmid (\cite{molloyrbook}, p.246).  For a vertex $v$ let $\omega(v)$ be the size of the largest clique containing $v$.  Then:

\begin{theorem}\label{thm:mcdiarmid}
Every graph $G$ satisfies $\chi_f(G) \leq \max_v \frac 12(d(v)+1+\omega(v))$.
\end{theorem}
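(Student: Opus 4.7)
The plan is to apply Proposition~\ref{prop:tfae}(3), which reduces the task to constructing a probability distribution on the stable sets of $G$ such that every vertex $v$ is contained in a random stable set $S$ with probability at least $2/(d(v)+1+\omega(v))$. Given such a distribution, setting $k := \max_v \frac{1}{2}(d(v)+1+\omega(v))$ furnishes the required fractional $k$-colouring.

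For each vertex $v$, fix a maximum clique $K_v \ni v$, so $|K_v| = \omega(v)$, and write $N'(v) := N(v) \setminus K_v$; note $|N'(v)| = d(v) - \omega(v) + 1$. The stable set $S$ will be produced from a uniformly random linear ordering $\sigma$ of $V(G)$ via a randomised-greedy rule that is lenient within each $K_v$: since at most one vertex of $K_v$ can lie in $S$, the competition of $v$ with the other $\omega(v)-1$ members of $K_v$ should collectively count as a single slot rather than $\omega(v)-1$ independent slots, which is the source of the factor-of-$2$ improvement over the naive greedy bound $1/(d(v)+1)$.

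Verification has two parts. For stability, observe that for adjacent $u,v$, either $u \in K_v$ (or symmetrically $v \in K_u$), in which case the clique-winning component of the rule admits at most one of them, or $u \in N'(v)$ and $v \in N'(u)$ simultaneously, in which case the $\sigma$-ordering breaks the tie; the asymmetric case $u \in K_v$ with $v \notin K_u$ requires a globally consistent tie-breaker. For the probability bound, by symmetry of $\sigma$ restricted to $K_v \cup N'(v)$, the target $\Pr(v \in S) \geq 2/(d(v)+1+\omega(v))$ emerges from a counting argument that treats $K_v \setminus \{v\}$ as collectively contributing one competitor on average, rather than $\omega(v)-1$ competitors.

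The main obstacle is designing the acceptance rule so that both stability and the probability bound hold simultaneously, particularly given the asymmetry between $K_v$ and $K_u$ for adjacent $u,v$. The local nature of Theorem~\ref{thm:mcdiarmid} demands that the rule for accepting $v$ depend only on $K_v$ and $N(v)$, which is natural for a greedy construction but requires care when adjacent vertices with different clique choices interact.
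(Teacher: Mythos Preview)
Your proposal is a plan rather than a proof, and you concede as much: the ``main obstacle''---designing an acceptance rule that simultaneously guarantees stability and the bound $\Pr(v\in S)\ge 2/(d(v)+1+\omega(v))$---is the entire substance of the argument, and you leave it unspecified. The difficulty you flag is genuine. Any rule that treats $K_v\setminus\{v\}$ as a single collective competitor for $v$ must cope with the asymmetry that for adjacent $u,v$ one may have $u\in K_v$ while $v\in N'(u)$; a locally defined ``$v$ wins if it beats $N'(v)$ and is the chosen representative of $K_v$'' rule then has no obvious reason to prevent both $u$ and $v$ from being accepted. Conversely, the ordinary random-greedy stable set is automatically stable but only yields $\Pr(v\in S)\ge 1/(d(v)+1)$, and extracting the factor-of-two improvement from the clique structure in its analysis is itself nontrivial. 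Until you actually write down the rule and carry out both verifications, what you have is a restatement of the target rather than a proof.

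For comparison: the paper does not prove Theorem~\ref{thm:mcdiarmid} either. It cites \cite{kingthesis}, \S2.2, and remarks that the argument there is almost identical to the proof of Theorem~\ref{thm:reed} given in \cite{molloyrbook}, \S21.3. So there is no in-paper proof to compare your approach against; the pointer is all that is provided.
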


The proof of this theorem was never published, but appears in Section 2.2 of \cite{kingthesis} and is almost identical to the proof of Theorem \ref{thm:reed}.  What we need is a new weighted version of this theorem, which we prove here.  First we need some notation.  For a vertex $v$ let $\tilde N(v)$ denote the closed neighbourhood of $v$.  Given a $w$-weighted graph $G$ and a vertex $v \in V(G)$, we define:
\begin{itemize}
\item The {\em degree weight} $w_d(v)$ of $v$, defined as $\sum_{u\in \tilde N(v)}w(u)$.
\item The {\em clique weight} $w_c(v)$ of $v$, defined as the maximum over all cliques $C$ containing $v$ of $\sum_{u\in C}w(u)$.
\item The {\em Reed weight} $\rho_w(v)$ of $v$, defined as $\frac 12(w_d(v)+w_c(v))$ (we sometimes denote $\rhoo$ by $\rho$).  For a graph $G$, we define $\rho_w(G)$ as $\max_{v\in V(G)}\rho_w(v)$.
\end{itemize}

Our result is a natural generalization of McDiarmid's:
\begin{theorem}\label{thm:weighted}
Every graph $G$ satisfies $\chi^w_f(G) \leq \rho_w(G)$.
\end{theorem}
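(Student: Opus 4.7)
My plan is to reduce the weighted statement to the unweighted case and then invoke Theorem~\ref{thm:mcdiarmid}. After clearing denominators we may assume $w$ is integer-valued, because for any positive integer $N$ a fractional $k\wr w$-colouring of $G$ scales to a fractional $Nk\wr Nw$-colouring and vice versa, so $\chi_f^{Nw}(G) = N\,\chi_f^w(G)$ while $\rho_{Nw}(G) = N\,\rho_w(G)$. Vertices of weight zero contribute to neither side of the inequality and can be deleted.

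The reduction is via a clique blow-up. Replace each $v\in V(G)$ by a clique $B_v$ of size $w(v)$ and, for each edge $uv\in E(G)$, add every edge between $B_u$ and $B_v$; call the resulting graph $G'$. A direct computation gives, for any copy $v'\in B_v$,
\[
 d_{G'}(v') \;=\; (w(v)-1) + \sum_{u\sim v} w(u) \;=\; w_d(v)-1,
\]
and since copies inside each blob are pairwise adjacent, every clique of $G'$ has the form $\bigcup_{u\in C} B_u$ for some clique $C$ of $G$, so $\omega_{G'}(v') = w_c(v)$. Consequently
\[
\max_{v'\in V(G')} \tfrac12\bigl(d_{G'}(v')+1+\omega_{G'}(v')\bigr) \;=\; \max_{v\in V(G)}\tfrac12\bigl(w_d(v)+w_c(v)\bigr) \;=\; \rho_w(G).
\]

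The main step is then to establish a short lemma: $\chi_f(G')=\chi_f^w(G)$. I would prove this using the probability-distribution characterization of Proposition~\ref{prop:tfae}(3). For $\chi_f(G')\leq \chi_f^w(G)$, given a distribution on stable sets $S$ of $G$ with $\Pr[v\in S]\geq w(v)/k$, form a random stable set $T$ of $G'$ by sampling $S$ and then, independently for each $v\in S$, picking a copy from $B_v$ uniformly at random. Then $T$ is indeed stable in $G'$, and each copy $v'$ satisfies $\Pr[v'\in T] = \Pr[v\in S]/w(v) \geq 1/k$. Conversely, because $B_v$ is a clique, any stable set $T$ of $G'$ meets $B_v$ in at most one vertex, so its ``shadow'' in $V(G)$ is stable; summing $\Pr[v'\in T]\geq 1/k$ over $v'\in B_v$ yields $\Pr[v\in\mathrm{shadow}(T)]\geq w(v)/k$. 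Combining the lemma with Theorem~\ref{thm:mcdiarmid} applied to $G'$ gives $\chi_f^w(G)=\chi_f(G')\leq \rho_w(G)$.

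The main obstacle is choosing the right reduction: the more obvious stable-set blow-up (replacing $v$ by an independent set of $w(v)$ copies) leaves $\omega$ unchanged and therefore cannot recover the weighted clique quantity $w_c(v)$. The clique blow-up fixes this, after which essentially all remaining steps are routine bookkeeping. An alternative would be to redo McDiarmid's probabilistic argument directly with weights, treating the weight $w(v)$ as a ``demand'' at $v$, but the reduction above seems much cleaner and reuses Theorem~\ref{thm:mcdiarmid} as a black box.
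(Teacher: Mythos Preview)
Your proposal is correct and follows essentially the same route as the paper: clear denominators, take the clique blow-up $G'$ (the paper's $G_w$), apply Theorem~\ref{thm:mcdiarmid} there, and pull the colouring back to a weighted colouring of $G$. You are simply more explicit than the paper in verifying $d_{G'}(v')+1=w_d(v)$, $\omega_{G'}(v')=w_c(v)$, and the correspondence $\chi_f(G')=\chi_f^w(G)$; the only minor imprecision is the phrase ``every clique of $G'$ has the form $\bigcup_{u\in C}B_u$'', which should read ``every maximal clique'', but your conclusion $\omega_{G'}(v')=w_c(v)$ is unaffected.
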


\begin{proof}
Let $c$ be a positive integer such that for every $v$, $c w(v)$ is an integer; $c$ exists since the weights are rational.  Let $G_w$ be the graph constructed from $G$ by replicating each vertex $v$ into a clique $C_v$ of size $cw(v)$.~\footnote{That is, $x\in C_u$ and $y\in C_v$ are adjacent precisely if $u,v$ are adjacent or if $u=v$ and $x,y$ are distinct.}  Applying Theorem \ref{thm:mcdiarmid} to $G_w$ tells us that there is a fractional $c\rho_w(G)$-colouring $\kappa_w$ of $G_w$.  From this we construct a  $cw$-fractional $c\rho_w(G)$-colouring $\kappa$ of $G$ by setting, for each $v\in V(G)$,
$$\kappa[v] = \kappa_w[C_v].$$
The result follows from Proposition \ref{prop:tfae} (3).
\end{proof}

\section{The general approach}

Fix some $\Delta \geq 6$ and $0< \epsilon\leq  \frac 12$, and suppose we wish to prove that $f(\Delta)\geq \epsilon$.  Let $G$ be a graph with maximum degree $\Delta$ and clique number $\omega \leq \Delta-1$; by Theorem \ref{thm:reed} we know that $\chi_f(G)\leq \Delta-\frac 12$ if $\omega \leq \Delta-2$, so we assume $G$ has clique number $\omega=\Delta-1$.  We define $V_\omega$ as the set of vertices in $\omega$-cliques, and $V'_\omega$ as the set of vertices in $V_\omega$ with degree $\Delta$.  Let $G_\omega$ and $G'_\omega$ denote the subgraphs of $G$ induced on $V_\omega$ and $V'_\omega$ respectively.  Notice that a vertex $v$ will have $\rhoo(v)> \Delta-\frac 12$ if and only if $v$ is in $V'_\omega$.  In plain language, our approach is:
\begin{enumerate}
\item Prove that in a minimum counterexample, $G_\omega$ has a nice structure.
\item Spend a little bit of weight on a fractional colouring that lowers the Reed weight for vertices in $V'_\omega$ at a rate of $(1+\epsilon')$ per weight spent, i.e.\ we spend $y$ weight and $(1+\epsilon')y = y+\epsilon$.  If $y$ is sufficiently small, this lowers the maximum Reed weight over all vertices of $G$ by $y+\epsilon$.
\item Having already ``won'' by $\epsilon$, i.e.\ having lowered $\rho(G)$ by $y+\epsilon$ using only $y$ colour weight, we can finish the colouring using Theorem \ref{thm:weighted}.
\end{enumerate}
More specifically, we find a vertex weighting $w$ such that we have a fractional $y\wr w$-colouring of $G$, and such that $\rho_{(\1-w)}(G) \leq \Delta-y-\epsilon$.  We then apply Theorem \ref{thm:weighted} to find a fractional $(\Delta-y-\epsilon) \wr (\1-w)$-colouring of $G$.  Combining these two partial fractional colourings gives us a fractional $(\Delta-\epsilon)$-colouring of $G$.

Since any $v\notin V_\omega'$ satisfies $\rho_{\1}(v)\leq \Delta-\frac 12$, if $(1+\epsilon')y \leq \frac 12$ we only need to ensure that $\rho$ drops by $(1+\epsilon')y$ for vertices with $\rhoo(v) = \Delta$.  Actually we can ensure that while we do this, $\rho$ also drops at a decent rate (easily at least $\frac 12y$) for vertices with $\rho < \Delta$.  This means that we can spend more weight (i.e.\ increase $y$), thereby improving $\epsilon$.  It is in our interests to first worry about maximizing $\epsilon'$, then worry about maximizing $y$.

This method depends heavily on properly understanding the structure of vertices with $\rhoo(v)=\Delta$.  We simplify this structure through reductions, or if you prefer, the structural characterization of a minimum counterexample:

\begin{lemma}\label{lem:mce}
Fix some $\Delta \geq 5$ and some $\epsilon \leq \frac 12$, with the further restriction that $\epsilon \leq \frac 13$ if $\Delta = 5$.  Let $G$ be a graph with maximum degree $\Delta$ and clique number at most $\Delta-1$ such that
\begin{itemize*}
\item if $\Delta = 5$, no component of $G$ is isomorphic to $C_{5}\boxtimes K_2$,
\item $G$ has fractional chromatic number greater than $\Delta-\epsilon$, and
\item no graph on fewer vertices has these properties.
\end{itemize*}
Then
\begin{enumerate*}
\item[(i)] the maximum cliques of $G$ are pairwise disjoint, and
\item[(ii)] there is no vertex $v$ outside a maximum clique $C$ such that $|N(v)\cap C| > 1$.
\end{enumerate*}
\end{lemma}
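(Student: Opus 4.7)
The plan is to prove both parts by minimality. In each case I would assume the conclusion fails, construct a proper subset $X \subseteq V(G)$ such that $H := G - X$ still satisfies the hypotheses of the lemma, invoke minimality to obtain a fractional $(\Delta - \epsilon)$-colouring of $H$, and extend it to a fractional $(\Delta - \epsilon)$-colouring of $G$ via Lemma \ref{lem:hall}, contradicting $\chi_f(G) > \Delta - \epsilon$.

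My first move would be to reduce (i) to (ii). Suppose distinct maximum cliques $C_1, C_2$ share a vertex $w$. Then $(C_1 \cup C_2) \setminus \{w\} \subseteq N(w)$, which together with $|N(w)| \leq \Delta$ and $|C_i| = \Delta - 1$ yields $|C_1 \cap C_2| \geq \Delta - 3 \geq 2$ for $\Delta \geq 5$. Any $c \in C_2 \setminus C_1$ then satisfies $c \notin C_1$ and $N(c) \cap C_1 \supseteq C_1 \cap C_2$, so $|N(c) \cap C_1| \geq 2$, contradicting (ii) with the clique $C_1$. It therefore suffices to prove (ii).

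For (ii), suppose some $v \notin C$ has $s := |N(v) \cap C| \geq 2$. Since $\omega(G) \leq \Delta - 1$ forbids $v$ from being adjacent to all of $C$ (else $C \cup \{v\}$ would be a $\Delta$-clique), $s \leq \Delta - 2$ and there exists $u^* \in C \setminus N(v)$. I would take $X = \{v, u^*\}$ (possibly enlarged by selected vertices of $N(v) \setminus C$), check that $H := G - X$ still satisfies the hypotheses (including the $C_5 \boxtimes K_2$ verification when $\Delta = 5$), and invoke minimality to obtain a fractional $(\Delta-\epsilon)$-colouring $\kappa$ of $H$. Then Lemma \ref{lem:hall} applied to $X$ produces the extension to $G$ provided $\left|\bigcup_{x \in X'} \alpha(x)\right| \geq |X'|$ for every $X' \subseteq X$.

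The main obstacle is this Hall-type bookkeeping. A naive count gives $|\kappa[N_H(x)]| \leq \deg(x) \leq \Delta$, which leaves a deficit of roughly $\epsilon$ against the required bound of $\Delta - 1 - \epsilon$. The key saving comes from the clique bound: for any $u \in N(v) \setminus C$, the clique $\{u\} \cup (N(u) \cap C)$ has size at most $\Delta - 1$, so $u$ has at least one non-neighbour in $C$ with which $\kappa[u]$ can share colours, reducing the effective size of $\kappa[N_H(x)]$. Orchestrating these forced overlaps together with the analogous fact for $u^*$'s non-$C$ neighbours, so that Hall's condition is met for every $X' \subseteq X$, and handling the borderline case $\Delta = 5$ where we must rule out $H$ having a $C_5 \boxtimes K_2$ component, are where the argument gets technical.
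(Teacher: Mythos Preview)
Your reduction of (i) to (ii) is clean and correct: if two distinct $(\Delta-1)$-cliques meet, the degree bound forces $|C_1\cap C_2|\ge \Delta-3\ge 2$, and any $c\in C_2\setminus C_1$ then violates (ii). The paper actually goes in the opposite direction, proving (i) first via a sequence of lemmas and then using (i) repeatedly in the proof of (ii); your order is logically fine but means you must prove (ii) without the disjointness of maximum cliques as an available tool, which is strictly harder.

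The substantive problem is your plan for (ii). Deleting only $X=\{v,u^*\}$ (or a small enlargement into $N(v)\setminus C$) does not give enough slack. With $u^*\notin N(v)$, every neighbour of $v$ survives in $H$, so $|\kappa[N_H(v)]|$ can be as large as $\Delta$ and $|\alpha(v)|$ can be negative; even after removing one neighbour of $v$ you are still short by $\epsilon$. Your proposed ``key saving'' --- that a neighbour $u\in N(v)\setminus C$ has a non-neighbour in $C$ and therefore $\kappa[u]$ ``can share colours'' with it --- does not hold: you receive $\kappa$ from the minimality hypothesis and have no control over it, so non-adjacency in $H$ imposes no overlap whatsoever between colour sets. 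This is the missing idea.

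What the paper does instead is delete all of $C$ (sometimes together with $w$ or a few more vertices), and crucially \emph{modify} the reduced graph by adding edges or identifying vertices --- e.g.\ inserting an edge $yz$ between the outside neighbours of $v_1,v_2$, or making $w$ adjacent to the outside neighbours of some $v_i$. These modifications are exactly what force $\kappa$ to have the disjointness/overlap needed for the Hall extension back onto $C$. The price is that the reduced graph may acquire a $K_\Delta$ or (when $\Delta=5$) a $C_5\boxtimes K_2$, and the bulk of the work is a long case analysis ruling these out, supported by several auxiliary forbidden-subgraph lemmas ($K_\Delta$ minus a small set of edges, the ``bump'' configuration). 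None of this machinery is visible in your sketch, and without something playing its role the Hall bookkeeping cannot close.
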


Together, these properties allow us to apply the following result of Aharoni, Berger, and Ziv \cite{aharonibz07}:

\begin{theorem}\label{thm:aharoni}
Let $k$ be a positive integer and let $G$ be a graph whose vertices are partitioned into cliques of size $\omega\geq 2k$.  If $G$ has maximum degree at most $\omega+k-1$, then $\chi_f(G) = \omega$.
\end{theorem}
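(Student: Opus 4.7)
The lower bound $\chi_f(G)\geq\omega$ is immediate since $G$ contains $K_\omega$. For the matching upper bound I would invoke Proposition \ref{prop:tfae}(3): it suffices to exhibit a probability distribution on stable sets of $G$ in which every vertex is selected with probability at least $1/\omega$. Letting $C_1,\ldots,C_m$ be the clique classes, any stable set meets each $C_i$ in at most one vertex, so $\sum_{v\in C_i}\Pr(v\in S)\leq 1=\omega\cdot\tfrac{1}{\omega}$; the hypothesized lower bound on each summand then forces equality throughout and forces $S$ to be an \emph{independent transversal} of the partition with probability one. The task thus reduces to producing a probability distribution on independent transversals in which each vertex appears with probability exactly $1/\omega$.

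This is a polytope-containment statement, so I would next apply LP duality. The convex hull in $\mathbb{R}^V$ of characteristic vectors of independent transversals contains $\tfrac{1}{\omega}\mathbf{1}$ if and only if, for every weight function $w:V\to\mathbb{R}_{\geq 0}$, there is an independent transversal $T$ with $w(T)\geq w(V)/\omega$. In other words, beating the uniform average weight by a transversal is equivalent to our fractional colouring goal, replacing a distributional ask with a combinatorial one.

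The substance of the proof is in this weighted transversal statement, and it is the main obstacle. I would attack it topologically, in the spirit of Haxell's and Aharoni-Haxell's proofs of qualitative independent-transversal theorems. Consider the independence complex $\mathcal{I}(G)$ whose faces are the stable sets of $G$; the hypotheses $\omega\geq 2k$ and $\Delta\leq\omega+k-1$ together bound each vertex's external degree by $k$ while guaranteeing $\omega\geq 2k$ alternatives in each class, and this is precisely the balance that one expects to imply that the subcomplex of $\mathcal{I}(G)$ spanned by any $j$ classes is $(j-2)$-connected. A nerve/homotopy argument in the spirit of Aharoni-Berger-Ziv then produces, for each $w$, a transversal of weight at least $w(V)/\omega$ by averaging through the topological connectivity estimate. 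Everything outside this step is routine; the connectivity estimate is where the numerical hypotheses must be used tightly.
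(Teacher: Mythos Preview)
The paper does not prove Theorem~\ref{thm:aharoni}; it is quoted as a result of Aharoni, Berger, and Ziv \cite{aharonibz07} and used as a black box. So there is no in-paper proof to compare against.

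That said, your outline is faithful to the structure of the original ABZ argument. The LP-duality reduction is correct: once you observe that any optimal distribution must be supported on independent transversals with each vertex appearing with probability exactly $1/\omega$, separating-hyperplane gives the equivalent weighted-ISR formulation (and you may indeed restrict to $w\ge 0$, since adding a constant to all weights within a class shifts both $w(T)$ and $w(V)/\omega$ by the same amount). The numerics you extract---external degree at most $k$, class size at least $2k$---are exactly the inputs the topological machinery needs.

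The one place your sketch is genuinely incomplete is the last step. The phrase ``averaging through the topological connectivity estimate'' hides the real work: connectivity of the independence complex on $j$ classes being at least $j-2$ yields an independent transversal (Haxell/Aharoni--Haxell), but passing from that to the \emph{weighted} statement---finding a transversal with $w(T)\ge w(V)/\omega$ for every $w$---is not an averaging argument over a single connectivity bound. In \cite{aharonibz07} this is handled by a more delicate argument that threads the weights through the topological induction. You have correctly located the obstacle and the tools, but the bridge between them is not as short as your final paragraph suggests.
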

Applying this theorem to an induced subgraph of $G_\omega$ is the key to proving that we can lower $\rho$ quickly for any vertex $v$ with $\rhoo(v)=\Delta$.  The proof of Lemma \ref{lem:mce} is technical, independent of the main proof, and does not give insight to our approach, so we defer it to Section \ref{sec:mce}.  We now consider the probability distribution on stable sets that, via Proposition \ref{prop:tfae}, characterizes our initial colouring phase.

From now until Section \ref{sec:mce}, we consider $G$ to be a graph with maximum degree $\Delta \geq 6$, clique number $\omega=\Delta-1$, and satisfying properties {\em (i)} and {\em (ii)} of Lemma \ref{lem:mce}. We remark that Lemma \ref{lem:mce} gives a characterization of minimum counterexamples with $\Delta=5$; although we do not make use of the characterization in this paper, it is likely to be useful in the future.

\section{A probability distribution}\label{sec:prob}

For every vertex $v$ of $G$, let $N_\omega(v)$ denote $N(v)\cap V_\omega$ and let $d_\omega(v)$ denote $|N_\omega(v)|$.  The initial phase of our colouring involves choosing a random stable set $S_w$ of $G_w$, then extending it randomly to a stable set $S$ of $G$ such that $S_w$ and $S$ have the following desirable properties:

\begin{enumerate}
\item For every $v\in V_\omega$,
\begin{equation}
\Pr(v\in S_\omega) = \tfrac 1\omega.
\label{eq:one}
\end{equation}
\item For every $v\notin V_\omega$,
\begin{eqnarray}\label{eq:two}
\Pr(N_\omega(v)\cap S_\omega = \emptyset) &\geq& \sum_{i=0}^3\tfrac 14\Pr\left(\mathrm{Bin}(d_\omega(v),\tfrac 4\omega ) \leq i )\right)\\
&=& \sum_{i=0}^3\tfrac{4-i}4\Pr\left(\mathrm{Bin}(d_\omega(v),\tfrac 4\omega ) = i )\right).\nonumber
\end{eqnarray}
\item For every $v\notin V_\omega$,
\begin{equation}
\Pr(v\in S) \geq \frac{\Pr(N_\omega(v)\cap S_\omega = \emptyset)}{(d(v)-d_\omega(v))+1} \geq \frac{\sum_{i=0}^3\tfrac{4-i}4\Pr\left(\mathrm{Bin}(d_\omega(v),\tfrac 4\omega ) = i )\right)}{(d(v)-d_\omega(v))+1}.
\label{eq:three}
\end{equation}
\item $S$ is maximal.
\end{enumerate}

We will put weight on stable sets according to this distribution until we can no longer guarantee that $\rho$ is dropping quickly.  We discuss this stopping condition in Section \ref{sec:stopping}.

\subsection{Choosing $S_\omega$}

Denote the maximum cliques of $G$ by $B_1,\ldots,B_\ell$, bearing in mind that they are vertex-disjoint.  To choose $S_\omega$ we first select, for each $1\leq i\leq \ell$, a subset $B'_i$ of $B_i$ of size 4, uniformly at random and independently for each $i$.  Setting $\tilde G_\omega$ to be the subgraph of $G$ induced on $\cup_i B'_i$, note that every vertex in $B_i$ has at most two neighbours outside $B_i$ and therefore $\Delta(\tilde G_\omega) \leq 5$.  Thus Theorem \ref{thm:aharoni} tells us that $\tilde G_\omega$ is fractionally $4$-colourable.  It follows from Proposition \ref{prop:tfae} that there is a probability distribution on the stable sets of $\tilde G_\omega$ such that given a stable set $\tilde S$ chosen from this distribution, for any $v\in \tilde G_\omega$, $\Pr(v\in \tilde S)=\tfrac 14$.

We therefore choose $S_\omega$ from this distribution, subject to our random choice of $\tilde G_\omega$.  Since every $v\in G_\omega$ satisfies $\Pr(v\in \tilde G_\omega) = \tfrac 4\omega$, for any $v\in G_\omega$ we clearly have $\Pr(v\in S_\omega)=\tfrac 1\omega$, i.e. (\ref{eq:one}) holds.  We must now prove that (\ref{eq:two}) holds (the reader may have noticed that any old fractional $\omega$-colouring of $G_\omega$ would have given us $S_\omega$ satisfying (\ref{eq:one})).

The first step is to observe that for $v \notin G_\omega$ and $0 \leq i \leq 3$,

\begin{equation}\label{eq:twoa}
\Pr\left((N_\omega(v)\cap S_\omega = \emptyset) \mid (|N_\omega(v)\cap \tilde G_\omega| = i)\right) \geq \frac{4-i}{4}.
\end{equation}
This is because every neighbour of $v$ in $\tilde G_\omega$ is in $S_w$ with probability $\frac 14$, and in the worst case these events may be disjoint for all $i$ such neighbours (we later conjecture that it is possible to avoid this worst case; this would improve our bounds substantially for $\Delta\in \{5,6\}$).

The second step is to observe that for $v\notin G_\omega$ and $0\leq i \leq d_\omega(v)$,
\begin{equation}\label{eq:twob}
\Pr\left(|N_\omega(v)\cap \tilde G_\omega| = i \right) = \Pr\left(\mathrm{Bin}(d_\omega(v),\tfrac 4\omega ) = i )\right).
\end{equation}
To see this, note that Lemma \ref{lem:mce} tells us that any two neighbours $x,y\in G_\omega$ of $v$ are in different blocks $B_i$, and therefore the events of $x$ being in $\tilde G_\omega$ and $y$ being in $\tilde G_\omega$ are independent.  Equation (\ref{eq:two}) follows immediately from Equations (\ref{eq:twoa}) and (\ref{eq:twob}).

\subsection{Choosing $S$}

Given a choice of $S_\omega$, we randomly extend to $S$ as follows:
\begin{enumerate}
\item Choose an ordering $\pi$ of $V(G)\setminus V_\omega$ uniformly at random, and label the vertices of $V(G)\setminus V_{\omega}$ as $v_1,\ldots,v_r$ in the order in which they appear in $\pi$.
\item Set $S = S_\omega$.
\item For each of $i = 1,\ldots, r$ in order, put $v_i$ in $S$ if and only if it currently has no neighbour in $S$.
\end{enumerate}

Since every vertex in $V_\omega$ is in $S_\omega$ or has a neighbour in $S_\omega$, and every vertex not in $V_\omega$ is in $S$ or has a neighbour in $S$, we can see that $S$ is always a maximal stable set.  A vertex $v_i\in V(G)\setminus V_\omega$ is in $S$ if it has no neighbours in $S_\omega$, and it is not adjacent to any $v_j\in V(G)\setminus V_\omega$ for $j<i$.  Since we choose $\pi$ uniformly at random, any vertex $v\in V(G)\setminus V_\omega$ satisfies
\begin{equation}\label{eq:threea}
\Pr\left((v\in S) \mid (N_\omega(v)\cap S_\omega = \emptyset)   \right) \geq \frac{1}{|N(v)\setminus V_\omega|+1}.
\end{equation}
Equation (\ref{eq:three}) follows immediately from Equation (\ref{eq:threea}).

\subsection{Bounding the rate at which $\rho$ initially decreases}

Suppose we spend weight $y$ to colour $G$ according to the probability distribution on $S$ that we just described.  That is, for $S'\in \fs(G)$, we place weight $q(S')$ on $S'$, where
$$q(S') = y \cdot \Pr(S=S').$$
Then we wish to argue that $\rho(G)$ drops by $(1+\epsilon')y$ for some positive $\epsilon'$.  For now, to avoid consideration of stopping conditions\footnote{i.e.\ when $y$ is large enough to make our model fail}, suppose that $y$ is very small ($y=\frac 1{10}$ will do for now).

For a fixed $\Delta$ and $0\leq d\leq \Delta$ we define $p(\Delta,d)$ as
\begin{equation}
p(\Delta,d) = \frac{\sum_{i=0}^3\tfrac 14\Pr\left(\mathrm{Bin}(d,\tfrac 4{\omega} ) \leq i )\right)}{(\Delta-d)+1},
\end{equation}
noting that a vertex $v\notin G_\omega$ with $d_\omega(v) = d$ is in $S$ with probability at least $p(\Delta,d)$.  Following this, we define
$$
\mu_k(\Delta) = \min_{0\leq d\leq k} p(\Delta,d)   \qquad \mathrm{and} \qquad \mu(\Delta) = \mu_\Delta(\Delta) = \min_{0\leq d\leq \Delta} p(\Delta,d),
$$
%
noting that any vertex $v\notin G_\omega$ is in $S$ with probability at least $\mu(\Delta)$.

\begin{lemma}
For every vertex $v\in V(G)$, $\Pr(v\in S)\geq \mu(\Delta)$.
\end{lemma}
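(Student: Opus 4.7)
The plan is to handle two cases separately---$v\notin V_\omega$ and $v\in V_\omega$---and in each case reduce to a probability estimate already established earlier in Section~\ref{sec:prob}.

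For $v\notin V_\omega$, I would apply Equation~(\ref{eq:three}) directly. Writing $d := d_\omega(v)$ for brevity, the bound there reads $\Pr(v\in S) \geq N/((d(v)-d)+1)$ where $N = \sum_{i=0}^3 \tfrac{4-i}{4}\Pr(\mathrm{Bin}(d,4/\omega)=i)$. Since $d(v)\leq \Delta$, the denominator is at most $(\Delta-d)+1$, so replacing $d(v)$ by $\Delta$ only weakens the lower bound. Using the identity from~(\ref{eq:two}) that rewrites $N$ as $\sum_{i=0}^3 \tfrac{1}{4}\Pr(\mathrm{Bin}(d,4/\omega)\leq i)$, the weakened expression is exactly $p(\Delta,d)$, so $\Pr(v\in S)\geq p(\Delta,d)\geq \mu(\Delta)$.

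For $v\in V_\omega$, I would use that the extension phase only adds vertices to $S_\omega$, so $S\supseteq S_\omega$ and Equation~(\ref{eq:one}) immediately gives $\Pr(v\in S)\geq 1/\omega = 1/(\Delta-1)$. It remains to verify that this is at least $\mu(\Delta)$; for this I would evaluate $p(\Delta,0)$, noting that $\mathrm{Bin}(0,4/\omega)$ is identically zero, so every indicator $\Pr(\mathrm{Bin}(0,4/\omega)\leq i)$ equals $1$ and $p(\Delta,0) = 1/(\Delta+1)$. Thus $\mu(\Delta) \leq p(\Delta,0) = 1/(\Delta+1) < 1/(\Delta-1) = 1/\omega$, completing the case.

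The lemma is essentially a consolidation of probability estimates already proved in the section, so no serious obstacle is anticipated. The only points requiring a moment of care are the monotonicity substitution $d(v)\mapsto \Delta$ in the denominator of the first case, and the separate treatment of $V_\omega$-vertices (since Equation~(\ref{eq:three}) does not apply to them) via the $d=0$ evaluation of $p(\Delta,\cdot)$.
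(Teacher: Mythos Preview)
Your proposal is correct and follows essentially the same approach as the paper: split into $v\notin V_\omega$ (where $\Pr(v\in S)\geq p(\Delta,d_\omega(v))\geq \mu(\Delta)$ follows from~(\ref{eq:three}) and $d(v)\leq\Delta$) and $v\in V_\omega$ (where $\Pr(v\in S)=1/\omega > 1/(\Delta+1)=p(\Delta,0)\geq\mu(\Delta)$). The paper's proof is simply a terser version of yours, treating the first case as already established by the remark following the definition of $p(\Delta,d)$.
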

\begin{proof}
To see this we only need to prove that $v\in G_\omega$ is in $S$ with probability at least $\mu(\Delta)$.  This is clearly the case since $v$ is in $S$ with probability $\frac 1\omega > \frac1{\Delta+1} = p(\Delta,0) \geq \mu(\Delta)$.
\end{proof}

We now set $\epsilon'$ to be $\mu(\Delta)$.  Table \ref{tab:mu} gives some computed values of $\mu(\Delta)$, and Figure \ref{fig:plots} shows some values of $p(\Delta,d)$.  (We will define and consider $\tilde y(\Delta)$ in the next section.)  These numbers were computed using Sage; the code is available at \cite{sagekdeltafree}.

\begin{table}
\begin{center}
\begin{tabular}{|c|c|c|c|c|c|}
\hline
$\Delta$ & $\mu(\Delta)$ & $\mu(\Delta)(\Delta+1)$&$d$ for which $\mu(\Delta)=p(\Delta,d)$ & $\tilde y(\Delta)$ & $\tilde y(\Delta)\mu(\Delta)$\\
\hline
6 & .029376 & .205 & 6 & 1.518 & 0.04459\\
7 & .054869 & .439 & 6 & 1.640 & 0.08999\\
8 & .062947 & .567 & 7 & 1.804 & 0.11353\\
9 & .066406 & .664 & 7 & 1.969 & 0.13077\\
10 & .066328 & .730 & 8 & 2.146 & 0.14234\\
100 & .009843 & .994 & 29 & 20.003 & 0.19691 \\
1000 & .000998 & .999 & 135 & 199.979 & 0.19973 \\
\hline
\end{tabular}
\end{center}
\caption{Some values of $\mu(\Delta)$, where they are achieved, and corresponding values of $\tilde y$, which we discuss later.  Note that $p(\Delta,0)=1/(\Delta+1)$ is an upper bound for $\mu(\Delta)$.  These values are calculated in \cite{sagekdeltafree}.}
\label{tab:mu}
\end{table}

\begin{figure}
\begin{centering}
\includegraphics[scale=.5]{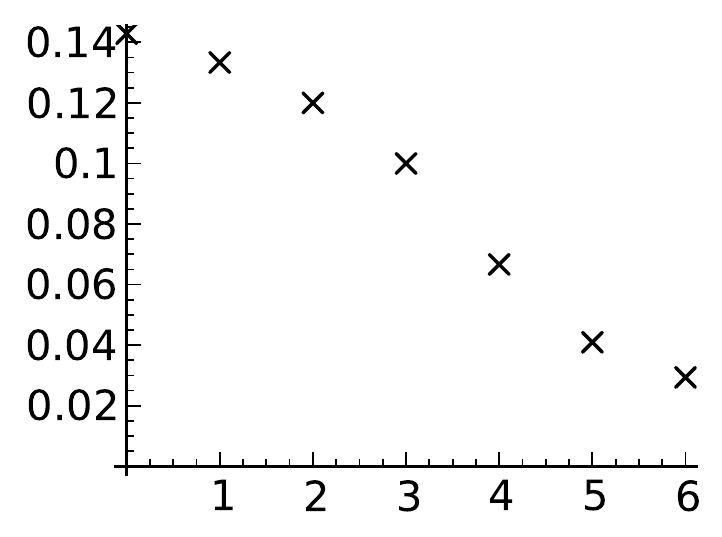}
\includegraphics[scale=.5]{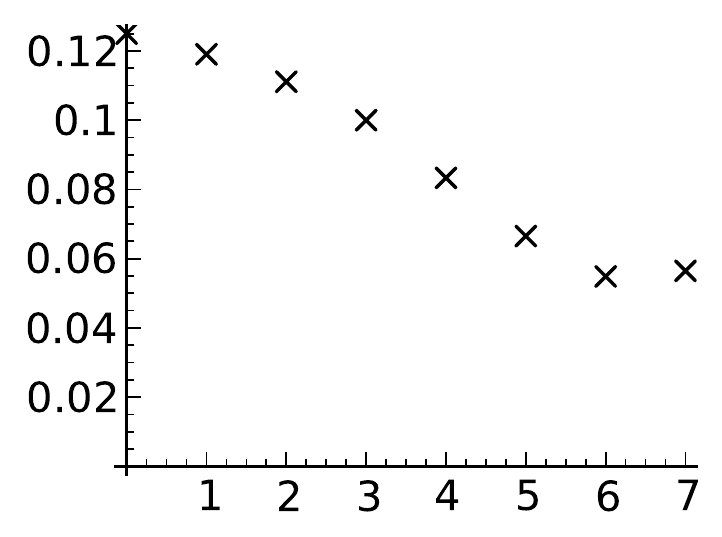}
\includegraphics[scale=.5]{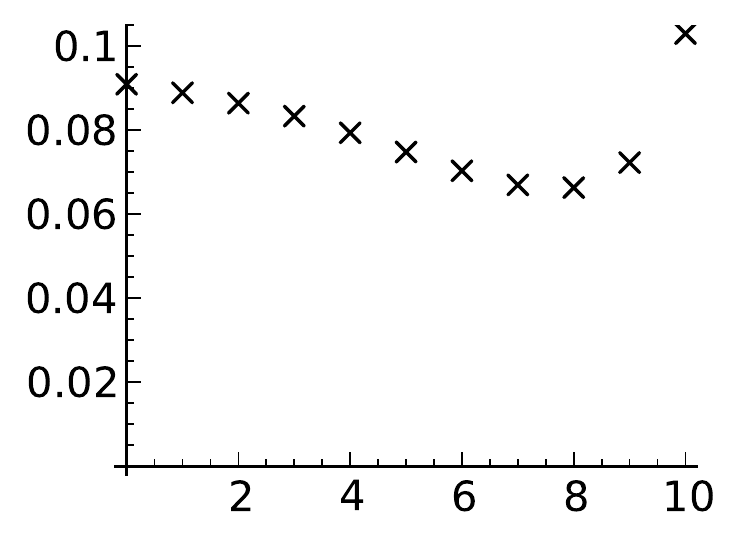}
\includegraphics[scale=.5]{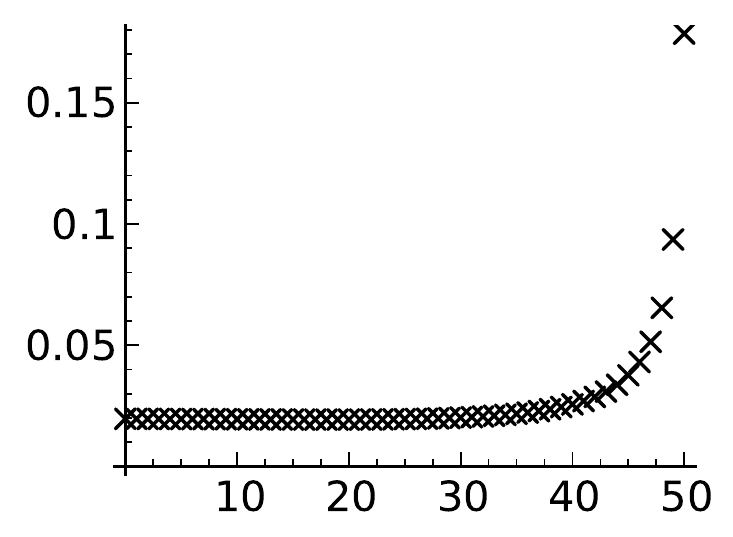}
\end{centering}\caption{Values of $d$ versus $p(\Delta,d)$ for $\Delta \in \{6,7,10,50\}$.}\label{fig:plots}
\end{figure}

\begin{lemma}
For any vertex $v$ in $V'_\omega$, $E(|S\cap \tilde N(v)|) \geq 1 + 2\epsilon'$.
\end{lemma}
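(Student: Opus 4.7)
The plan is to decompose $\tilde N(v)$ into the maximum clique containing $v$ and the two remaining neighbors outside that clique, then bound the expected intersection with $S$ on each piece separately.

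First I would pin down the structure. Since $v\in V'_\omega$ has degree $\Delta$ and lies in a maximum clique $B$ of size $\omega=\Delta-1$ (unique by Lemma \ref{lem:mce}(i), since maximum cliques are pairwise disjoint), the closed neighborhood $\tilde N(v)$ consists of the $\omega$ vertices of $B$ together with exactly $\Delta+1-\omega=2$ further neighbors $u_1,u_2\notin B$. Linearity of expectation then gives
$$E(|S\cap \tilde N(v)|) \;=\; E(|S\cap B|) \;+\; \Pr(u_1\in S) \;+\; \Pr(u_2\in S).$$

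For the first term I would observe that, by construction, the extension phase in Section \ref{sec:prob} only inserts vertices from $V(G)\setminus V_\omega$, so $S\cap V_\omega=S_\omega$. Combined with (\ref{eq:one}), this gives $\Pr(u\in S)=\Pr(u\in S_\omega)=1/\omega$ for every $u\in B\subseteq V_\omega$, and hence $E(|S\cap B|)=\omega\cdot(1/\omega)=1$. (Since $|S\cap B|\le 1$ always, this in fact forces $|S\cap B|=1$ almost surely, which is reassuring but not needed.) For the two remaining terms, the preceding lemma already guarantees $\Pr(u\in S)\ge \mu(\Delta)=\epsilon'$ for every vertex of $G$, in particular for $u_1$ and $u_2$. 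Summing the three contributions yields $E(|S\cap \tilde N(v)|)\ge 1+2\epsilon'$.

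There is no real obstacle here; the lemma is essentially a bookkeeping statement whose whole purpose is to record that the clique $B$ absorbs a full unit of ``savings'' on $\tilde N(v)$ while each of the two outside neighbors contributes an additional $\epsilon'$ via the uniform lower bound just established. The only point worth double-checking is the count $|\tilde N(v)\setminus B|=2$, which uses $\omega=\Delta-1$ together with the disjointness of maximum cliques from Lemma \ref{lem:mce}(i).
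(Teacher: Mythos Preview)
Your proposal is correct and follows essentially the same line as the paper's proof: decompose $\tilde N(v)$ into the clique $B_i$ and the two outside neighbours, use linearity of expectation, and apply the previous lemma's bound $\Pr(u\in S)\ge \mu(\Delta)=\epsilon'$ to each outside neighbour. The only cosmetic difference is that the paper directly notes $S$ meets $B_i$ with probability $1$ (since $S_\omega$ always contains exactly one vertex of each $B_i$), whereas you compute $E(|S\cap B|)=\omega\cdot\tfrac{1}{\omega}=1$ from (\ref{eq:one}); these are equivalent, as you yourself remark.
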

\begin{proof}
Since $v$ is in some $B_i$ and has degree $\Delta = 1+\omega$, $v$ has exactly two neighbours outside $B_i$.  Each is in $S$ with probability at least $\epsilon'$, and $S$ contains a vertex in $B_i$ with probability $1$.  Therefore the lemma follows from linearity of expectation.
\end{proof}

Let $v$ be a vertex in $V'_\omega\cap B_i$.  Since $E(|S\cap B_i|)=1$, and $B_i$ is the unique maximum clique containing $v$, we know that at the outset, when we spend weight $y$, $\rho(v)$ will drop by $\frac 12(1 + 1 + 2\epsilon')y = (1+\epsilon')y$.

For $k \leq \omega$, let $V_k$ be the set of vertices in a clique of size $k$ but not a clique of size $k+1$, noting that these vertex sets partition $V(G)$.  We note the following.

\begin{lemma}\label{lem:bigclique}
If $4\leq k\leq  \omega-1$ and $v$ is a vertex in $V_k$, then $v$ has at most $\Delta+1-k$ neighbours in $V_\omega$.
\end{lemma}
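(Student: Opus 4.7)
The plan is to fix any $k$-clique $K$ containing $v$ and argue that $K \setminus \{v\}$ is disjoint from $V_\omega$. Granted this, the $k-1$ neighbours of $v$ in $K$ contribute nothing to $d_\omega(v)$, and since $d(v) \leq \Delta$ there are at most $\Delta - (k-1) = \Delta + 1 - k$ remaining neighbours available to supply the $V_\omega$-neighbours, which is exactly the bound we want.

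For the disjointness claim I would argue by contradiction: suppose some $u \in K \setminus \{v\}$ lies in $V_\omega$, so $u \in B$ for some maximum clique $B$ of size $\omega$. Since $k < \omega$ the vertex $v$ lies in no $\omega$-clique, so property (ii) of Lemma \ref{lem:mce} gives $|N(v) \cap B| \leq 1$; as $u$ witnesses this, $u$ is $v$'s unique neighbour in $B$, and the remaining $k - 2$ vertices of $K \setminus \{v,u\}$ (all neighbours of $v$) must avoid $B$. Thus $K \cap B = \{u\}$, so the $\omega - 1$ neighbours of $u$ in $B \setminus \{u\}$ and the $k - 1$ neighbours of $u$ in $K \setminus \{u\}$ form disjoint sets. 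Hence $d(u) \geq \omega + k - 2$, and combined with $d(u) \leq \Delta = \omega + 1$ this forces $k \leq 3$, contradicting $k \geq 4$.

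The main obstacle is noticing that property (ii), rather than the more obvious property (i), is what forces $K \cap B = \{u\}$ and thus makes the two large contributions to $d(u)$ genuinely disjoint. Once this disjointness is pinned down, the collapse to $k \leq 3$ is essentially immediate from the Borodin--Kostochka hypothesis $\omega = \Delta - 1$, and the hypothesis $k \geq 4$ is then exactly the threshold at which the argument bites.
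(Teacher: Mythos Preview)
Your argument is correct and is essentially the same as the paper's own proof: both use property (ii) of Lemma~\ref{lem:mce} (applied to $v$, which lies outside the maximum clique $B$) to force $K\cap B=\{u\}$, and then count degrees at $u$ to obtain $k\le 3$. The paper phrases the degree count as ``any vertex in $B_i$ has at most two neighbours outside $B_i$'', which is exactly your inequality $d(u)\ge \omega+k-2$ combined with $\Delta=\omega+1$.
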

\begin{proof}
It suffices to prove that if $X$ is a $k$-clique containing $v$, then $X$ does not intersect an $\omega$-clique.  Suppose it does intersect some $B_i$, and note that it may only intersect $B_i$ once by Lemma \ref{lem:mce}.  Since any vertex in $B_i$ has at most two neighbours outside $B_i$, $|X|$ must be at most 3, a contradiction.
\end{proof}

\begin{corollary}
If $v\in V_k$ for some $4\leq k\leq \omega-1$, then $\Pr(v\in S) \geq \mu_{\Delta+1-k}(\Delta)$.
\end{corollary}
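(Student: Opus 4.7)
The plan is a short, direct deduction from equation (\ref{eq:three}) together with Lemma \ref{lem:bigclique}. Since $k \leq \omega - 1$, the vertex $v$ lies outside $V_\omega$, so the bound (\ref{eq:three}) applies:
\[
\Pr(v \in S) \;\geq\; \frac{\sum_{i=0}^{3}\tfrac{4-i}{4}\Pr\!\left(\mathrm{Bin}(d_\omega(v),\tfrac{4}{\omega}) = i\right)}{(d(v)-d_\omega(v))+1}.
\]
Here I would first remark that the numerator is exactly the numerator of $p(\Delta,d_\omega(v))$ by the identity $\sum_{i=0}^{3}\tfrac14\Pr(\mathrm{Bin}(d,\tfrac{4}{\omega})\le i) = \sum_{i=0}^{3}\tfrac{4-i}{4}\Pr(\mathrm{Bin}(d,\tfrac{4}{\omega}) = i)$ used right after (\ref{eq:two}). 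Since $d(v)\leq \Delta$, the denominator satisfies $(d(v)-d_\omega(v))+1 \leq (\Delta - d_\omega(v))+1$, and therefore
\[
\Pr(v\in S) \;\geq\; p(\Delta, d_\omega(v)).
\]

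Next, I would invoke Lemma \ref{lem:bigclique}, which applies precisely because $4 \leq k \leq \omega - 1$, to obtain $d_\omega(v) \leq \Delta + 1 - k$. Thus $d_\omega(v)$ lies in the interval $[0, \Delta+1-k]$ over which $\mu_{\Delta+1-k}(\Delta)$ is the minimum of $p(\Delta,\cdot)$, giving
\[
p(\Delta, d_\omega(v)) \;\geq\; \min_{0 \leq d \leq \Delta+1-k} p(\Delta,d) \;=\; \mu_{\Delta+1-k}(\Delta),
\]
as required.

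There is no genuine obstacle here; the only care needed is in matching the two forms of the numerator (binomial tail versus weighted pointwise sum) and observing that the denominator in (\ref{eq:three}) is bounded by the denominator in the definition of $p(\Delta,d_\omega(v))$. Everything else is just substitution of Lemma \ref{lem:bigclique} into the range over which the minimum $\mu_{\Delta+1-k}(\Delta)$ is taken.
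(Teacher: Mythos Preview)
Your proof is correct and is exactly the intended argument: the paper states this corollary without proof immediately after Lemma~\ref{lem:bigclique}, and your deduction---first bounding $\Pr(v\in S)\ge p(\Delta,d_\omega(v))$ via (\ref{eq:three}) and $d(v)\le\Delta$, then invoking Lemma~\ref{lem:bigclique} to place $d_\omega(v)$ in the range defining $\mu_{\Delta+1-k}(\Delta)$---is precisely what the authors have in mind.
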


\section{The initial colouring}

The probability distribution described in the previous section tells us what to do in the initial colouring phase: we choose colour classes according to the distribution.  The only thing we need to worry about is giving a vertex more than colour weight 1.  To avoid this, when a vertex is full we simply delete it and continue as though it never existed.  This is the same approach taken in the proof of Theorems \ref{thm:reed} and \ref{thm:mcdiarmid}.  Vertices in $V_\omega$ will never be full before the end of our process.

\begin{lemma}\label{lem:initial}
For any $y\in [0,\omega]$ there exists a vertex weighting $w$ and a fractional $y\wr w$-colouring of $G$ such that $w$ satisfies the following conditions:
\begin{enumerate}
\item[(a)] Every vertex $v$ in $V_\omega$ has $w(v) = y/\omega$.
\item[(b)] For $0\leq \ell \leq \Delta$, every vertex $v \notin V_\omega$ with exactly $\ell$ neighbours in $V_\omega$ has $w(v)\geq \min\{p(\Delta,\ell)y,1 \}$.
\item[(c)] For $1\leq k <\omega$, every clique $X$ of size $k$ has $w(X) \geq k\min\{\mu(\Delta)y, 1\}$.
\item[(d)] For $4\leq k<\omega$, every clique $X$ of size $k$ has $w(X) \geq k\min\{ \mu_{\Delta+1-k}(\Delta)y, 1\}$.
\item[(e)] Every vertex $v$ with $w(v)<1$ has $w(\tilde N(v))\geq y$.
\end{enumerate}
\end{lemma}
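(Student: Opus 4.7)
The plan is to build the colouring greedily. Parametrise by $t\in[0,y]$ and maintain a partial fractional $t\wr w_t$-colouring of $G$ with $w_0\equiv 0$, along with a set $L_t$ of \emph{live} vertices, namely those with $w_t(v)<1$. For small $\delta>0$ we draw $S$ from the distribution of Section~\ref{sec:prob}, except that the extension step orders only the vertices in $L_t\setminus V_\omega$; we then put weight $\delta\cdot\Pr(S=S')$ on every stable set $S'$, which increases $w_t(v)$ by $\delta\cdot\Pr(v\in S)$ on each live $v$. A vertex that reaches weight $1$ leaves $L_t$ and receives no further colour. Since $w_t(v)\le t/\omega\le 1$ for $v\in V_\omega$ when $t\le y\le\omega$, the set $V_\omega$ stays entirely live throughout, so the blocks $B_1,\dots,B_\ell$ remain intact and the construction of $S_\omega$ is unaffected. (Rigorously, one runs the process in $N$ discrete steps of size $y/N$, or equivalently invokes Proposition~\ref{prop:tfae}(3) on a common rational scale; this is the same bookkeeping as in the proofs of Theorems~\ref{thm:reed} and \ref{thm:mcdiarmid}.)

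The crucial observation is that removing saturated vertices from the extension step only strengthens (\ref{eq:threea}): its denominator shrinks from $|N(v)\setminus V_\omega|+1$ to $|N(v)\cap L_t\setminus V_\omega|+1$, while (\ref{eq:two}) is unchanged because $V_\omega\subseteq L_t$. Thus, as long as $v$ stays live, its weight grows at rate at least $p(\Delta,d_\omega(v))$ (or exactly $1/\omega$ if $v\in V_\omega$, by (\ref{eq:one})). Integrating from $0$ to $y$ and capping at $1$ gives (a) and (b) at once.

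Properties (c) and (d) are obtained by summing (b) over $X$: since $y\le\omega$ and $\mu(\Delta)\le p(\Delta,0)=1/(\Delta+1)<1/\omega$, we have $\mu(\Delta)y<1$ and every vertex of $V_\omega$ has $w(v)=y/\omega\ge\mu(\Delta)y$, while the remaining vertices of $X$ satisfy $w(v)\ge\min\{\mu(\Delta)y,1\}$ by (b), giving (c). For (d) note that each $v\in X$ lies in a clique of some size $k'\ge k$; if $v\notin V_\omega$ then the proof of Lemma~\ref{lem:bigclique} (which only uses that a vertex outside $B_i$ has at most one neighbour in $B_i$) applies verbatim to give $d_\omega(v)\le\Delta+1-k'\le\Delta+1-k$, upgrading the bound on $w(v)$ to $\min\{\mu_{\Delta+1-k}(\Delta)y,1\}$; vertices of $V_\omega\cap X$ again do at least as well since $1/\omega\ge\mu_{\Delta+1-k}(\Delta)$.

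Finally (e): a vertex $v$ with $w_y(v)<1$ was necessarily live throughout $[0,y]$, and by maximality of $S$ within the live graph, $|S\cap\tilde N(v)|\ge 1$ at every instant, whence $\sum_{u\in\tilde N(v)}\Pr(u\in S)\ge 1$ and integration yields $w(\tilde N(v))\ge y$. The main obstacle is making the continuous process precise and checking that discretisation does not spoil the lower bounds; once one accepts that deleting saturated non-$V_\omega$ vertices only tightens (\ref{eq:threea}) and leaves the $S_\omega$-construction untouched, the remaining verifications are routine summations.
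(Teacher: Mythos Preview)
Your proof is correct and takes essentially the same approach as the paper: iteratively sample from the distribution of Section~\ref{sec:prob} on the subgraph of not-yet-saturated vertices, using that $V_\omega$ never saturates (so the $S_\omega$ step and the bound (\ref{eq:two}) are unaffected) while deleting saturated non-$V_\omega$ vertices can only shrink the denominator in (\ref{eq:threea}). The paper phrases this as a discrete algorithm with adaptive step sizes $y_i=\min\{\leftover_i,\min_v\capacity_i(v)/\prob_i(v)\}$ rather than your continuous parametrisation, but the content is identical; your explicit handling of the possibility $X\cap V_\omega\neq\emptyset$ in (c) and (d) is in fact slightly more careful than the paper's one-line derivations.
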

Note that $\mu(\Delta)y$ and $\mu_{\Delta+1-k}(\Delta)y$ are less than 1.
\begin{proof}
We proceed using the following algorithm.

\noindent Initially, set $H_0=G$, set $\leftover_0 = y$, and set $\capacity_0(v)=1$ for every vertex in $H_0$.  For $i=0,1,\ldots$ do the following.
\begin{enumerate}
\item Let $R_i$ be a random stable set drawn from the distribution giving $S$ described in Section \ref{sec:prob}.  For every vertex $v$ we set $\prob_i(v)$ as $\Pr(v\in R_i)$.
\item Set $y_i'$ to be $\min_{v\in V(H_i)}(\capacity_i(v)/\prob_i(v))$, and set $y_i$ to be $\min\{\leftover_i, y_i' \}$.
\item For every $v\in V(H_i)$, set $w_i(v)$ to be $\prob_i(v)y_i$.
\item For every $v\in V(H_i)$, set $\capacity_{i+1}(v)$ to be $\capacity_i(v)-w_i(v)$.
\item Set $\leftover_{i+1}$ to be $\leftover_i-y_i$.
\item If $\leftover_{i+1} = 0$, we terminate the process.  Otherwise, let $U_i$ be the vertex set $\{v\in V(H_i) \mid \capacity_{i+1}(v)=0\}$, and set $H_{i+1}$ to be $H_i-U_i$.
\end{enumerate}
Let $\nu$ denote the value of $i$ for which $\leftover_{i+1}=0$.  For every vertex $v$, let $w(v) = \sum_{i=0}^\nu w_i(v)$.  Observe that $y=\sum_{i=0}^\nu y_i$.\\

We first prove that this process must terminate.  Our choice of each $y_i$ implies that either $\leftover_{i+1}=0$, or $|U_{i+1}|<|U_i|$.  Thus we terminate after at most $|V(G)|$ iterations.  Now observe that every vertex $v\in G_\omega$ has $\prob_i(v) = 1/\omega$ throughout the process, and therefore $\capacity_\nu(v) > 0$ since $\leftover_0 = y\leq \omega$ (this can easily be proved by induction on $i$).  Note that (a) also follows from this observation.  As a further consequence, we can see that $G_\omega$ is a subgraph of every $H_i$.

We claim that we actually have a collection of fractional $y_i\wr w_i$-colourings for $0\leq i\leq \nu$.  To see this we simply appeal to Proposition \ref{prop:tfae} (3), noting that $\Pr(v\in R_i) = w_i(v)/y_i$.  Since $w = \sum_{i=1}^\nu w_i$ and $y = \sum_{i=0}^\nu y_i$, it follows immediately that these colourings together give us a fractional $y\wr w$-colouring of $G$.

To prove (b), we take $v\notin V_\omega$ with $\ell$ neighbours in $V_\omega$, and assume that $w(v)<1$, otherwise we are done.  Since every $H_i$ contains $G_\omega$, we can see that

\begin{equation}
\Pr(v\in R_i) \geq \frac{\sum_{i=0}^3\tfrac{4-i}4\Pr\left(\mathrm{Bin}(d_\omega(v),\tfrac 4\omega ) = i )\right)}{|N(v)\cap V(H_i)|-d_\omega(v)+1} \geq p(\Delta,\ell).
\end{equation}
Consequently $\prob_i(v) \geq p(\Delta,\ell)$ for all $i$, and (b) follows.  Note that (c) follows immediately from (b).  Similarly, (d) follows from (b) and Lemma \ref{lem:bigclique}.

To see that (e) holds, simply note that $R_i$ is always a maximal stable set in $H_i$.  Therefore if $w(v)<1$, then $\capacity_\nu(v) > 0$, thus $v\in H_i$ for every $i$, meaning that $R_i$ intersects $\tilde N(v)$ with probability 1.
\end{proof}

\subsection{Maximizing the expenditure}\label{sec:stopping}

Here we consider the best possible choice of $y$ in Lemma \ref{lem:initial}.  The optimal value of $y$ will be the largest possible such that the upper bound on $\rho_{\1-w}(G)$ is still achieved by some vertex in $G_\omega$.  If we increase $y$ beyond this point, we will find that $\rho_{\1-w}(G)$ is no longer guaranteed to drop as fast as $y$ increases.

In light of this goal, for $1\leq k \leq 3$ we let $\tilde y_k(\Delta)$ denote the maximum value of $y$ such that
\begin{equation}\label{eq:y1}
(1+\mu(\Delta)) y \leq \tfrac 12(\Delta-1-k) + \left(\tfrac 12 + \tfrac 12 k\mu(\Delta) \right) y.
\end{equation}
For $4\leq k \leq \Delta-2$ we let $\tilde y_k(\Delta)$ denote the maximum value of $y$ such that
\begin{equation}\label{eq:y2}
(1+\mu(\Delta)) y \leq \tfrac 12(\Delta-1-k) + \left(\tfrac 12 + \tfrac 12 k\mu_{\Delta+1-k}(\Delta) \right) y.
\end{equation}
Now let $\tilde y(\Delta)$ denote $\min \{\min_k \tilde y_k(\Delta), \omega, \frac{\omega-3}{1-3\mu(\Delta)}\}$ (the latter two bounds are for convenience of proof, and do not affect our results).  Our initial colouring phase culminates in the following consequence of Lemma \ref{lem:initial}.

\begin{theorem}\label{thm:initial}
For any $0\leq y\leq \tilde y(\Delta)$, there is a vertex weighting $w$ and fractional $y\wr w$-colouring of $G$ such that $\rho_{\1 - w}(G)\leq \Delta-(1+\mu(\Delta))y$.
\end{theorem}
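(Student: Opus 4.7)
The plan is to apply Lemma \ref{lem:initial} with the given $y$, which produces simultaneously a weighting $w$ and a fractional $y\wr w$-colouring of $G$ satisfying properties (a)--(e); what remains is to verify the pointwise inequality $\rho_{\1-w}(v) \le \Delta - (1+\mu(\Delta))y$ for every vertex $v \in V(G)$, which I handle by case analysis on the size of the largest clique containing $v$.

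First, for $v \in V_\omega$: by Lemma \ref{lem:mce}(i), $v$ lies in a unique $\omega$-clique $B_i$, and property (a) gives $w(B_i) = y$, so $(1-w)_c(v) \le \omega - y$. By Lemma \ref{lem:mce}(ii), $v$ has at most two neighbours outside $B_i$, each with weight at least $\mu(\Delta)y$ by property (c) applied with $k=1$. A direct calculation of $\rho_{\1-w}(v) = \tfrac{1}{2}((1-w)_d(v)+(1-w)_c(v))$, splitting on whether $v$ has $0$, $1$, or $2$ outside neighbours, produces the desired bound, using only the trivial estimate $\mu(\Delta)y < 1$ (which follows from $\mu(\Delta) < 1/(\Delta+1)$ and $y\le \omega$).

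Next, for $v \in V_k$ with $1 \le k \le \omega-1$: the largest clique $C$ containing $v$ has size $k$, so property (c) for $k \le 3$ (or property (d) combined with Lemma \ref{lem:bigclique} for $k \ge 4$) yields $w(C) \ge k\nu y$ with $\nu = \mu(\Delta)$ in the first case and $\nu = \mu_{\Delta+1-k}(\Delta)$ in the second. Hence $(1-w)_c(v) \le k - k\nu y$. Provided $w(v) < 1$, property (e) gives $w(\tilde N(v)) \ge y$ and so $(1-w)_d(v) \le \Delta+1 - y$. Substituting these into $\rho_{\1-w}(v)$ and rearranging shows that the required bound is exactly inequality~(\ref{eq:y1}) for $k \le 3$ or~(\ref{eq:y2}) for $k \ge 4$, both of which hold by the hypothesis $y \le \tilde y_k(\Delta)$.

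The main obstacle is the edge case $w(v) = 1$, where property (e) no longer applies. Here the saving grace is that $w(v) = 1$ itself contributes a full unit of weight to both $w(\tilde N(v))$ and $w(C)$, while the remaining neighbours and clique vertices still contribute at least $\mu(\Delta)y$ each by property (c); this produces improved estimates of the form $(1-w)_d(v) \le d(v)(1-\mu(\Delta)y)$ and $(1-w)_c(v) \le (k-1)(1-\mu(\Delta)y)$. The two auxiliary bounds $y \le \omega$ and $y \le (\omega-3)/(1-3\mu(\Delta))$ built into the definition of $\tilde y(\Delta)$ are precisely what is needed to absorb this case uniformly: the first rules out $w(v) = 1$ for $v \in V_\omega$ except at the trivial boundary $y = \omega$, and the second controls the worst value of $k$ for saturated vertices outside $V_\omega$.
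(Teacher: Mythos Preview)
Your overall strategy matches the paper's, but there is a concrete gap in the case $v \in V_\omega$, and you have misattributed the role of the auxiliary bound $y \le (\omega-3)/(1-3\mu(\Delta))$. For $v \in V_\omega$ you write $(\1-w)_c(v) \le \omega - y$ simply because $w(B_i) = y$; but $(\1-w)_c(v)$ is a maximum over \emph{all} cliques through $v$, not just $B_i$. By Lemma~\ref{lem:mce}(ii) any clique $C\ni v$ with $C\not\subseteq B_i$ has $|C|\le 3$ (a vertex of $C\setminus B_i$ has $v$ as its sole neighbour in $B_i$, forcing $C\cap B_i=\{v\}$), and for such $C$ one only knows $w(C)\ge 3\mu(\Delta)y$. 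The inequality $3-3\mu(\Delta)y\le \omega-y$ is exactly $y\le(\omega-3)/(1-3\mu(\Delta))$, and this is where the paper actually uses that auxiliary bound---not for saturated vertices outside $V_\omega$.

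On the saturated case $w(v)=1$ with $v\notin V_\omega$: you are correct that property~(e) does not apply there (and indeed the paper's own proof invokes~(e) without checking its hypothesis), but the fix you propose is only asserted, and it appeals to the wrong constraint. A workable route is the one you begin to sketch: combine $w(v)=1$ with the per-vertex estimate $w(u)\ge\mu(\Delta)y$ to obtain $(\1-w)_d(v)\le\Delta(1-\mu(\Delta)y)$ and $(\1-w)_c(v)\le(\omega-2)(1-\mu(\Delta)y)$, and then verify directly that $(\Delta-\tfrac32)(1-\mu(\Delta)y)\le\Delta-(1+\mu(\Delta))y$; this rearranges to a linear bound on $y$ that must be compared against the $\tilde y_k(\Delta)$ already present. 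That is a separate short computation, not a consequence of either $y\le\omega$ or $y\le(\omega-3)/(1-3\mu(\Delta))$ as you claim.
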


\begin{proof}
Let $v$ be any vertex in $G$; it suffices to prove that $\rho_{\1-w}(v)\leq \Delta-(1+\mu(\Delta))y$.  We take the fractional $y \wr w$-colouring guaranteed by Lemma \ref{lem:initial}.

First suppose $v\in G_\omega$, and assume without loss of generality that $v\in B_1$.  We know that $w(B_1) = y$ by Lemma \ref{lem:initial}(a), and that for any $u\in \tilde N(v)\setminus B_1$, $w(u) \geq y\mu(\Delta)$ (by Lemma \ref{lem:initial}(b)).  Therefore $|\tilde N(v)|-w(\tilde N(v)) \leq \omega-y + 2(1-y\mu(\Delta)) = \Delta+1-y-2y\mu(\Delta)$.  We now claim that for any clique $C$ containing $v$, $|C|-w(C) \leq \omega-y$.  Clearly $w(B_1) = y$.  For $C$ not equal to $B_1$, Lemma \ref{lem:mce} tells us that $|C|\leq 3$.  Therefore $|C|-w(C) \leq  3-3y\mu(\Delta)$.  If $\omega-y < 3-3y\mu(\Delta)$, then $\omega-3 < y(1-3\mu(\Delta))$, contradicting the fact that $y \leq \tilde y(\Delta) \leq \frac{\omega-3}{1-3\mu(\Delta)}\}$.  Therefore $|C|-w(C) \leq \omega-y=\Delta-1-y$.  Thus
\begin{equation}
\rho_{\1-w}(v) \leq \tfrac 12(\Delta-1-y) + \tfrac 12(\Delta+1-y-2y\mu(\Delta)) = \Delta - (1+\mu(\Delta))y.
\end{equation}

Now suppose that $v$ is not in $V_\omega$, and let $C$ be a clique containing $v$ such that $|C|-w(C)$ is maximum.  Denote the size of $C$ by $k$.  By Lemma \ref{lem:initial}(e), we know that $w(\tilde N(v))\geq y$, so 
\begin{equation}
|\tilde N(v)|-w(\tilde N(v))\leq \Delta+1-y.
\end{equation}
Therefore to prove that $\rho_{\1-w}(v)\leq \Delta - (1+\mu(\Delta))y$, it is sufficient to prove that
\begin{equation}
k-w(C)\leq \Delta-1 -y - 2y\mu(\Delta),
\end{equation}
i.e.
\begin{equation}
(\mu(\Delta)+\tfrac 12)y\leq \tfrac 12(\Delta-1-k) +\tfrac 12w(C).
\end{equation}
By Lemma \ref{lem:initial}(c) we know that $w(C) \geq k\mu(\Delta)y$.  If $k\geq 4$, by Lemma \ref{lem:initial}(d) we know that $w(C)\geq k\mu_{\Delta+1-k}(\Delta)y$.  We also know that $y \leq \tilde y(\Delta) \leq \tilde y_k(\Delta)$, so if $k\leq 3$ then 
\begin{equation}
(1+\mu(\Delta)) y \leq \tfrac 12(\Delta-1-k) + \left(\tfrac 12 + \tfrac 12 k\mu(\Delta) \right) y,
\end{equation}
and if $k\geq 4$ then
\begin{equation}
(1+\mu(\Delta)) y \leq \tfrac 12(\Delta-1-k) + \left(\tfrac 12 + \tfrac 12 k\mu_{\Delta+1-k}(\Delta) \right) y.
\end{equation}
In either case,
\begin{equation}
(1+\mu(\Delta)) y \leq \tfrac 12(\Delta-1-k)+ \left(\tfrac 12y + \tfrac 12 w(C)\right),
\end{equation}
so
\begin{equation}
(\mu(\Delta)+\tfrac 12) y \leq \tfrac 12(\Delta-1-k)+ \tfrac 12 w(C),
\end{equation}
as desired.  Thus $\rho_{\1-w}(v)\leq \Delta - (1+\mu(\Delta))y$.
\end{proof}

Since equations \ref{eq:y1} and \ref{eq:y2} are linear, we can easily find the optimal values of $\tilde y_k(\Delta)$ by solving for
\begin{equation}
\tilde y_k(\Delta) =  \frac{\tfrac 12(\Delta-1-k)}{\tfrac 12+\mu(\Delta)-\tfrac 12k\mu(\Delta)}
\end{equation}
when $k\leq 3$ and for 
\begin{equation}
\tilde y_k(\Delta) =  \frac{\tfrac 12(\Delta-1-k)}{\tfrac 12+\mu(\Delta)-\tfrac 12k\mu_{\Delta+1-k}(\Delta)}
\end{equation}
when $\Delta-2 \geq k \geq 4$.  See \cite{sagekdeltafree} and Table \ref{tab:mu} for numerical values.

\section{Proving the main result}\label{sec:main}

We now have enough results in hand to prove the main result easily.

\begin{theorem}\label{thm:main}
For $\Delta \geq 6$, let $G$ be a graph with maximum degree $\Delta$ and clique number at most $\Delta-1$.  Then $G$ has fractional chromatic number at most $\Delta - \min\{\frac 12, \tilde y(\Delta)\mu(\Delta)\}$.
\end{theorem}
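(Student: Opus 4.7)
The plan is to stitch together the two colouring phases that Sections~\ref{sec:prob}--\ref{sec:stopping} have carefully prepared. Let $\epsilon = \min\{\tfrac{1}{2},\, \tilde y(\Delta)\mu(\Delta)\}$, and suppose for contradiction that $G$ is a vertex-minimum graph satisfying the hypotheses of the theorem but with $\chi_f(G) > \Delta - \epsilon$.

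First I would dispense with the case $\omega(G) \leq \Delta-2$: Theorem \ref{thm:reed} immediately gives $\chi_f(G) \leq \tfrac{1}{2}(\Delta+1+\omega(G)) \leq \Delta - \tfrac{1}{2} \leq \Delta - \epsilon$, a contradiction. So I may assume $\omega(G) = \Delta-1$. Since $\Delta \geq 6$ and $\epsilon \leq \tfrac{1}{2}$, Lemma \ref{lem:mce} then applies and tells me that the maximum cliques of $G$ are pairwise disjoint and that no vertex outside a maximum clique $C$ has more than one neighbour in $C$ -- exactly the structural regime in which the distributional analysis of Section~\ref{sec:prob} and Theorem \ref{thm:initial} was set up.

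Next I would set $y = \tilde y(\Delta)$ and invoke Theorem \ref{thm:initial} to obtain a vertex weighting $w$ and a fractional $y\wr w$-colouring $\kappa_1$ of $G$ with
\[
\rho_{\1-w}(G) \;\leq\; \Delta - (1+\mu(\Delta))\,y.
\]
Applying Theorem \ref{thm:weighted} to the $(\1-w)$-weighted graph $G$ yields a fractional $\bigl(\Delta-(1+\mu(\Delta))y\bigr)\wr(\1-w)$-colouring $\kappa_2$, which I would realise on a colour interval disjoint from the one used by $\kappa_1$ (in the sense of Proposition \ref{prop:tfae}). The union of $\kappa_1$ and $\kappa_2$ is then a fractional $\wr\1$-colouring of $G$ of total weight
\[
y + \bigl(\Delta - (1+\mu(\Delta))\,y\bigr) \;=\; \Delta - \mu(\Delta)\,\tilde y(\Delta) \;\leq\; \Delta - \epsilon,
\]
contradicting the choice of $G$.

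There is in fact no serious obstacle remaining at this stage: the substantive work -- the probabilistic construction of the random maximal stable set, the iterative stockpiling of colour weight in Lemma \ref{lem:initial}, the Reed-weight bookkeeping in Theorem \ref{thm:initial}, and the structural reductions in Lemma \ref{lem:mce} -- has already been done. The only point that requires attention is ensuring that the hypothesis $\epsilon \leq \tfrac{1}{2}$ of Lemma \ref{lem:mce} is satisfied, and this is precisely why the $\min$ with $\tfrac{1}{2}$ appears in the statement of the theorem.
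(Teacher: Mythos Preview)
Your proposal is correct and follows essentially the same argument as the paper's own proof: reduce to a minimum counterexample, use the structural lemma to force $\omega=\Delta-1$ with well-behaved maximum cliques, then combine the initial $y\wr w$-colouring from Theorem~\ref{thm:initial} (with $y=\tilde y(\Delta)$) and the $(\1-w)$-colouring from Theorem~\ref{thm:weighted}. The only cosmetic difference is that the paper invokes Theorem~\ref{thm:mcdiarmid} rather than Theorem~\ref{thm:reed} to dispose of the case $\omega\le\Delta-2$, but either works here.
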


\begin{proof}
Let $G$ be a minimum counterexample; Theorem \ref{thm:mcdiarmid} tells us that $G$ has maximum degree $\Delta$ and clique number $\omega=\Delta-1$.  Lemma \ref{lem:mce} tells us that all $\omega$-cliques of $G$ are disjoint, and that no vertex $v$ has two neighbours in an $\omega$-clique not containing $v$.

We may therefore set $y= \tilde y(\Delta)$ and apply Theorem \ref{thm:initial}.  This gives us a vertex weighting $w$ and fractional $y\wr w$-colouring of $G$ such that $\rho_{\1 - w}(G)\leq \Delta-(1+\mu(\Delta))y$.  By Theorem \ref{thm:weighted}, $\chi_f^{\1-w} \leq \rho_{\1 - w}(G)\leq \Delta-(1+\mu(\Delta))y$.  That is, there is a fractional $(\Delta-(1+\mu(\Delta))y)\wr (\1 - w)$-colouring of $G$.  Combining this colouring with the initial fractional $y\wr w$-colouring gives us a fractional $(\Delta-\tilde y(\Delta)\mu(\Delta))$-colouring, which tells us that $\chi_f(G)\leq \Delta-\tilde y(\Delta)\mu(\Delta)$.
\end{proof}
For all values of $\Delta$ we have investigated, $\tilde y(\Delta)\mu(\Delta) < \tfrac 15$.  We believe that this is always the case.

\section{The structural reduction}\label{sec:mce}

In this section we prove Lemma \ref{lem:mce}, which tells us that we need only consider graphs whose maximum cliques behave nicely.  First observe that every proper induced subgraph of $G$ is fractionally $(\Delta-\epsilon)$-colourable, since deleting vertices from a graph with $\Delta=5$ cannot create a copy of $C_5\boxtimes K_2$.  We prove the lemma in two parts:

\begin{lemma}\label{lem:disjoint}
Part (i) of Lemma \ref{lem:mce} holds.
\end{lemma}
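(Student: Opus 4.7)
Suppose for a contradiction that $G$ contains two $\omega$-cliques $B_1,B_2$ with $B_1\cap B_2\ne\emptyset$; set $t=|B_1\cap B_2|\ge 1$. My plan is to (a) pin down the local structure of $B_1\cup B_2$ forced by the degree bound, (b) produce a stable set $I\subseteq B_1\cup B_2$ hitting both cliques, (c) fractionally $(\Delta-\epsilon)$-colour $G-I$ by minimality, and (d) extend this colouring to $G$, contradicting $\chi_f(G)>\Delta-\epsilon$.

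For (a), each $v\in B_1\cap B_2$ is adjacent to all of $(B_1\cup B_2)\setminus\{v\}$, so $d(v)\ge 2\omega-t-1$; combined with $\Delta(G)\le\omega+1$ this forces $t\ge\omega-2$, whence $|B_1\cup B_2|\le\Delta+1$ and every vertex of $B_1\cap B_2$ has at most one neighbour outside $B_1\cup B_2$. For (b), the $K_\Delta$-free hypothesis forces some $b\in B_2\setminus B_1$ to be non-adjacent to some $a\in B_1\setminus B_2$ (otherwise $B_1\cup\{b\}$ would be a $K_\Delta$); set $I=\{a,b\}$, a stable set meeting both $B_1$ and $B_2$. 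For (c), minimality applies because the deletion of $I$ preserves the hypotheses of Lemma~\ref{lem:mce} --- as the authors note at the start of Section~\ref{sec:mce}, no $C_5\boxtimes K_2$ component can be created by deletion --- yielding a fractional $(\Delta-\epsilon)$-colouring $\kappa'$ of $G-I$.

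Step (d) is where the work lives. I would invoke Lemma~\ref{lem:hall} with $w=\1$ and $X=I$, which reduces the extension to checking the Hall-type inequality
$$
\Bigl|\bigcup_{v\in X'}\alpha(v)\Bigr|\ \ge\ |X'|\qquad\text{for every }X'\subseteq I.
$$
The naive count $|\kappa'[N(a)]|\le|N(a)|\le\Delta$ gives only $|\alpha(a)|\ge-\epsilon$, so the verification must extract additional savings from the local structure. These savings are forced by the fact that the $\omega-1$ neighbours of $a$ in $B_1\setminus\{a\}$ form a clique and so already carry exactly $\omega-1$ units of $\kappa'$-colour measure, while the at most two remaining neighbours of $a$ are non-adjacent to all but a few vertices of $B_1\setminus\{a\}$ and can therefore (after a local re-assignment of $\kappa'$) be covered by stable sets already contributing to $\kappa'[B_1\setminus\{a\}]$; an analogous saving applies for $b$, and the joint estimate uses the non-edge $ab$.

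The main obstacle is the two-element Hall inequality $|\alpha(a)\cup\alpha(b)|\ge 2$, equivalently $|\kappa'[N(a)]\cap\kappa'[N(b)]|\le\Delta-\epsilon-2$. This is tight because $N(a)\cap N(b)\supseteq B_1\cap B_2$ has size $t\ge\omega-2$, forcing the intersection of coloured neighbourhoods to be large. I expect the fix to be a case split on $t\in\{\omega-2,\omega-1\}$ that exploits the near-rigid structure (in the case $t=\omega-1$ the graph on $B_1\cup B_2$ is $K_\Delta$ minus a single edge, which pins down $\kappa'$ on most of it). If direct Hall verification fails in some sub-case, a back-up plan is to enlarge the deleted set to $I\cup(B_1\cap B_2)$ and apply Theorem~\ref{thm:aharoni} or Theorem~\ref{thm:weighted} on the remainder to obtain enough slack to re-cover the deleted vertices; either way the argument is what the authors accurately describe as the technical content of Section~\ref{sec:mce}.
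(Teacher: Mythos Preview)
Your structural steps (a)--(c) are fine and match the paper: the degree bound forces $t=|B_1\cap B_2|\in\{\omega-2,\omega-1\}$, there is a non-edge $ab$ across the two differences, and minimality gives a fractional $(\Delta-\epsilon)$-colouring of any proper induced subgraph. The gap is entirely in step~(d), and it is not a technicality but a genuine obstruction to your primary strategy.

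Take the case $t=\omega-1$. Then $B_1\cap B_2$ is a clique of size $\omega-1=\Delta-2$ contained in $N(a)\cap N(b)$, and in any fractional colouring $\kappa'$ of $G-\{a,b\}$ it carries exactly $\Delta-2$ units of colour. Hence
\[
|\alpha(a)\cup\alpha(b)| \;=\; (\Delta-\epsilon)-|\kappa'[N(a)]\cap\kappa'[N(b)]| \;\le\; (\Delta-\epsilon)-(\Delta-2)\;=\;2-\epsilon\;<\;2,
\]
so the two-element Hall inequality \emph{always fails}, regardless of how $\kappa'$ is chosen. Your proposed fix of ``local re-assignment of $\kappa'$'' cannot help: the $\Delta-2$ units on $B_1\cap B_2$ are forced by the clique structure, and you have no control over $\kappa'$ anyway. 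The back-up plan of enlarging the deleted set to $I\cup(B_1\cap B_2)$ is in the right direction, but neither Theorem~\ref{thm:aharoni} nor Theorem~\ref{thm:weighted} applies to give the required slack on a deleted clique of this shape; you would have to re-derive the extension by hand, which is the entire content of the proof.

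The paper's approach reverses the order of your extension: it deletes (at least) all of $C\cap C'$ together with the outer vertices, obtains $\kappa$ on the remainder by minimality, then colours the outer non-adjacent vertices $x,y$ (your $a,b$) \emph{first} with as much common colour as possible --- forcing $|\kappa[\{x,y\}]|$ small --- and only then extends to the clique $C\cap C'$ via Lemma~\ref{lem:hall}. This order matters because it converts the non-edge $ab$ into savings on the neighbourhoods of the vertices in $C\cap C'$, rather than trying to squeeze $a$ and $b$ in last when their shared neighbourhood is already fully coloured. The actual argument still requires a case analysis (three mutually intersecting $\omega$-cliques; $t=\omega-1$; $t=\omega-2$, with the last case further splitting on $\Delta=5$ versus $\Delta\ge 6$ and on whether the reduction creates a forbidden subgraph), and in some sub-cases the reduced graph is obtained by identifying vertices or adding edges rather than by pure deletion.
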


\begin{lemma}\label{lem:mce2}
Part (ii) of Lemma \ref{lem:mce} holds.
\end{lemma}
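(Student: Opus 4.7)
The plan is to derive a contradiction with the minimality of $G$. Assume for contradiction that some vertex $v \notin C$ has $|A| \geq 2$, where $A := N(v) \cap C$. I would first tighten the range to $2 \leq |A| \leq \Delta - 3$: if $|A| = \Delta - 1$ then $A = C$ and $C \cup \{v\}$ would be a $K_\Delta$, violating $K_\Delta$-freeness; if $|A| = \Delta - 2$ then $A \cup \{v\}$ is itself a maximum clique intersecting $C$ in $|A| = \Delta - 2 \geq 2$ vertices, contradicting Part (i).

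Next, by the minimality of $G$ applied to $G - C$ (which has fewer vertices and, for $\Delta = 5$, still has no component isomorphic to $C_5 \boxtimes K_2$ since deleting vertices cannot create such a component), there is a fractional $(\Delta - \epsilon)$-colouring $\kappa^*$ of $G - C$. The goal is to extend $\kappa^*$ to a fractional $(\Delta - \epsilon)$-colouring of all of $G$ by fractionally colouring the clique $C$, contradicting that $G$ is a minimum counterexample. Since $C$ is a clique, Lemma \ref{lem:hall} with $w \equiv 1$ reduces the extension to verifying $|\bigcup_{u \in C'} \alpha(u)| \geq |C'|$ for every $C' \subseteq C$, where $\alpha(u) = [0, \Delta - \epsilon) \setminus \kappa^*[N(u) \setminus C]$. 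Since each $u \in C$ has at most two neighbours outside $C$, $|\alpha(u)| \geq \Delta - 2 - \epsilon$, which easily handles $|C'| \leq \Delta - 3$.

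The hard cases are $|C'| \in \{\Delta - 2, \Delta - 1\}$. Here one must bound $\left|\bigcap_{u \in C'} \kappa^*[N(u) \setminus C]\right|$. A colour $\gamma$ in this intersection corresponds to a stable set in $G - C$ that ``dominates'' $C'$ from outside; since no single vertex is adjacent to all of $C$ (otherwise $K_\Delta$ would appear), such a dominating stable set must contain multiple vertices. I would then exploit the presence of $v$: every $u \in A$ has $v$ as one of its at most two outside-$C$ neighbours, so the ``other'' outside-$C$ neighbour of $u$ (if any) is unique, sharply limiting $\kappa^*[N(u) \setminus C]$. Combined with Part (i)---which prevents outside neighbours of $C$ from lying in shared maximum cliques---this should force the critical intersection to have small measure.

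The main obstacle I anticipate is closing an $\epsilon$-sized gap in these estimates: naive bounds yield $|\bigcup_{u \in C} \alpha(u)| \geq \Delta - 1 - \epsilon$, falling just short of the needed $\Delta - 1$. A cleaner route is probably to uncolour $X := C \cup \{v\}$ rather than $C$ alone, so that each $u \in A$ has outside-$X$ neighbourhood of size at most $1$; this shrinks the critical intersection enough that Hall's condition holds for $C$ with slack, and $v$ itself can then be fractionally coloured using the leftover colours. A case analysis on whether the vertices in $A$ have a second outside-$C$ neighbour at all---and on whether such second neighbours coincide, potentially exposing another violator of Part (ii) on which one can recurse---should close the argument.
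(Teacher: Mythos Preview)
Your plan has two concrete gaps that the paper's argument avoids by a different mechanism.

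First, uncolouring $X = C \cup \{v\}$ does not by itself make Hall's condition hold for $C' = C$. Write $y$ and $z$ for the (at most one each) neighbours of $v_1, v_2$ outside $X$. A colour lying in $\bigcap_{u \in C} \kappa^*[N(u) \setminus X]$ must in particular lie in $\kappa^*[y] \cap \kappa^*[z]$, but nothing forces this to be small: if $y = z$ (i.e.\ $v_1, v_2$ share a second common neighbour outside $C$), or merely if $y$ and $z$ are nonadjacent and happen to receive overlapping colour, that intersection can have measure up to $1$, yielding only $|\bigcup_{u \in C} \alpha(u)| \geq \Delta - 1 - \epsilon$, which is still short. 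Second, and more seriously, even if Hall succeeds for $C$, you must then colour $v$, which now has all of its (up to $\Delta$) neighbours coloured. You need weight at least $1 + \epsilon$ of repeated colour on $N(v)$, but Lemma~\ref{lem:hall} gives no control over \emph{which} colours land on $v_1, v_2 \in A$; the appeal to ``leftover colours'' does not close this.

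The paper's proof proceeds differently: it deletes $C$ but then \emph{adds edges} to $G - C$ (for instance the edge $yz$, and edges from $v$ to the outside-$C$ neighbours $p, p'$ of some $v_3 \in C \setminus N(v)$). This forces any fractional $(\Delta - \epsilon)$-colouring of the reduced graph to satisfy $\kappa[y] \cap \kappa[z] = \emptyset$ and $\kappa[v] \subseteq \alpha(v_3)$; one then sets $\kappa[v_3] = \kappa[v]$, colours $v_4, \ldots, v_\omega$ greedily, and finishes $\{v_1, v_2\}$ via Lemma~\ref{lem:hall} with exactly the right slack. The cost is that the edge-augmented graph may now contain a $K_\Delta$ (or, when $\Delta = 5$, a copy of $C_5 \boxtimes K_2$); the bulk of the proof is a lengthy case analysis ruling these out, supported by several preparatory lemmas excluding ``bumps'' and various $K_\Delta$-minus-a-few-edges configurations. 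Your proposed split on whether the second outside-$C$ neighbours coincide is indeed the first branch point of that analysis, but what follows is substantially more intricate than a recursion on violators.
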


\subsection{Part (i)}

We actually split the proof of Lemma \ref{lem:disjoint} into three parts.  Suppose $C$ and $C'$ are two intersecting $\omega$-cliques.  Since $\omega=\Delta-1$, we can immediately observe that $|C\cap C'|\geq \omega-2$. Therefore Lemma \ref{lem:disjoint} follows as an easy corollary of the next three Lemmas \ref{lem:3cliques}, \ref{lem:2cliques}, \ref{lem:2cliques2}.  Throughout this section we will make implicit use of the fact that every vertex in $G$ has at least $\Delta-1$ neighbours, as is trivially implied by the minimality of $G$.  Furthermore note that whenever we reduce $G$ to a graph $G'$, no component of which is $5$-regular, no component of $G'$ can be isomorphic to $C_5\boxtimes K_2$.

\begin{lemma}\label{lem:3cliques}
$G$ does not contain three $\omega$-cliques mutually intersecting in $\omega-1$ vertices.
\end{lemma}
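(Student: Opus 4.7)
The plan is to derive a contradiction from three $\omega$-cliques $C_1,C_2,C_3$ with pairwise intersections of size $\omega-1$. I begin with a short structural analysis. Either $C_1\cap C_2=C_1\cap C_3$, or these two $(\omega-1)$-subsets of $C_1$ differ. In the second case, writing $C_1\cap C_2=C_1\setminus\{c_3\}$ and $C_1\cap C_3=C_1\setminus\{c_2\}$ with $c_2\ne c_3$, a chase of the incidences shows that $C_2\cap C_3$ has a unique element $d\notin C_1$ and that $\{c_2,c_3,d\}\cup(C_1\cap C_2\cap C_3)$ forms a clique of size $(\omega-2)+3=\omega+1=\Delta$ in $G$, contradicting $\omega(G)=\omega$. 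Hence $C_1\cap C_2=C_1\cap C_3=C_2\cap C_3=:A$ with $|A|=\omega-1$, so $C_i=A\cup\{v_i\}$ for three pairwise non-adjacent vertices $v_1,v_2,v_3$ (otherwise $A\cup\{v_i,v_j\}$ would be a $\Delta$-clique). Each $a\in A$ already has $\omega-2+3=\Delta$ neighbours in $A\setminus\{a\}\cup\{v_1,v_2,v_3\}$ and therefore no other neighbour, while each $v_i$ has its $\omega-1$ neighbours in $A$ together with at most two further neighbours, which I denote $N^*_i$.

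The second step is the extension. Set $G':=G-A$, a proper induced subgraph of $G$, so by minimality $G'$ has a fractional $(\Delta-\epsilon)$-colouring $\kappa'$. Every $a\in A$ has $N_{G'}(a)=\{v_1,v_2,v_3\}$, so its available colour set is $\alpha(a)=[0,\Delta-\epsilon)\setminus(\kappa'[v_1]\cup\kappa'[v_2]\cup\kappa'[v_3])$. Since $A$ is a clique, Lemma~\ref{lem:hall} will extend $\kappa'$ to a fractional $(\Delta-\epsilon)$-colouring of $G$ provided
\[|\kappa'[v_1]\cup\kappa'[v_2]\cup\kappa'[v_3]|\ \leq\ (\Delta-\epsilon)-(\omega-1)\ =\ 2-\epsilon.\]
To arrange this, I would uncolour $v_1,v_2,v_3$ in $\kappa'$ and then recolour the stable set $\{v_1,v_2,v_3\}$ so that the union of their colour sets is as small as possible. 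Writing $F_i:=\kappa'[N^*_i]$ (of measure at most $2$), the cleanest outcome is a single colour set $K$ of measure $1$ with $K\cap(F_1\cup F_2\cup F_3)=\emptyset$: then setting $\kappa[v_1]=\kappa[v_2]=\kappa[v_3]=K$ gives union measure $1\leq 2-\epsilon$. This works whenever $\Delta-\epsilon\geq 7$, since $|F_1\cup F_2\cup F_3|\leq 6$.

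The main obstacle is the small-$\Delta$ regime ($\Delta\in\{5,6,7\}$), where the three forbidden sets can together exhaust $[0,\Delta-\epsilon)$ so that no common colour class exists. Here the argument must be delicate: let $v_1$ and $v_2$ share a block of measure $1$ inside $[0,\Delta-\epsilon)\setminus(F_1\cup F_2)$ (always non-empty once $\Delta-\epsilon\geq 5$), let $v_3$ take as much of that block as lies outside $F_3$, and top up the rest from the complement of $F_3$; if this is still not enough, one can first re-permute $\kappa'$ on the small set $N^*_1\cup N^*_2\cup N^*_3$ (which has at most six vertices in $G'$) to align the $F_i$ more favourably, or in the most resistant sub-case delete a carefully chosen vertex of that set along with $A$ and run a slightly larger Hall-type extension. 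The target is always colour sets $K_1,K_2,K_3$ with $K_i\cap F_i=\emptyset$, $|K_i|\geq 1$, and $|K_1\cup K_2\cup K_3|\leq 2-\epsilon$. Once such $K_i$ are in hand, Lemma~\ref{lem:hall} extends the colouring to all of $A$, producing a fractional $(\Delta-\epsilon)$-colouring of $G$ and the required contradiction.
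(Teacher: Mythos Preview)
Your structural analysis at the start is correct (and in fact slightly more careful than the paper, which jumps straight to the configuration $X\cup\{x_1,x_2,x_3\}$ without justifying that three pairwise-$(\omega-1)$-intersecting $\omega$-cliques must share a common $(\omega-1)$-set). Your overall strategy --- colour $G$ minus the configuration, then colour $\{v_1,v_2,v_3\}$ so that their union has measure at most $2-\epsilon$, then apply Lemma~\ref{lem:hall} to fill in $A$ --- is exactly the paper's approach as well.

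The gap is in the small-$\Delta$ regime, which you flag but do not resolve. Your first attempt (give $v_1,v_2$ a common block $K$ of measure $1$, then fit $v_3$ into $K$ as far as $F_3$ allows) fails already for $\Delta=6$: nothing prevents $F_3\supseteq K$, in which case $\kappa[v_3]$ is disjoint from $K$ and the union is $2>2-\epsilon$. For $\Delta=5$ the block $K$ of measure $1$ need not even exist, since $\Delta-\epsilon-|F_1\cup F_2|$ can be as small as $\tfrac{2}{3}$. Your fallback suggestions --- ``re-permute $\kappa'$ on $N^*_1\cup N^*_2\cup N^*_3$'' or ``delete a carefully chosen vertex and run a slightly larger Hall-type extension'' --- are not arguments: the outside neighbours have their own colouring constraints in $G'$ that you cannot freely adjust, and enlarging the deleted set destroys the clique structure that makes Lemma~\ref{lem:hall} applicable.

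The fix, which the paper carries out, is to abandon the attempt at a triple overlap and work with pairwise overlaps only. Since $|\alpha(v_i)\cap\alpha(v_j)|\geq \Delta-\epsilon-4$ for each pair, one can (for $\Delta\geq 6$) give $v_1,v_2$ common colour of weight $\tfrac12$, then give $v_1$ a further $\tfrac12$ chosen so that $v_3$ can take $\tfrac12$ of its colour inside $\kappa[v_1]$, and finally give $v_2$ a further $\tfrac12$ so that $v_3$ can take its remaining $\tfrac12$ inside $\kappa[v_2]$. This yields $\kappa[v_3]\subseteq\kappa[v_1]\cup\kappa[v_2]$ and $|\kappa[\{v_1,v_2,v_3\}]|\leq \tfrac32\leq 2-\epsilon$. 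For $\Delta=5$ the same scheme works with weights $\tfrac23,\tfrac13,\tfrac13$ in place of $\tfrac12,\tfrac12,\tfrac12$, giving union at most $\tfrac53\leq 2-\epsilon$. The point is that only the pairwise intersections $\alpha(v_i)\cap\alpha(v_j)$ need to be large, and they are.
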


\begin{proof}
Suppose that $G$ contains an $(\omega-1)$-clique $X$ and vertices $x_1,x_2,x_3$ each of which is complete to $X$.  Because there is no $(\omega+1)$-clique,  $\{x_1,x_2,x_3\}$ is a stable set. 
Let $G' = G\setminus (X \cup \{x_1,x_2,x_3\})$; as previously observed, since $G'$ is a proper induced subgraph of $G$, there is  a fractional ${(\Delta-\epsilon)}$-colouring $\kappa$ of $G'$. We extend $\kappa$ to a fractional $(\Delta-\epsilon)$-colouring of $G$ to obtain a contradiction, beginning by colouring $\{x_1,x_2,x_3\}$ using weight at most $2-\epsilon$.

First suppose $\Delta=5$, so $\epsilon \leq \frac 13$.  Since each $x_i$ has at most two neighbours in $G'$, we have $|\alpha(x_i)| \geq \Delta-\epsilon-2$.  Note that $|\alpha(x_i) \cup \alpha(x_j)| \leq \Delta-\epsilon$, so for any $\{i,j\}\subseteq \{1,2,3\}$ we have $|\alpha(x_i)\cap \alpha(x_j)| \geq \Delta-\epsilon-4 \geq 1-\epsilon\geq \frac 23$.  We extend $\kappa$ to $\{x_1,x_2,x_3\}$ such that
\begin{itemize}
\item $|\kappa[x_1]\cap \kappa[x_2]| \geq \frac 23$, and 
\item There exist disjoint subsets $s_1$ and $s_2$ of $\kappa[x_3]$, each of size $\frac 13$, such that $s_1\subset \kappa[x_1]$ and $s_2\subset \kappa[x_2]$.
\end{itemize}
To do this, we first give $x_1$ and $x_2$ weight $\frac 23$ of colour in common, then give $x_1$ and $x_3$ weight $\frac 13$ of colour each such that all the colour on $x_3$ is in $\kappa[x_1]$, then give $x_2$ and $x_3$ weight $\frac 13$ of colour each such that all the new colour on $x_3$ is in $\kappa[x_2]$.  Finally we complete the colouring of $x_3$ arbitrarily.  Confirming that this is possible is straightforward given the pairwise intersections of $\alpha(x_i)$.  Furthermore since $|\kappa[\{x_1,x_2\}]|\leq \frac 43$ and at least $\frac 23$ of the colour in $\kappa[x_3]$ is in $\kappa[\{x_1,x_2\}]$, we use weight at most $2-\epsilon$ on $\{x_1,x_2,x_3\}$.

Now suppose $\Delta\geq 6$, so $\epsilon \leq \frac 12$.  Our approach is the same as before, except now for any $\{i,j\}\subseteq \{1,2,3\}$ we have $|\alpha(x_i)\cap \alpha(x_j)| \geq \Delta-\epsilon-4 \geq \frac 32$.  Thus we can proceed by giving $x_1$ and $x_2$ weight $\frac 12$ of colour in common, then assign $s_1$ and $s_2$ as before, but with size $\frac 12$ each.  Again we use weight at most $2-\epsilon$ on $\{x_1,x_2,x_3\}$.

We now have $\{v\in V(G) : \left|\kappa(v)\right|<1 \} =V(X)$.  For every $v\in V(X)$, we have $\left|\alpha(v)\right| \geq \Delta-\epsilon-(2-\epsilon)=\omega-1 = |V(X)|$.  We may therefore apply Lemma \ref{lem:hall} and extend $\kappa$ to a fractional $(\Delta-\epsilon)$-colouring of $G$.
\end{proof}

\begin{lemma}\label{lem:2cliques}
$G$ does not contain two $\omega$-cliques intersecting in $\omega-1$ vertices.
\end{lemma}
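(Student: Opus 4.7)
The plan is a minimum-counterexample argument. Suppose for contradiction that $G$ contains two $\omega$-cliques $C, C'$ with $|C \cap C'| = \omega - 1$; write $X = C \cap C'$, and let $x_1$, $x_2$ be the unique vertices of $C \setminus X$ and $C' \setminus X$ respectively. Since any edge $x_1 x_2$ would produce a $(\omega+1)$-clique $C \cup \{x_2\}$, we have $x_1 \not\sim x_2$, so each $x_i$ has at most two neighbours outside $U := X \cup \{x_1, x_2\}$. Each $v \in X$ is adjacent to the other $\omega-2$ vertices of $X$ and to both $x_1, x_2$, so it has at most one neighbour $u_v$ outside $U$. By minimality, $G - U$ admits a fractional $(\Delta-\epsilon)$-colouring $\kappa$, and I will extend it to $G$.

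First I would colour $x_1$ and $x_2$. Since $|N(x_i) \setminus U| \leq 2$ one has $|\alpha(x_i)| \geq \Delta - \epsilon - 2$ and $|\alpha(x_1) \cap \alpha(x_2)| \geq \Delta - \epsilon - 4$. Because $x_1 \not\sim x_2$ they may share colours, so for $\Delta \geq 6$ (where $\epsilon \leq 1/2$, giving intersection $\geq 3/2$) I set $\kappa[x_1] = \kappa[x_2] = A$ with $|A| = 1$; for $\Delta = 5$ (where $\epsilon \leq 1/3$, giving intersection $\geq 2/3$) I let $x_1, x_2$ share weight $2/3$ and each take $1/3$ of private colour, spending total weight $|A'| = 4/3$ on $\{x_1, x_2\}$.

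Next I would extend to $X$ via Lemma \ref{lem:hall}. For each $v \in X$, $\alpha(v) = [0, \Delta-\epsilon) \setminus (A' \cup \kappa[u_v])$, with the convention $\kappa[u_v] = \emptyset$ when $u_v$ does not exist, so by distributivity
$$\bigcup_{v \in X'} \alpha(v) = [0, \Delta - \epsilon) \setminus \Bigl(A' \cup \bigcap_{v \in X'}\kappa[u_v]\Bigr).$$
If some $v \in X'$ has no $u_v$, the intersection is empty and Hall's inequality reduces to $\Delta - \epsilon - |A'| \geq |X'|$, which holds for $|X'| \leq \omega - 1$ in both regimes. If all $v \in X'$ have $u_v$, the intersection has weight at most $1$, giving $|\bigcup \alpha(v)| \geq \Delta - \epsilon - |A'| - 1$, which handles $|X'| \leq \omega - 2$ routinely. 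The only Hall instance not resolved by these bounds is $X' = X$ with every $v \in X$ possessing an outside neighbour.

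The main obstacle is therefore this terminal case, and this is precisely where Lemma \ref{lem:3cliques} enters: if all the $u_v$'s coincided with a single vertex $u$, then $X \cup \{u\}$ would be a third $\omega$-clique meeting both $C$ and $C'$ in the $(\omega-1)$-set $X$, a contradiction. Hence at least two distinct outside neighbours appear; if any two such distinct $u_v$'s are adjacent in $G$, their colour classes in $\kappa$ are disjoint, forcing $\bigcap_{v \in X} \kappa[u_v] = \emptyset$ and closing the case. What remains is the subtle subcase in which the distinct outside neighbours form a pairwise non-adjacent set whose $\kappa$-colour classes may coincide on a region disjoint from $A'$. I expect to resolve this by choosing $A$ in the first phase to absorb $\bigcap_v \kappa[u_v]$—feasible because that intersection has weight at most $1$—or, failing that, by enlarging the uncoloured set $U$ to include one offending outside neighbour $u$ and re-applying the Hall extension. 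A careful case analysis of the adjacencies between these outside neighbours and $\{x_1, x_2\}$ is where the real work of the proof lies.
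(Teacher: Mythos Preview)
Your setup and the Hall reduction are essentially the paper's, and you have correctly isolated the one genuine obstacle: the Hall instance $X'=X$ when every $v\in X$ has an outside neighbour $u_v$ and the $u_v$'s are pairwise non-adjacent. However, neither of your two proposed fixes closes this case, and the paper's actual device is different from what you anticipate.

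\textbf{Why ``absorb $I$ into $A$'' fails.} Write $I=\bigcap_v\kappa[u_v]$. You must choose $A\subseteq\alpha(x_1)\cap\alpha(x_2)$, and nothing forces $I$ to meet that set. Concretely, for $\Delta=6$, $\epsilon=\tfrac12$: let the two outside neighbours of $x_1$ carry colour $[0,1)$ and $[1,2)$, and those of $x_2$ carry $[2,3)$ and $[3,4)$, so $\alpha(x_1)\cap\alpha(x_2)=[4,5.5)$. If every $u_v$ happens to carry colour $[0,1)$ (possible since the $u_v$'s are pairwise non-adjacent), then $I=[0,1)$ is disjoint from any admissible $A$, and the Hall bound $|\bigcup_v\alpha(v)|\geq \omega-1$ fails by exactly $\epsilon$.

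\textbf{Why ``enlarge $U$ by one $u$'' is only half the story.} If you uncolour some $u=u_{v_0}$ and try to recolour it after $x_1,x_2$, you need $|\alpha(u)|\geq 1$; but if $u$ has only \emph{one} neighbour in $X$ (namely $v_0$), then all $\Delta-1$ of its other neighbours may already be coloured, leaving $|\alpha(u)|\geq 1-\epsilon<1$. This trick therefore works precisely when some $u$ has at least two neighbours in $X$, i.e.\ when two of the $u_v$'s coincide. That is the paper's Case~1. The remaining situation---all $u_v$ distinct---is not handled by enlarging $U$.

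\textbf{The missing idea.} When the $u_v$'s are all distinct, the paper does not analyse adjacencies to $\{x_1,x_2\}$ at all. Instead it \emph{adds an edge} $u_iu_j$ (for a well-chosen pair) to the reduced graph $G-(C\cup C')$ before invoking minimality. This forces $\kappa[u_i]\cap\kappa[u_j]=\emptyset$ and hence $I=\emptyset$, after which your Hall computation goes through. The nontrivial point is that one can pick $i\neq j$ so that the new edge creates no $\Delta$-clique: if every pair among $u_1,u_2,u_3$ had an $(\omega-1)$-clique in its common neighbourhood, a degree count would give three pairwise-intersecting $\omega$-cliques, contradicting Lemma~\ref{lem:3cliques}. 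This edge-addition (and the use of Lemma~\ref{lem:3cliques} to justify it) is the ingredient your outline is missing.
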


\begin{proof}
Suppose $C$ and $C'$ are two $\omega$-cliques intersecting in $\omega-1$ vertices.
Let $v_1,\ldots,v_{\omega-1}$ be the vertices in $C\cap C'$, let $x$ be the vertex in $C\setminus C'$, and let $y$ be the vertex in $C'\setminus C$, noting that $x$ and $y$ are nonadjacent.  For $1\leq i \leq \omega-1$, if $v_i$ has a neighbour outside $C\cup C'$ call it $u_i$.  

\begin{claim}
There exists a fractional $(\Delta-\epsilon)$-colouring $\kappa$ of $G\setminus (C\cap C')$ satisfying the following:

\begin{enumerate}\label{clm:2cliques}
\item[(1)] If $\Delta = 5$, then $|\kappa[\{x,y\}]| \leq 1+\epsilon$.
\item[(2)] If $\Delta \geq 6$, then $|\kappa[\{x,y\}]| = 1$.
\item[(3)] $|\bigcap_{i\leq \omega-1} \kappa[u_i]| \leq \epsilon$.
\end{enumerate}
\end{claim}

We first show how the claim implies the lemma. For each $v_i\in C\cap C'$, $|\alpha(v_i)| \geq \Delta-\epsilon - |\kappa[\{x,y,u_i\}]| \geq \Delta-\epsilon-1-|\kappa[\{x,y\}]| \geq \omega-2$. Thus to apply Lemma \ref{lem:hall} and extend $\kappa$ to $G$ it is enough to show that $|\bigcup_{i\leq \omega-1}\alpha(v_i)| \geq \omega -1$. Indeed, the set of colours available to at least some of the vertices in $C\cap C'$ are those which are not forbidden to all of them: If $\Delta \geq 6$, then $$\left|\bigcup_{i\leq \omega-1}\alpha(v_i)\right| \geq \Delta-\epsilon - \left|\kappa[\{x,y\}]\right| - \left|\bigcap_{i\leq \omega-1} \kappa[u_i]\right| \geq \omega-2\epsilon \geq \omega-1 $$ and if $\Delta = 5$, $$\left|\bigcup_{i\leq \omega-1}\alpha(v_i)\right| \geq \omega-3\epsilon \geq \omega-1. $$

Lemma \ref{lem:hall} then guarantees a fractional $(\Delta-\epsilon)$-colouring of $G$, a contradiction. 

\begin{proof}[Proof of Claim \ref{clm:2cliques}]

There are two cases.  Note that by Lemma \ref{lem:3cliques}, if $u_i$ exists for each $i$ then $|\{u_i: 1 \leq i \leq \omega-1\}|\geq 2$.

\vspace{.5em}\noindent{\bf Case 1}: $2\leq |\{u_i: 1 \leq i \leq \omega-1\}| < \omega-1$ and $u_i$ exists for each $i$.

\noindent Without loss of generality suppose that $u_1=u_2$ and consider $G'= G\setminus(C\cup C'\cup \{u_1\})$. Again, since $G'$ is a proper induced subgraph of $G$, there exists a fractional $(\Delta-\epsilon)$-colouring $\kappa$ of $G'$. We extend $\kappa$ to a fractional colouring of $G\setminus(C\cap C')$, first colouring $x$ and $y$, then $u_1$.

Each of $x$ and $y$ has at most two neighbours in $G'$ so we have $|\alpha(x)|, |\alpha(y)| \geq \Delta-\epsilon-2$. Since $|\alpha(x)\cup\alpha(y)| \leq \Delta-\epsilon$ it follows that $|\alpha(x) \cap \alpha(y)| \geq \Delta-\epsilon-4 \geq 1$ when $\Delta \geq 6$, and $|\alpha(x) \cap \alpha(y)| \geq 1-\epsilon$ when $\Delta=5$. We extend $\kappa$ in the obvious way so that if $\Delta\geq 6$ then $\kappa[x]=\kappa[y]$, and if $\Delta=5$ then $|\kappa[x]\cap \kappa[y]| \geq 1-\epsilon$, satisfying (1) and (2). It remains to colour $u_1$. Note that $u_1$ has degree at most $\omega-1$ in $G\setminus (C\cap C')$ so $|\alpha(u_1)| \geq 2-\epsilon$. Because $|\bigcap_{3\leq i\leq \omega-1} \kappa[u_i]| \leq 1$, we can choose $\kappa[u_1]$ from $\alpha(u_1)$ in such a way that $|\kappa[u_1]\cap \bigcap_{3\leq i\leq \omega-1} \kappa[u_i]| \leq \epsilon$, satisfying (3).

\vspace{.5em}\noindent{\bf Case 2:} $|\{u_i: 1 \leq i \leq \omega-1\}| = \omega-1$ or $u_i$ does not exist for some $i$.

\noindent If there exists an edge $u_iu_j$ in $G$ for some $i \neq j$, then let $G' = G\setminus (C\cup C')$. Otherwise choose $i\neq j$ such that adding the edge $u_iu_j$ to $G\setminus (C\cup C')$ yields a graph with $\omega <\Delta$ and let $G'= G\setminus (C\cup C') \cup u_iu_j$. To see that such $i$ and $j$ exist, consider $u_1,u_2$ and $u_3$ and suppose that each pair of these has an $(\omega-1)$-clique in the common neighbourhood. Because $\Delta=\omega+1$ there must be a vertex contained in each of these three cliques, but Lemma \ref{lem:3cliques} forbids the existence of three pairwise intersecting $\omega$-cliques.

By the minimality of $G$, there exists a fractional $(\Delta-\epsilon)$-colouring $\kappa$ of $G'$. We need to extend $\kappa$ to $x$ and $y$. Because each of $x$ and $y$ has at most two neighbours in $G'$ we have $|\alpha(x)|, |\alpha(y)| \geq \Delta-\epsilon-2$. It follows that $|\alpha(x)\cap \alpha(y)| \geq 1$ if $\Delta\geq 6$ and $|\alpha(x)\cap \alpha(y)| \geq 1-\epsilon$ if $\Delta=5$ so we can extend $\kappa$ in the obvious way to satisfy (1) and (2). Requirement (3) is guaranteed by the existence of the edge $u_iu_j$.  This proves the claim.
\end{proof}
As we have shown, the claim implies the lemma.
\end{proof}

\begin{lemma}\label{lem:2cliques2}
$G$ does not contain two $\omega$-cliques intersecting in $\omega-2$ vertices. 
\end{lemma}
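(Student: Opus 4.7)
Let $X := C \cap C'$, so $|X| = \omega - 2$, and write $\{x_1,x_2\} := C \setminus C'$ and $\{y_1,y_2\} := C' \setminus C$. Since $G$ is $K_\Delta$-free, $x_i \not\sim y_j$ for all $i,j$, while $x_1 \sim x_2$ and $y_1 \sim y_2$. Because each $v \in X$ already has $\omega - 3$ neighbours in $X \setminus v$ together with all four of $x_1, x_2, y_1, y_2$, and $\Delta = \omega + 1$, we must have $N(v) = (X \setminus v) \cup \{x_1, x_2, y_1, y_2\}$; in particular $v$ has no neighbour outside $C \cup C'$.

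\textbf{Reduction via Hall.} The plan is to construct a fractional $(\Delta - \epsilon)$-colouring $\kappa$ of $G \setminus X$ with $|\kappa[\{x_1, x_2, y_1, y_2\}]| \leq 3 - \epsilon$. Given such a $\kappa$, every $v \in X$ sees the same colored neighbours outside $X$, so $\alpha(v) = [0, \Delta - \epsilon) \setminus \kappa[\{x_1, x_2, y_1, y_2\}]$ has measure at least $(\Delta - \epsilon) - (3 - \epsilon) = \omega - 2 = |X|$, and Lemma~\ref{lem:hall} extends $\kappa$ to all of $G$, contradicting minimality.

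\textbf{Constructing $\kappa$ for $\Delta \geq 6$.} Set $G' := G \setminus (X \cup \{x_1, y_1\})$ and let $G''$ be obtained from $G'$ by identifying the non-adjacent vertices $x_2$ and $y_2$ into a single vertex $w$. Then $w$ has degree at most $4 \leq \Delta$, any clique through $w$ has size at most $1 + 4 = 5 \leq \omega$ (using $\omega \geq 5$), and all other cliques and degrees are inherited from $G$; so $G''$ is $K_\Delta$-free with maximum degree $\leq \Delta$, strictly smaller than $G$, and has no $C_5 \boxtimes K_2$ component. By minimality, $G''$ admits a fractional $(\Delta - \epsilon)$-colouring $\kappa''$ which lifts to $G'$ with $\kappa[x_2] = \kappa[y_2] = \kappa''[w]$. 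Because $\kappa[x_2] = \kappa[y_2]$, the colored neighbours of $x_1$ and $y_1$ together draw from only $\{x_2\} \cup (N(x_1) \setminus C) \cup (N(y_1) \setminus C')$, a set occupying at most $1 + 2 + 2 = 5$ colour units, giving $|\alpha(x_1) \cap \alpha(y_1)| \geq (\Delta - \epsilon) - 5 \geq 1 - \epsilon \geq \epsilon$. Choose a shared subset $S \subseteq \alpha(x_1) \cap \alpha(y_1)$ of measure $\min(1, 1 - \epsilon)$ and complete to weight-1 sets $\kappa[x_1], \kappa[y_1]$ both containing $S$; then $|\kappa[x_1] \cap \kappa[y_1]| \geq \epsilon$, and combining with $\kappa[x_1] \cap \kappa[x_2] = \kappa[y_1] \cap \kappa[y_2] = \emptyset$ together with $\kappa[x_2] = \kappa[y_2]$ yields $|\kappa[\{x_1, x_2, y_1, y_2\}]| = 3 - |\kappa[x_1] \cap \kappa[y_1]| \leq 3 - \epsilon$.

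\textbf{Main obstacle.} The case $\Delta = 5$ is where I expect the real difficulty: the identification of $x_2, y_2$ can create a $K_5$ in $G''$ (whenever $N_{G'}(x_2) \cup N_{G'}(y_2)$ happens to be a $4$-clique of $G$), and the intersection bound $(\Delta - \epsilon) - 5$ becomes vacuous. In this rigid sub-case $G$ contains three pairwise-disjoint $\omega$-cliques and can be analyzed directly; in the remaining cases, since $|X| = 2$ and the target span $3 - \epsilon \geq 8/3$ is less stringent, I would split into cases on the external neighbourhoods $N(x_i) \setminus C$ and $N(y_j) \setminus C'$ (in the spirit of Cases~1 and~2 of the proof of Lemma~\ref{lem:2cliques}), extracting the needed overlap between $\kappa[\{x_1, x_2\}]$ and $\kappa[\{y_1, y_2\}]$ from a fractional matching argument on the four availability sets arising after colouring $G \setminus (C \cup C')$. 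I anticipate this small-case analysis to be the main technical hurdle.
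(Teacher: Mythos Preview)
Your justification that $x_i \not\sim y_j$ ``since $G$ is $K_\Delta$-free'' is wrong: a single edge $x_1y_1$ would only make $X\cup\{x_1,y_1\}$ an $\omega$-clique, not a $\Delta$-clique. The correct reason is Lemma~\ref{lem:2cliques}: if $x_1\sim y_1$ then $C$ and $(C\setminus\{x_2\})\cup\{y_1\}$ are $\omega$-cliques meeting in $\omega-1$ vertices. This is easily fixed, and everything downstream of it is fine.

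For $\Delta\geq 6$ your argument is correct and in fact cleaner than the paper's. The paper identifies \emph{both} pairs $x_1,y_1\to z$ and $x_2,y_2\to z'$, which for $\Delta=6$ can create a $K_6$ in the reduced graph (since $z,z'$ have degree up to $5$), forcing a separate structural analysis of that configuration. By identifying only one pair you cap the degree of the merged vertex at $4$, so no clique larger than $5\leq\omega$ can appear, and the minimality hypothesis applies immediately. The price you pay is a weaker overlap ($|\kappa[\{x_1,x_2,y_1,y_2\}]|\leq 2+\epsilon$ rather than $2$), but as you observe, $3-\epsilon$ is already enough for the Hall step since every $v\in X$ has exactly $\{x_1,x_2,y_1,y_2\}$ as its coloured neighbourhood. (Incidentally, your shared set $S$ has measure $1-\epsilon$, so you actually get $|\kappa[x_1]\cap\kappa[y_1]|\geq 1-\epsilon$, not merely $\geq\epsilon$.)

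For $\Delta=5$ you do not have a proof. Your single-identification trick fails here precisely because a clique of size $5$ through $w$ \emph{is} a $K_\Delta$, and you also have to worry about creating a $C_5\boxtimes K_2$ component. The paper's Case~2 is the longest part of the whole argument: it first rules out a common external neighbour of $\{x_1,x_2\}$ and $\{y_1,y_2\}$ by a direct colouring, then excludes small edge cuts, and only then handles the two obstructions ($K_5$ or $C_5\boxtimes K_2$ appearing in the reduced graph) by drilling further into the local structure and producing explicit extensions. Your sketch (``split into cases on the external neighbourhoods \ldots\ fractional matching argument on the four availability sets'') is pointing in the right direction but omits exactly the content that makes this case hard; in particular, when the identification creates a $K_5$ you are forced into a situation with three pairwise disjoint $4$-cliques and several possible adjacency patterns among their connecting vertices, each requiring its own colouring scheme.
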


\begin{proof}
Suppose $C$ and $C'$ are two $\omega$-cliques intersecting in $\omega-2$ vertices. Let $x,x'$ be the vertices in $C\setminus C'$ and let $y,y'$ be those in $C'\setminus C$.  Suppose that $x$ is adjacent to $y$.  Then $C$ and $(C\setminus \{x'\})\cup \{y\})$ are two $\omega$-cliques intersecting in $\omega-1$ vertices, contradicting Lemma \ref{lem:2cliques}.  By symmetry we may therefore assume there is no edge between $\{x,x'\}$ and $\{y,y'\}$.  The case $\Delta=5$ gives us the most difficulty by far, so we deal with it separately.

\vspace{.5em}\noindent{\bf Case 1:} $\Delta\geq 6$.

\noindent We construct the graph $G'$ from $G$ by identifying $x,y$ and $x',y'$ into two new vertices $z$ and $z'$, respectively, and deleting $C\cap C'$. Clearly $\Delta(G') \leq \Delta(G)$.  If $G'$ contains a $\Delta$-clique, then since $z$ and $z'$ have degree at most 5, we have $\Delta = 6$, and furthermore the $\Delta$-clique must contain both $z$ and $z'$.  Thus there is a set of four vertices $C''$ forming a $6$-clique with $z$ and $z'$.  This means there must be eight edges between $\{x,x',y,y'\}$ and $C''$ in $G$.

If any vertex in $C''$ has a neighbour outside of $\{x,x',y,y'\}$ then $C''$ is a clique cutset in $G$, contradicting the fact that every proper induced subgraph of $G$ is fractionally $(\Delta-\epsilon)$-colourable. Thus $V(G)=V(C)\cup V(C')\cup V(C'')$. Further, $(N(x)\cup N(y))\cap V(C'') = V(C'')$ and $(N(x')\cup N(y'))\cap V(C'') = V(C'')$. If $x$ and $x'$ have the same two neighbours in $C''$ then $G$ is the graph $(C_5\boxtimes K_3)-4v$ shown in Figure \ref{fig:upperbounds}, contradicting the assumption that $\chi_f(G)>\Delta-\frac 12$. Thus $x$ and $y'$ have a common neighbour in $C''$. We may safely switch the roles of $y$ and $y'$ in this case to ensure that $\omega(G')\leq \omega(G)$.

It now follows from the minimality of $G$ that there exists a fractional $(\Delta-\epsilon)$-colouring $\kappa$ of $G'$. By unidentifying $x,y$ and $x',y'$, we may think of $\kappa$ as a fractional colouring of $G\setminus(C\cap C')$ where $\kappa[x]=\kappa[y]$ and $\kappa[x']=\kappa[y']$.  We now  extend $\kappa$ to a $(\Delta-\epsilon)$-colouring of $G$. We have $\{v\in V(G) : |\kappa(v)|<1 \} =V(C\cap C')$. Further, for each $v\in V(C\cap C')$, $|\alpha(v)| \geq \Delta-\epsilon-2 \geq \omega-2$.  Thus applying Lemma \ref{lem:hall} gives the extension of $\kappa$ to $G$, a contradiction.

\vspace{.5em}\noindent{\bf Case 2:} $\Delta=5$.

\noindent We construct $G'$ as in the previous case.  If $G'$ has a fractional $(\Delta-\epsilon)$-colouring, we reach a contradiction as before.  Otherwise, it must be the case that $G'$ contains a $\Delta$-clique or  $C_5\boxtimes K_2$.  To deal with these cases we prove four claims.

{\bf Our first claim} is that no vertex in $G\setminus(C\cup C')$ has a neighbour in both $\{x,x'\}$ and $\{y,y'\}$.  To prove this, assume for a contradiction that $x$ and $y$ have a common neighbour $w\notin C\cup  C'$.  Let $G'' = G\setminus(C\cup C')$. By the minimality of $G$ there exists a fractional $(\Delta-\epsilon)$-colouring $\kappa$ of $G''$ that we now extend to a fractional colouring of $G$. We do so in two steps, first colouring $\{x,y,x',y'\}$. 

Since $x$ and $y$ have a common neighbour plus at most one other coloured neighbour each, we have $|\alpha(x)\cap\alpha(y)| \geq \Delta-\epsilon-3$. On the other hand, each of $x'$ and $y'$ has at most two coloured neighbours, so $|\kappa[N(x')\cup N(y')]| \leq 4$.  We choose $\kappa[x]=\kappa[y]$ from $\alpha(x)\cap\alpha(y)$ maximizing its intersection with $\kappa[N(x')\cup N(y')]$, so that after colouring $x$ and $y$ we still have $|\kappa[N(x')\cup N(y')]| \leq 4$.   This means that $|\alpha(x')\cap\alpha(y')| \geq 1-\epsilon$ so we may choose colours for $x'$ and $y'$ so that $|\kappa[x']\cap\kappa[y']| \geq 1-\epsilon$. This ensures that $|\kappa[\{x,y,x',y'\}]|\leq 2+\epsilon$.

It remains to extend the colouring to the vertices in $C\cap C'$.  For each vertex $v \in V(C\cap C')$, $|\alpha(v)| \geq \Delta-\epsilon-(2+\epsilon) \geq \omega-2$. Applying Lemma \ref{lem:hall}, we find a fractional $(\Delta-\epsilon)$-colouring of $G$, a contradiction.  This proves the first claim, so we may henceforth assume no vertex in $G\setminus(C\cup C')$ has a neighbour in both $\{x,x'\}$ and $\{y,y'\}$.

{\bf Our second claim} is that $G$ does not contain an edge cut of size at most two.  For if it does, we can take a fractional $(\Delta-\epsilon)$-colouring of either side of this cut.  The edges of the cut have colour weight at most four on their endpoints, and since $\Delta-\epsilon > 2\cdot 2$, we can safely merge the $(\Delta-\epsilon)$-colouring of either side of the cut into a fractional $(\Delta-\epsilon)$-colouring of $G$, a contradiction.  This proves the second claim.

{\bf Our third claim} is that $G'$ does not contain a $\Delta$-clique.  Suppose it does; we now investigate the structure of $G$.  In $G\setminus(C\cup C')$ there is an $\omega-1$ clique $C''$, each vertex of which is complete (in $G$) to either $\{x,x'\}$ or $\{y,y'\}$, since no vertex has neighbours in both $\{x,x'\}$ and $\{y,y'\}$ (by the first claim).  Since $|C''|=3$, we may assume that $x$ and $x'$ have two common neighbours $w_1$ and $w_2$ in $C''$, and $y$ and $y'$ have a common neighbour $w_3$ in $C''\setminus\{w_1,w_2\}$.  Call the neighbours of $y$ and $y'$ in $G\setminus(C'\cup C'')$ $v$ and $v'$ respectively, if these vertices exist. We assume $v$ and $v'$ exist, as adding them as pendant vertices does not affect the proof adversely.  Let $G''$ be the graph obtained from $G\setminus(C\cup C'\cup C'')$ by adding the edge $vv'$ if possible ($v$ and $v'$ may not be two distinct vertices, or may already be adjacent).  This construction does not create a $\Delta$-clique in $G''$ since no pair of cliques in $G$ intersects in $\omega-1$ vertices by Lemma \ref{lem:2cliques}.  Bearing in mind that $\Delta=5$, $G''$ cannot contain a copy of $C_5\boxtimes K_2$, since the existence of $(C_5\boxtimes K_2)-e$ in $G$ would violate the second claim.  Therefore the minimality of $G$ guarantees that $G''$ has a fractional $(\Delta-\epsilon)$-colouring $\kappa$.  We extend in two cases based on whether or not $|\{v,v'\}|=2$.

\begin{figure}
\begin{center}
\begin{tabular}{cc}
\includegraphics[scale=.7]{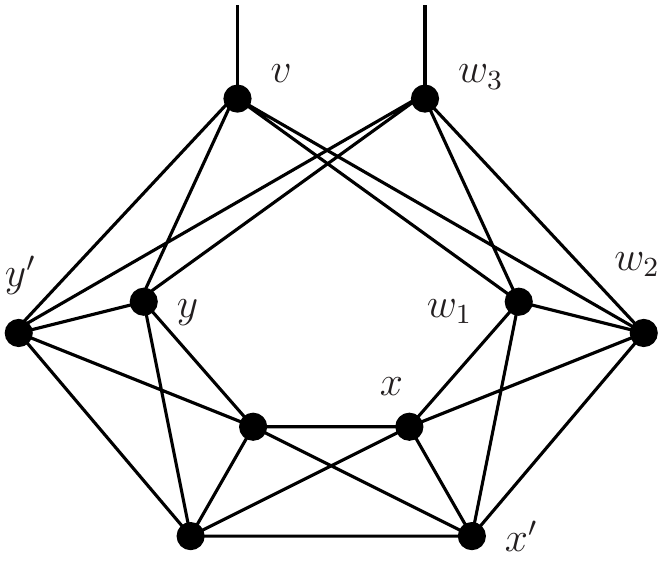}
& \hspace{.5in}
\includegraphics[scale=.7]{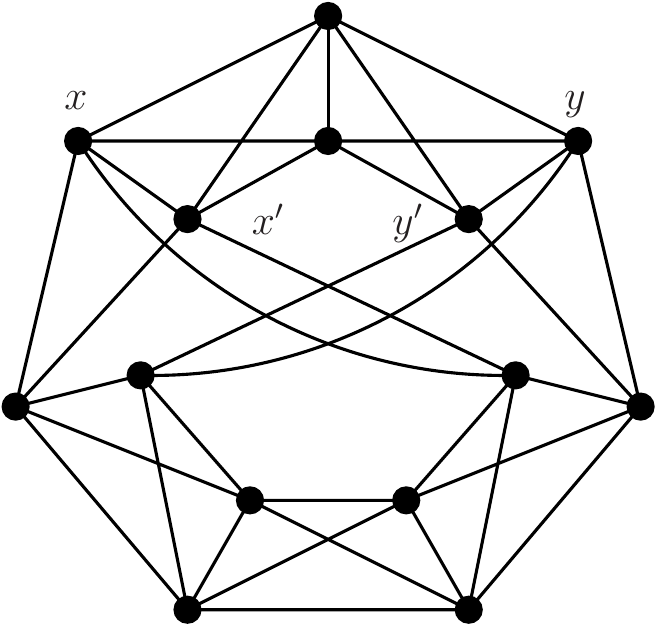}
\end{tabular}
\caption{Left: If $G$ contains $(C_5\boxtimes K_2)-e$, we can easily reduce. Right: Reducing on the six top vertices renders $G'$ isomorphic to $C_5\boxtimes K_2$.}\label{fig:separable}
\end{center}
\end{figure}

Note that if $|\{v,v'\}| = 1$, we may assume one of $w_1,w_2$ is nonadjacent to $v$, say $w_1$ is nonadjacent to $v$, otherwise $G$ contains a copy of $(C_5\boxtimes K_2)-e$, violating the second claim (see Figure \ref{fig:separable} (left)).  Now assume $|\{v,v'\}|\leq 1$.  We recolour $v$ (if it exists) such that $|\kappa[v] \cap (\alpha(w_1)\cup\alpha(w_2))| \geq 1-\epsilon$.  This is possible because $|\alpha(v)|\geq 2-\epsilon$ and $|\alpha(w_1)|\geq 4-\epsilon$, so the intersection of these two sets is at least $(6-2\epsilon)-(5-\epsilon) = 1-\epsilon$.  Now we may easily extend $\kappa$ by colouring $w_1$ such that $|\kappa[v] \cap \kappa[w_1]| \geq 1-\epsilon$.  Next we extend $\kappa$ by colouring $w_2$ and $w_3$, which we can do greedily since each of these vertices has at most three neighbours in $G\setminus(C\cup C')$.  Now it remains to colour $C\cup C'$.  Since $|\kappa[\{v,w_1,w_2,w_3\}]| \leq 4-( 1-\epsilon)$, there is weight $\frac 43$ of colour we can use on both $\{x,x'\}$ and $\{y,y'\}$.  Since each vertex in $\{x,x',y,y'\}$ has only three neighbours in $G\setminus (C\cap C')$, we can extend $\kappa$ to a colouring of $G\setminus (C\cap C')$ such that $|\kappa[\{x,x'\}] \cap \kappa[\{y,y'\}]| \geq \frac 43$.  After doing this we can easily extend $\kappa$ to a fractional $(5-\epsilon)$-colouring of $G$ by applying Lemma \ref{lem:hall}, a contradiction.

Now we handle the case $|\{v,v'\}|=2$, starting with a fractional $(5-\epsilon)$-colouring of $G''$ which we take as a partial coloring of $G$.  We begin by extending $\kappa$ by colouring $w_3$ such that $\kappa[w_3]\subset \kappa[\{v,v'\}]$, which is possible because $\kappa[v]$ and $\kappa[v']$ are disjoint (and $w_3$ is adjacent to at most one of $v$ and $v'$, since it is adjacent to $w_1$, $w_2$, $y$ and $y'$).  We now extend $\kappa$ by colouring $w_1$ and $w_2$ in any way, which we can do greedily.  At this point, we have $|\alpha(y)|\geq \frac 83$, $|\alpha(y')|\geq \frac 83$, and $|\alpha(y)\cup \alpha(y')|\geq \frac {11}3$.  Therefore $|\alpha(y)\setminus \kappa[\{w_1,w_2\}]|\geq \frac 23$, $|\alpha(y')\setminus \kappa[\{w_1,w_2\}]|\geq \frac 23$, and $|(\alpha(y)\cup \alpha(y'))\setminus \kappa[\{w_1,w_2\}]|\geq \frac 53$.  We may therefore give $y$ weight $\frac 23$ of colour not in $\kappa[\{w_1,w_2\}]$, and give $y'$ weight $\frac 23$ of colour not in $\kappa[\{w_1,w_2\}]$, then finish colouring $y$ and $y'$ greedily, since each has at most three neighbours in $G\setminus C$.  It follows that $|\kappa[\{w_1,w_2\}] \cap \kappa[\{y,y'\}]|\leq \frac 23$, so we can extend $\kappa$ by colouring $\{x,x'\}$ such that $|\kappa[\{w_1,w_2\}] \cap \kappa[\{x,x'\}]|\geq \frac 43$.  We can now extend $\kappa$ to a fractional $(\Delta-\epsilon)$-colouring of $G$ by applying Lemma \ref{lem:hall} as in the previous case.  This contradiction proves the third claim.

{\bf Our fourth claim}, which is sufficient to complete the proof, is that $G'$ does not contain $C_5\boxtimes K_2$.  If it does, there must be four vertices $w$, $w'$, $v$, and $v'$ such that in $G'$, $\{w,w',z,z'\}$ and $\{v,v',z,z'\}$ are cliques.  Each of $w$, $w'$, $v$, and $v'$ therefore has two neighbours in $\{x,x',y,y'\}$.  By the first claim, there are two cases, by symmetry: $w$ and $w'$ are adjacent to both $x$ and $x'$, or $w$ and $v$ are adjacent to both $x$ and $x'$.  In the first case, the component of $G$ containing $C$ is isomorphic to $C_7\boxtimes K_2$, a contradiction since $\chi_f(C_7\boxtimes K_2)=\frac{14}3$.  In the second case, the component of $G$ containing $G$ is isomorphic to the graph shown in Figure \ref{fig:separable} (right).  Observe that the outer seven vertices induce $C_7$, as do the inner seven vertices.  Therefore $\chi_f(G) \leq 2\chi_f(C_7) = 5-\tfrac 13$, a contradiction.  This completes the proof of the lemma.
\end{proof}

\subsection{Part (ii)}

Our approach to proving Lemma \ref{lem:mce2} involves reducing $G$ to a smaller graph $G'$.  Either $G'$ is fractionally $(\Delta-\epsilon)$-colourable by minimality, in which case we finish easily, or $G'$ contains a $K_\Delta$ or $C_5\boxtimes K_2$ (when $\Delta=5$), in which case we proceed on a case-by-case basis.

To simplify things, we first need to prove a couple of lemmas that exclude induced subgraphs of $G$.

\begin{figure}
\begin{center}
\includegraphics[scale=.7]{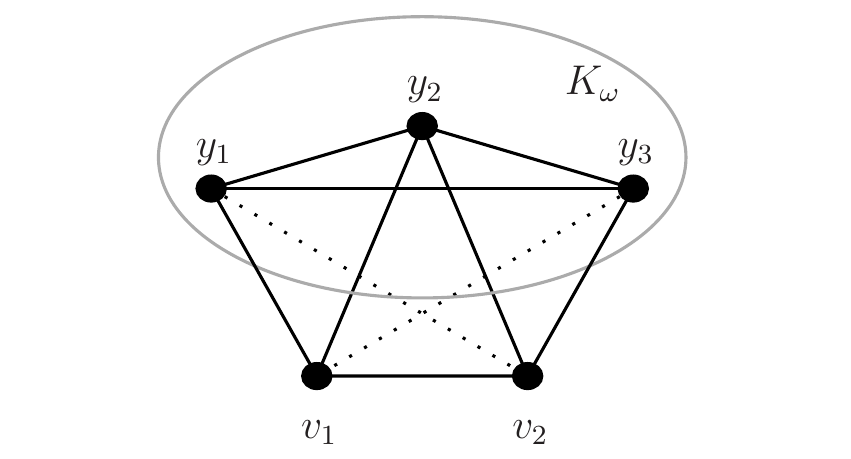}
\end{center}\caption{A bump.}\label{fig:bump}
\end{figure}

\begin{dfn}
Suppose we have a set $Y= \{y_1,y_2,y_3\}$ of vertices in a maximum clique $C$, and two adjacent vertices $v_1$ and $v_2$ such that $N(v_1)\cap Y = \{y_1,y_2\}$ and $N(v_2)\cap Y = \{y_2,y_3\}$.  Then we say that the set $X = C\cup \{v_1,v_2\}$ is a {\em bump} (see Figure \ref{fig:bump}).
\end{dfn}

\begin{lemma}\label{lem:bump}
$G$ does not contain a bump.
\end{lemma}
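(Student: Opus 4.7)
The plan is to suppose for contradiction that $G$ contains a bump $X = C \cup \{v_1, v_2\}$ and construct a fractional $(\Delta-\epsilon)$-colouring of $G$, contradicting the minimality of $G$.  I would first set $G' = G \setminus X$: this is a proper induced subgraph of $G$, contains no $K_\Delta$, and has no $C_5 \boxtimes K_2$ component when $\Delta = 5$ (any such component would already be $5$-regular, could not pick up an extra neighbour in $X$, and would therefore be a component of $G$ as well, contradicting our standing hypothesis).  By the minimality of $G$, $G'$ has a fractional $(\Delta-\epsilon)$-colouring $\kappa$, which I will extend to $X$ in two stages.

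In stage one, I would pick disjoint measure-$1$ colour sets $\kappa[v_1] \subseteq \alpha(v_1)$ and $\kappa[v_2] \subseteq \alpha(v_2)$ satisfying $\delta_1 + \delta_2 \geq 1 + \epsilon$, where $\delta_1 := |\kappa[v_1] \cap \alpha(y_3)|$ and $\delta_2 := |\kappa[v_2] \cap \alpha(y_1)|$.  Because $v_1 \not\sim y_3$ and $v_2 \not\sim y_1$ by the bump definition, and each $v_i$ has at most $\omega - 2$ neighbours in $G'$, the sets $A_i := \alpha(v_i) \cap \alpha(y_{4-i})$ each have measure at least $k - (\omega - 1) = 2 - \epsilon$.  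A short optimization under the disjointness constraint shows the maximum achievable $\delta_1 + \delta_2$ equals $\min(2, |A_1 \cup A_2|) \geq 2 - \epsilon$, which meets the requirement because $\epsilon \leq \tfrac12$ yields $2 - \epsilon \geq 1 + \epsilon$.

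In stage two, I would apply Lemma~\ref{lem:hall} to extend to the clique $C$.  After stage one, each $c \in C$ has $|\alpha^*(c)| \geq k - 2$.  The key identity is that the intersection of the forbidden sets $\kappa[\{v_1, u_1\}]$, $\kappa[\{v_1, v_2\}]$, and $\kappa[\{v_2, u_3\}]$ for $y_1, y_2, y_3$ respectively (with $u_i$ the external neighbour of $y_i$ in $G'$, interpreted as $\kappa[u_i] = \emptyset$ if no such neighbour exists) simplifies via straightforward expansion to $(\kappa[v_1] \cap \kappa[u_3]) \cup (\kappa[v_2] \cap \kappa[u_1])$, which has measure $(1 - \delta_1) + (1 - \delta_2) \leq 1 - \epsilon$.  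Thus $|\alpha^*(y_1) \cup \alpha^*(y_2) \cup \alpha^*(y_3)| \geq \omega$.  The Hall condition $|\bigcup_{c \in C'} \alpha^*(c)| \geq |C'|$ for $C' \subseteq C$ then follows by a short case split on $|C' \cap \{y_1, y_2, y_3\}|$: if this is at least two, analogous inclusion--exclusion on two or three of the forbidden sets yields a union of measure at least $|C'|$; if it is at most one, then $|C'| \leq \omega - 2 \leq k - 2$ and the pointwise bound $|\alpha^*(c)| \geq k - 2$ suffices.

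The technically delicate step will be the stage-one existence claim: producing disjoint $\kappa[v_1], \kappa[v_2]$ with total overlap $\delta_1 + \delta_2 \geq 1 + \epsilon$.  The sharp case occurs when $|A_1 \cup A_2|$ is close to its lower bound $2 - \epsilon$, which forces $A_1$ and $A_2$ to nearly coincide; the hypothesis $\epsilon \leq \tfrac12$ is used in a tight way to close this gap.  Once stage one is in hand, the stage-two computations reduce to routine inclusion--exclusion on the four measure-$1$ sets $\kappa[v_1], \kappa[v_2], \kappa[u_1], \kappa[u_3]$.
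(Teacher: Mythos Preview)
Your proof is correct and takes a genuinely different route from the paper's argument.

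The paper interleaves the colouring of $\{v_1,v_2\}$ with that of $Y$: it first sets $\kappa[v_1]=\kappa[y_3]$ (using $v_1\not\sim y_3$), then gives $v_2$ and $y_1$ weight $\tfrac12$ of common colour, then greedily extends to $C\setminus Y$, and finally finishes $y_1$, $v_2$, $y_2$ in that order.  Your approach instead colours $v_1$ and $v_2$ completely first, arranging that $\kappa[v_1]$ sits largely inside $\alpha(y_3)$ and $\kappa[v_2]$ inside $\alpha(y_1)$, and then extends to all of $C$ in one shot via Lemma~\ref{lem:hall}.  Both exploit the same non-edges $v_1y_3$ and $v_2y_1$, but you convert this saving into a Hall condition rather than a greedy ordering.

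A couple of points worth making explicit when you write this up.  First, your stage-one claim silently uses that $\delta_1+\delta_2\geq 1+\epsilon$ together with $\delta_i\leq 1$ forces $\delta_i\geq\epsilon$ for each $i$; this is needed for the pairwise Hall cases (e.g.\ $C'\cap Y=\{y_1,y_2\}$ gives $|\alpha^*(y_1)\cup\alpha^*(y_2)|=k-(2-\delta_2)$, which is only $\geq\omega-1$ when $\delta_2\geq\epsilon$).  Second, the pair $\{y_1,y_3\}$ is not perfectly ``analogous'' to the other two, since its intersection of forbidden sets picks up an extra $\kappa[u_1]\cap\kappa[u_3]$ term; however, the whole intersection is still contained in $\kappa[u_3]\cup(\kappa[u_1]\cap\kappa[v_2])$, of measure at most $2-\delta_2$, so the same bound works.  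Your approach has the virtue of being more modular (one clever choice, then one black-box appeal to Hall); the paper's greedy argument is slightly more concrete and avoids the case analysis over subsets $C'\subseteq C$.
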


\begin{proof}
Suppose to the contrary that $G$ contains a bump $X$.  To reach a contradiction we take a fractional $(\Delta-\epsilon)$-colouring $\kappa$ of $G' = G \setminus X$ and extend it to a $(\Delta-\epsilon)$-colouring of $G$ as follows.

First, extend $\kappa$ by colouring $v_1$ and $y_3$ with the same set of colours.  This is possible because $v_1$ has at most $\Delta-3$ neighbours in $G'$, and $y_3$ has at most one neighbour in $G'$, so $|\alpha(v_1) \cap \alpha(y_3)| \geq \Delta-\epsilon - (\Delta-3) -1 > 1$.

Next we extend $\kappa$ by giving $v_2$ and $y_1$ common colour of total weight $\frac 12$, and leaving them only partially coloured.  This is possible because at this point, $v_2$ has at most $\Delta-1$ coloured neighbours, and $y_1$ has at most 3 coloured neighbours, but both are adjacent to $v_1$ and $y_3$. Therefore $|\kappa[N(y_1)\cup N(v_2)]|\leq \Delta-2 + 1 = \Delta-1$, and so $|\alpha(y_1)\cap \alpha(v_2)|\geq 1-\epsilon\geq \tfrac 12$.

At this point observe that $|\kappa[Y]| = \frac 32\leq (\Delta-\epsilon)-2 -(\Delta-4)$, so we may now greedily extend $\kappa$ by colouring the $\Delta-4$ vertices in $C\setminus Y$, since each of these has at most two coloured neighbours in $G'$.  All that remains is to complete the colouring of $v_2$, $y_1$, and $y_2$.  First we finish colouring $y_1$; we can do this greedily because at this point $|\kappa[N(y_1)]| \leq \Delta-2$, since $y_2$ is uncoloured and $\kappa[v_1]=\kappa[y_3]$.  Next we greedily finish colouring $v_2$, which again we can do because at this point $|\kappa[N(v_2)]| \leq \Delta-2$, since $y_2$ is uncoloured and $\kappa[v_1]=\kappa[y_3]$.

Finally we must extend to $y_2$, which we can do greedily: since $\kappa[v_1]=\kappa[y_3]$ and $|\kappa[v_2]\cap \kappa[y_1]|\geq \frac 12$, $|\kappa[N(y_2)]| \leq \Delta-\frac 32$, so $|\alpha(y_2)|\geq \frac 32-\epsilon \geq 1$.  Thus $G$ is fractionally $(\Delta-\epsilon)$-colourable, a contradiction.
\end{proof}

We already know, thanks to Lemma \ref{lem:disjoint}, that $K_{\Delta}$ minus an edge cannot appear in $G$.  But given restrictions on $\Delta$, we can forbid other subgraphs arising as $K_{\Delta}$ minus a small number of edges.  We use variations of the approach for bumps: we extend a partial fractional colouring of the graph by leaving a set of vertices to the end, then finishing greedily, having already given their neighbourhoods lots of repeated colour.

\begin{figure}
\begin{center}
\includegraphics[scale=.45]{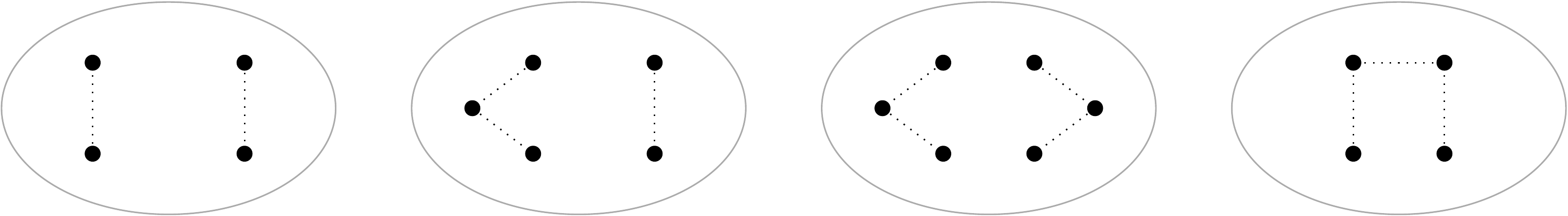}
\end{center}\caption{Configurations of edges missing from a $K_\Delta$ that are forbidden for, respectively, $\Delta\geq 5$ (Lemma \ref{lem:reduce1}), $\Delta\geq 6$ (Lemma \ref{lem:reduce2}), $\Delta \geq 7$ (Lemma \ref{lem:reduce3}), and $\Delta \geq 7$ (Lemma \ref{lem:reduce4}).}\label{fig:nearcliques}
\end{figure}

\begin{lemma}\label{lem:reduce1}
$G$ cannot contain $K_\Delta$ minus a matching of size two.
\end{lemma}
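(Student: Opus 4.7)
The plan is to mirror the reduction strategy used in Lemma \ref{lem:bump}: delete the vertex set $X$ of the $K_\Delta$-minus-matching, invoke minimality to obtain a fractional $(\Delta-\epsilon)$-colouring $\kappa$ of $G'=G\setminus X$, then extend $\kappa$ back across $X$ by exploiting the two missing edges to save colour weight. Label the missing edges $y_1y_2$ and $y_3y_4$, and let $u_1,\ldots,u_{\Delta-4}$ be the remaining vertices of $X$. Each $y_i$ has degree $\Delta-2$ in $X$ and so has at most two neighbours in $G'$; each $u_j$ has degree $\Delta-1$ in $X$ and so has at most one neighbour in $G'$.

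The first step is to extend $\kappa$ to $\{y_1,y_2,y_3,y_4\}$ by giving the non-adjacent pair $\{y_1,y_2\}$ as much common colour as possible, and likewise for $\{y_3,y_4\}$. Since $|\alpha(y_1)\cap\alpha(y_2)|\geq \Delta-\epsilon-4$, this shared weight is at least $1$ when $\Delta\geq 6$ (so I can take $\kappa[y_1]=\kappa[y_2]$ and get $|\kappa[\{y_1,y_2\}]|=1$), and at least $\tfrac 23$ when $\Delta=5$ (after which I fill the remaining $\tfrac 13$ weight on each $y_i$ individually, using the fact that $|\alpha(y_i)|\geq \tfrac 83$). Either way $|\kappa[\{y_1,y_2\}]|\leq 2-s$ where $s$ is the shared weight, and the same bound applies to $\{y_3,y_4\}$. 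Summing gives $|\kappa[\{y_1,y_2,y_3,y_4\}]|\leq 2$ when $\Delta\geq 6$ and $\leq \tfrac 83$ when $\Delta=5$.

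The second step is to greedily extend to $u_1,\ldots,u_{\Delta-4}$ one at a time. These are pairwise adjacent (the only non-edges inside $X$ are the matching edges), so when we reach $u_i$ its already-coloured neighbours are $\{y_1,y_2,y_3,y_4\}$, the earlier $u_1,\ldots,u_{i-1}$, and at most one vertex of $G'$. A routine count shows $|\alpha(u_i)|\geq 1$ throughout: for $\Delta\geq 6$ and $i\leq \Delta-4$ the available weight is at least $\Delta-\epsilon-(i+2)\geq 2-\epsilon$, and for $\Delta=5$ the single vertex $u_1$ sees only $\tfrac 83+1$ colour weight on its coloured neighbours so $|\alpha(u_1)|\geq 5-\tfrac 13-\tfrac 83-1=1$. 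Colouring each $u_i$ greedily with weight $1$ yields a fractional $(\Delta-\epsilon)$-colouring of $G$, contradicting minimality.

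The delicate point will be the $\Delta=5$ case: with $\epsilon$ as large as $\tfrac 13$, the shared-colour trick on a $y$-pair yields only $\tfrac 23$ instead of a full unit, so the total colour weight used on $\{y_1,\ldots,y_4\}$ rises from $2$ to $\tfrac 83$. This squeezes the subsequent greedy step, but since $\Delta-4=1$ here there is only one $u$ to colour and the inequality $|\alpha(u_1)|\geq 1$ just barely holds (with equality at $\epsilon=\tfrac 13$). For $\Delta\geq 6$ there is comfortable slack, so no further case analysis should be required.
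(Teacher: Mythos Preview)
Your argument for $\Delta\geq 6$ is morally right but contains a numerical slip: after you set $\kappa[y_1]=\kappa[y_2]$, the vertices $y_3,y_4$ are \emph{adjacent} to $y_1,y_2$, so the bound for the second pair becomes $|\alpha(y_3)\cap\alpha(y_4)|\geq (\Delta-\epsilon)-4-1=\Delta-\epsilon-5$, which at $\Delta=6$ is only $1-\epsilon\geq\tfrac12$, not $1$. Hence $|\kappa[\{y_1,y_2,y_3,y_4\}]|\leq \tfrac52$, not $2$. Fortunately the greedy step still survives: the last $u_i$ then has $|\alpha(u_{\Delta-4})|\geq(\Delta-\epsilon)-\tfrac52-(\Delta-5)-1=\tfrac32-\epsilon\geq 1$. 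So for $\Delta\geq 6$ your approach matches the paper's once this is corrected.

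For $\Delta=5$, however, the same oversight is fatal. After giving $y_1,y_2$ shared weight $\tfrac23$ and completing them (so $|\kappa[\{y_1,y_2\}]|\leq\tfrac43$), the second pair satisfies only
\[
|\alpha(y_3)\cap\alpha(y_4)|\ \geq\ (5-\epsilon)-\tfrac43-4\ =\ -\tfrac13-\epsilon,
\]
so there is no guaranteed overlap at all. In the worst case $|\kappa[\{y_1,\ldots,y_4\}]|=\tfrac43+2=\tfrac{10}{3}$, and the lone vertex $u_1$ (with its one outside neighbour) has $|\alpha(u_1)|\geq (5-\epsilon)-\tfrac{10}{3}-1=\tfrac23-\epsilon$, which can be as small as $\tfrac13$. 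Your claimed bound $|\kappa[\{y_1,\ldots,y_4\}]|\leq\tfrac83$ implicitly treated the two non-adjacent pairs as independent; they are not. The paper handles $\Delta=5$ by a separate, considerably more delicate argument: it tracks the overlap $f=|\kappa[N(y_3)]\cap\kappa[N(y_4)]|$ of the outside colours and, in three sub-cases, deliberately routes colour from $\kappa[u]$ (the outside neighbour of $u_1$) onto $\{y_1,\ldots,y_4\}$ so that $u_1$'s neighbourhood is forced to have enough repeated colour. You will need something of this flavour to close the $\Delta=5$ case.
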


\begin{proof}
Suppose to the contrary that $G$ contains a subgraph $X$ on $\Delta$ vertices, with vertices $v_1,v_2,v_3,v_4\in V(X)$ such that the non-edges of $G[X]$ are exactly $\{v_1v_2,v_3v_4\}$.  We first consider the case where $\Delta\geq 6$.  We begin with a fractional $(\Delta-\epsilon)$-colouring $\kappa$ of $G'=G\setminus X$ and extend it to a $(\Delta-\epsilon)$-colouring of $G$ as follows.

First, we extend $\kappa$ by colouring $v_1$ and $v_2$ with the same set of colours. Each of $v_1,v_2$ has at most two coloured neighbours in $G'$, and so $|\alpha(v_1)\cap\alpha(v_2)| \geq (\Delta-\epsilon)-4 \geq 1$. Thus it is possible to choose $\kappa[v_1]=\kappa[v_2]$.

Next, we extend $\kappa$ by colouring $v_3$ and $v_4$ in such a way that $\kappa[v_3]\cap\kappa[v_4]\geq \tfrac 12$. Each of $v_3,v_4$ has at most two coloured neighbours in $G'$ as well as neighbours $v_1$ and $v_2$ which have the same set of colours, and so $|\alpha(v_3)\cap \alpha(v_4)| \geq (\Delta-\epsilon)-5 \geq 1-\epsilon \geq \epsilon \geq \tfrac 12$. Thus we may choose $\kappa[v_3]$ and $\kappa[v_4]$ as claimed. We now have $|\kappa[v_1,v_2,v_3,v_4]|\leq \tfrac 52$.

It remains to colour the $\Delta-4$ vertices in $X \setminus \{v_1,v_2,v_3,v_4\}$.  We can do this easily because for each such vertex, the total weight of colours appearing twice in its neighbourhood is at least $1+\epsilon$.  Therefore as we colour greedily, the weight on the closed neighbourhood will never exceed $\Delta-\epsilon$.  Thus $G$ is fractionally $(\Delta-\epsilon)$-colourable, a contradiction.

Now we consider the case where $\Delta=5$.  Let $u$ denote the neighbour of $v_5$ outside $X$; if $u$ does not exist, we can add a pendant vertex to $v_5$ and call it $u$, for the sake of our argument.  We begin with a fractional $(\Delta-\epsilon)$-colouring $\kappa$ of $G'=G\setminus X$ and extend it to a $(\Delta-\epsilon)$-colouring of $G$ as follows, considering three subcases based on $f = |\kappa[N(v_3)]\cap \kappa[N(v_4)]|$.  

If $f < \frac 13$, we give $v_1$ and $v_2$ common colour of weight $\frac 23$, leaving them only partially coloured.  We then put weight $\frac 23 -|\kappa[u]\cap \kappa[v_1]|$ of colour from $\kappa[u]\setminus \kappa[v_1]$ onto $\{v_3, v_4\}$ (putting none on both), which is possible because there is at least $\frac 23$ colour in $\kappa[u]\cap (\alpha(v_3)\cup \alpha(v_4))$.  We now extend $\kappa$ to completely colour $v_1$ and $v_2$, which is possible because at this point $|\kappa[\{v_3,v_4\}]| \leq \frac 23$.  Next we extend $\kappa$ to completely colour $v_3$ and $v_4$, which is possible because at this point $v_5$ is uncoloured and $|\kappa[v_1]\cap \kappa[v_2]|\geq \frac 23$.  Finally we extend the colouring to include $v_5$, which is possible because $|\kappa[v_1]\cap \kappa[v_2]|\geq \frac 23$ and $|\kappa[u]\cap \kappa[\{v_1,v_2,v_3,v_4\}]|\geq \frac 23$.  So we may assume $f\geq \frac 13$.

If $f < \frac 23$, we give $v_1$ and $v_2$ common colour of weight $\frac 23$, leaving them only partially coloured.  We then give $v_3$ and $v_4$ common colour of weight $\frac 13$, so at this point the total colour appearing on $N(v_3)\cup N(v_4)$ is at most $4-\frac 13+\frac 23 \leq 5-\epsilon-\frac 13$ (because $f\geq \frac 13$).  We then give $\{v_3,v_4\}$ enough colour from $\kappa[u]$ so that $|\kappa[u]\cap \kappa[\{v_1,v_2,v_3,v_4\}]|\geq \frac 13$; this is possible because $f < \frac 23$, and so $(\alpha(v_3)\cup\alpha(v_4)\cup\kappa[{v_1,v_2}])\geq 4$.  We may extend to finish colouring $v_1,v_2,v_3,v_4$ greedily, since $v_5$ is uncoloured and both $\kappa[v_1]\cap \kappa[v_2]$ and $\kappa[v_3]\cap \kappa[v_4]$ have size at least $\frac 13$.  Finally we can extend the colouring to $v_5$, since the weight of colours appearing at least twice on $N(v_5)$ is at least $\frac 43\geq 1+\epsilon$.  So we may assume $f\geq \frac 23$.

This final case is easiest: we give $v_1$ and $v_2$ common colour of weight $\frac 23$, then give $v_3$ and $v_4$ common colour of weight $\frac 23$, then extend to completely colour $\{v_1,v_2,v_3,v_4\}$ greedily, then extend to $v_5$ greedily.  The details are as in the previous cases, but easier.  Thus $G$ is  fractionally $(\Delta-\epsilon)$-colourable, a contradiction.
\end{proof}

\begin{lemma}\label{lem:reduce2}
If $\Delta\geq 6$, $G$ cannot contain $K_\Delta$ minus the edges of vertex disjoint paths, one of length one and one of length two.
\end{lemma}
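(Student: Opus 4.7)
Suppose for contradiction that $G$ contains an induced copy $X$ of $K_\Delta$ whose only non-edges are $v_1v_2, v_3v_4, v_4v_5$ (so $v_1,\ldots,v_5$ are distinct), and write $X = \{v_1,\ldots,v_5\} \cup \{u_1,\ldots,u_{\Delta-5}\}$, where each $u_i$ is $G[X]$-adjacent to every other vertex of $X$. By minimality of $G$, the graph $G' = G \setminus X$ admits a fractional $(\Delta-\epsilon)$-colouring $\kappa$; following Lemmas \ref{lem:reduce1} and \ref{lem:bump}, the plan is to extend $\kappa$ to all of $G$ for a contradiction.

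The key feature is that $v_4$ is non-adjacent to both $v_3$ and $v_5$, so its colour set can share with each of them. I would extend $\kappa$ in four stages: (1) colour $v_1$ and $v_2$ with a common colour set of weight $1$; (2) choose $\kappa[v_4]$ of weight $1$ so that as much of it as possible lies in $\alpha(v_3) \cup \alpha(v_5)$, splitting that portion into $K_3 \subseteq \alpha(v_3)$ and $K_5 \subseteq \alpha(v_5)$ with $K_3 \cap K_5 = \emptyset$; (3) colour $v_3$ and $v_5$ each with weight $1$, containing $K_3$ and $K_5$ respectively and filling the rest greedily; (4) greedily colour the $u_i$. Stage (1) is routine since $|\alpha(v_1) \cap \alpha(v_2)| \geq \Delta-\epsilon-4 \geq \tfrac 32$, and stages (3) and (4) succeed provided the savings from stage (2), namely $|K_3|+|K_5|$, together with the weight $1$ saved in stage (1), sum to at least $\tfrac 32$; that is, we need $|K_3|+|K_5| \geq \tfrac 12$.

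The main obstacle is stage (2). At that point the forbidden-colour sets $F_3, F_4, F_5$ of $v_3, v_4, v_5$ have weights at most $3, 4, 3$ respectively and all contain the shared colour $c_{12}$, yielding the bound $|\alpha(v_4) \cap (\alpha(v_3) \cup \alpha(v_5))| \geq |\alpha(v_4)| - (|F_3 \cap F_5| - |F_3 \cap F_4 \cap F_5|) \geq \Delta-\epsilon-6$; this is at least $\tfrac 12$, and hence sufficient to split out $|K_3|+|K_5| = \tfrac 12$, for every $\Delta \geq 7$. For $\Delta = 6$ the bound degenerates to $-\tfrac 12$ and the split may be infeasible in the worst case, so this is the main challenge: I expect to need either a structural subcase analysis based on how the external neighbourhoods of $v_3, v_4, v_5$ interact (overlaps forcing additional weight into $F_3 \cap F_4$ and $F_4 \cap F_5$ and thereby improving the intersection bounds), or a modified extension which first absorbs the colour of the unique extra vertex $u_1$ into $\kappa[\mathrm{ext}(v_4)]$ (possible because $|\alpha(u_1) \cap \kappa[\mathrm{ext}(v_4)]| \geq 1$ at that stage) to reduce the effective weight of $F_4$, and then completes the colouring of $\{v_3, v_4, v_5\}$ via Lemma \ref{lem:hall}.
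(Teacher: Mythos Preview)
Your proposal is essentially correct for $\Delta\ge 7$, and you have correctly identified that the $\Delta=6$ case is where your argument breaks down. However, the fixes you suggest (a subcase analysis on the external neighbourhoods, or absorbing $u_1$'s colour into the external neighbourhood of $v_4$) are considerably more elaborate than what is needed, and you never carry them out. So as written there is a genuine gap at $\Delta=6$.

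The paper's proof is close to yours in spirit but avoids the difficulty by two simple changes. First, it does not try to split $\kappa[v_4]$ between $\alpha(v_3)$ and $\alpha(v_5)$; it pairs the path centre with just \emph{one} path end and asks for only $\tfrac12$ of common colour there. Second, and crucially, it \emph{reverses the order} of your stages (1) and (2): it gives the path pair $\tfrac12$ of common colour first, and only then gives the matching pair full common colour. In your labelling this means: first give $v_4$ and $v_3$ (say) weight $\tfrac12$ in common, which needs only $|\alpha(v_4)\cap\alpha(v_3)|\ge \Delta-\epsilon-5\ge\tfrac12$; then give $v_1$ and $v_2$ the same colour set of weight $1$, which now needs only $|\alpha(v_1)\cap\alpha(v_2)|\ge \Delta-\epsilon-\tfrac92\ge 1$. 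Both bounds hold already at $\Delta=6$. The total repeated colour is still $\tfrac32$, so the greedy completion on $v_5$ and the $u_i$ goes through exactly as in your stages (3)--(4).

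The reason the order matters is that the matching vertices $v_1,v_2$ have only two external neighbours each, so their forbidden sets are small; doing the tight path-pair step first means that when you come to $v_1,v_2$ the extra contribution from inside $X$ is only $\tfrac12$ rather than $1$. Your ordering forces the path-pair step to absorb a full weight $1$ from $\{v_1,v_2\}$, and that is exactly the $\tfrac12$ you end up short by.
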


\begin{proof}
Suppose to the contrary that $\Delta\geq 6$ and $G$ contains a subgraph $X$ on $\Delta$ vertices, with vertices $v_1,v_2,v_3,v_4,v_5\in V(X)$ such that the non-edges of $G[X]$ are exactly $\{v_1v_2, v_1v_3, v_4v_5\}$.  We begin with a fractional $(\Delta-\epsilon)$-colouring $\kappa$ of $G'=G\setminus X$ and extend it to a $(\Delta-\epsilon)$-colouring of $G$ as follows.

First we give $v_1$ and $v_2$ weight $\frac 12$ of common colour, leaving them only partially coloured.  This is possible because $v_1$ and $v_2$ have, in total, at most $5\leq \Delta-\epsilon-\frac 12$ coloured neighbours in $G\setminus X$.  Next we give $v_4$ and $v_5$ the same colour, which is possible because at this point the weight of colour on their neighbourhoods totals at most $2+2+\frac 12 \leq \Delta-\epsilon-1$, since they are both adjacent to $v_1$ and $v_2$.  Next we extend $\kappa$ to complete the colouring of $v_1$, $v_2$, $v_3$, $v_4$, and $v_5$ greedily, which we can do since each of these vertices has at least $\frac 12$ weight of repeated colour in its neighbourhood, and at least one uncoloured neighbour in $X$.  Finally we extend greedily to the remaining vertices of $X$, which we can do since each such vertex is adjacent to $v_1$, $v_2$, $v_4$, and $v_5$, and therefore has repeated colour of weight at least $\frac 32$ in its neighbourhood.  Thus $G$ is fractionally $(\Delta-\epsilon)$-colourable, a contradiction.
\end{proof}

\begin{lemma}\label{lem:reduce3}
If $\Delta\geq 7$, $G$ cannot contain $K_\Delta$ minus the edges of two vertex-disjoint paths of length two.
\end{lemma}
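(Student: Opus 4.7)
The plan is to follow the template of Lemmas \ref{lem:reduce1} and \ref{lem:reduce2}. Suppose for contradiction that $G$ contains a subgraph $X$ on $\Delta$ vertices such that the non-edges of $G[X]$ are exactly $\{v_1v_2,v_2v_3,v_4v_5,v_5v_6\}$. By the minimality of $G$ there is a fractional $(\Delta-\epsilon)$-colouring $\kappa$ of $G' = G\setminus X$; I would extend it to $G$ by sharing colour on two of the four non-adjacent special pairs, then finishing greedily.

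The structural point is that within $\{v_1,\ldots,v_6\}$ every cross-pair (one vertex from $\{v_1,v_2,v_3\}$ and one from $\{v_4,v_5,v_6\}$) is an edge of $G$, and so are $v_1v_3$ and $v_4v_6$; hence only the four prescribed non-edges are available for colour sharing. Since every non-special vertex of $X$ is adjacent to all of $v_1,\ldots,v_6$, finishing any non-special greedily forces $|\kappa[\{v_1,\ldots,v_6\}]| \leq 5-\epsilon$, i.e.\ the sharing must save at least $1+\epsilon$ of weight in total, which for $\epsilon\leq\tfrac 12$ means at least $\tfrac 32$.

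Concretely I would perform only two merges. First, give $v_4$ and $v_5$ a common colour set $C$ of weight $1$; this is possible because the coloured neighbours of $\{v_4,v_5\}$ in $G'$ total weight at most $2+3=5$, so $|\alpha(v_4)\cap\alpha(v_5)|\geq\Delta-\epsilon-5\geq\tfrac 32$ when $\Delta\geq 7$. Next, give $v_1$ and $v_2$ a common colour set of weight $\tfrac 12$ disjoint from $C$; since $v_4,v_5\in N(v_1)\cap N(v_2)$ already carry the full weight $|C|=1$ inside $X$, we have $|\kappa[N(v_1)\cup N(v_2)]|\leq 5+1=6$ and hence $|\alpha(v_1)\cap\alpha(v_2)|\geq\Delta-\epsilon-6\geq\tfrac 12$. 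These two merges save exactly $\tfrac 32$. I would then colour $v_6$, $v_3$, $v_1$, $v_2$ greedily in that order, each of which is easy (e.g.\ $v_6$ has $|\kappa[N(v_6)]|\leq |A\cup C|+2 = \tfrac 72$), and finally colour the $\Delta-6$ non-special vertices of $X$ greedily; the last non-special $x$ has $|\kappa[N(x)]|\leq \tfrac 92+(\Delta-7)+1 = \Delta-\tfrac 32$ and therefore $|\alpha(x)|\geq 1$, just enough to add the required unit of colour.

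The main obstacle is that for $\Delta=7$ and $\epsilon=\tfrac 12$ the inequalities hold with essentially no slack, and the choice of merges is essentially forced: if one naively attempts four symmetric merges of weight $\tfrac 12$ instead, then by the time one tries the $v_5,v_6$ (or symmetrically $v_2,v_3$) merge one finds $|\kappa[N(v_5)\cup N(v_6)]|$ already reaches $\tfrac{13}2$, which does not leave $\tfrac 12$ of room. Beyond this one observation, no new idea beyond those in Lemmas \ref{lem:reduce1} and \ref{lem:reduce2} is needed --- the argument is once again careful bookkeeping of colour weight on each relevant neighbourhood at each stage.
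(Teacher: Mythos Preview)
Your proposal is correct and is essentially the paper's own proof with the two paths swapped: the paper gives $v_1,v_2$ a full common colour and then $v_4,v_5$ weight $\tfrac12$ in common (completing $v_3,v_4,v_5,v_6$ greedily before the non-special vertices), whereas you give $v_4,v_5$ the full common colour and $v_1,v_2$ the weight-$\tfrac12$ share. Since the two $P_3$'s play symmetric roles this is the same argument; your bookkeeping is fine (minor slips: the set $A$ is used before it is named, and ``at least $\tfrac32$'' should read ``$\tfrac32$, which is $\geq 1+\epsilon$'').
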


\begin{proof}
Suppose to the contrary that $\Delta\geq 7$ and $G$ contains a subgraph $X$ on $\Delta$ vertices, with vertices $v_1,\dots, v_6\in V(X)$ such that the non-edges of $G[X]$ are exactly $\{v_1v_2,v_2v_3,v_4v_5,v_5v_6\}$.  We begin with a fractional $(\Delta-\epsilon)$-colouring $\kappa$ of $G'=G\setminus X$ and extend it to a $(\Delta-\epsilon)$-colouring of $G$ as follows.

First we give $v_1$ and $v_2$ the same colour.  Next we give $v_4$ and $v_5$ weight $\frac 12$ of common colour.  We then extend greedily to complete the colouring of $v_3$, $v_4$, $v_5$, and $v_6$, then extend greedily to complete the colouring of $G$.  We can do this because, similar to Lemma \ref{lem:reduce1}, $v_1$ and $v_2$ together have at most 5 neighbours in $G\setminus X$, as do $v_4$ and $v_5$.
\end{proof}

\begin{lemma}\label{lem:reduce4}
If $\Delta\geq 7$, $G$ cannot contain $K_\Delta$ minus the edges of a three-edge path.
\end{lemma}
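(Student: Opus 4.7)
My plan is to mirror the approach of Lemmas \ref{lem:reduce1}--\ref{lem:reduce3}. Suppose for a contradiction that $G$ contains such a subgraph $X$ on $\Delta$ vertices, labelled so that the non-edges of $G[X]$ are exactly $v_1v_2$, $v_2v_3$, and $v_3v_4$. By the minimality of $G$, there is a fractional $(\Delta-\epsilon)$-colouring $\kappa$ of $G' = G\setminus X$, and I would extend $\kappa$ to a fractional $(\Delta-\epsilon)$-colouring of $G$, yielding a contradiction.

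The extension has four stages. First, I would give $v_1$ and $v_2$ a common colour of weight $1$: they have at most $2$ and $3$ neighbours in $G'$ respectively, so $|\alpha(v_1)\cap\alpha(v_2)|\geq(\Delta-\epsilon)-5\geq\tfrac 32$ for $\Delta\geq 7$. Second, I would give $v_3$ and $v_4$ a common colour of weight $\tfrac 12$: subtracting $\kappa[v_1]=\kappa[v_2]$ (weight $1$, common to both neighbourhoods) and the $G'$-neighbours of $v_3$ (weight $\leq 3$) and $v_4$ (weight $\leq 2$), we get $|\alpha(v_3)\cap\alpha(v_4)|\geq(\Delta-\epsilon)-6\geq\tfrac 12$. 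Third, I would extend $\kappa[v_3]$ and $\kappa[v_4]$ greedily to weight $1$ each: since $v_3$ is non-adjacent to $v_2, v_4$, and $v_4$ is non-adjacent to $v_3$, the bounds $|\alpha(v_3)|\geq\Delta-\epsilon-4$ and $|\alpha(v_4)|\geq\Delta-\epsilon-3$ leave ample room to add a further $\tfrac 12$ to each.

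Finally, I would colour the remaining $\Delta-4$ vertices of $X\setminus\{v_1,v_2,v_3,v_4\}$ one by one, greedily. The crucial observation is that $|\kappa[v_3]\cup\kappa[v_4]|\leq\tfrac 32$, because $\kappa[v_3]$ and $\kappa[v_4]$ retain their common weight-$\tfrac 12$ sub-colour from the second stage. Each such vertex $u$ is adjacent to all of $v_1,v_2,v_3,v_4$, to at most one vertex outside $X$, and to the previously-coloured vertices of $X\setminus\{v_1,v_2,v_3,v_4\}$ (which form a clique), so the colours on its neighbourhood have total weight at most $1+\tfrac 32+1+(\Delta-5)=\Delta-\tfrac 32$, leaving $|\alpha(u)|\geq\tfrac 32-\epsilon\geq 1$ whenever $\epsilon\leq\tfrac 12$.

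The main obstacle is precisely this last greedy step at $\Delta=7$, $\epsilon=\tfrac 12$, where the bound is exactly tight: it is the colour repetition of weight $\tfrac 12$ between $\kappa[v_3]$ and $\kappa[v_4]$ that just barely enables the final round to succeed. Without the initial common colour on $v_3,v_4$, each $u$ would see an extra $\tfrac 12$ weight of colour and the argument would fail at $\Delta=7$.
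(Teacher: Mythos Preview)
Your proposal is correct and follows essentially the same approach as the paper's proof: colour $v_1,v_2$ identically, give $v_3,v_4$ weight $\tfrac12$ of common colour, finish $v_3,v_4$ greedily, and then greedily colour the remaining $\Delta-4$ vertices using the fact that the total repeated colour weight on $\{v_1,v_2,v_3,v_4\}$ is at least $\tfrac32$. The numerical bounds and the order of extension coincide with the paper's argument.
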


\begin{proof}
Suppose to the contrary that $\Delta\geq 7$ and $G$ contains a subgraph $X$ on $\Delta$ vertices, with vertices $v_1,v_2,v_3,v_4\in V(X)$ such that the non-edges of $G[X]$ are exactly $\{v_1v_2,v_2v_3,v_3v_4\}$.  We begin with a fractional $(\Delta-\epsilon)$-colouring $\kappa$ of $G'=G\setminus X$ and extend it to a $(\Delta-\epsilon)$-colouring of $G$ as follows.

We first extend $\kappa$ by colouring $v_1$ and $v_2$ with the same set of colours. Since $v_1$ has at most two coloured neighbours in $G'$ and $v_2$ has at most three coloured neighbours, we have $|\alpha(v_1)\cap \alpha(v_2)| \geq (\Delta-\epsilon-5) \geq 2-\epsilon \geq 1$, and so from this set we choose $\kappa[v_1]=\kappa[v_2]$.

We next extend $\kappa$ by giving $v_3$ and $v_4$ weight $\frac 12$ of common colour, which is possible because $v_3$ and $v_4$ together have at most 5 neighbours in $G\setminus X$, and weight 1 of colour appearing in their neighbourhood in $X$.  We may then extend greedily to complete the colouring of $v_3$ and $v_4$.  Now since the weight of colours appearing twice in $X$ is at least $\frac 32$, we may extend the colouring to the rest of $X$ greedily.  Thus $G$ is fractionally $(\Delta-\epsilon)$-colourable, a contradiction.
\end{proof}

We are now ready to prove Lemma \ref{lem:mce2}.

\begin{proof}[Proof of Lemma \ref{lem:mce2}]
Suppose $G$ contains a clique $C$ of size $\Delta-1$ and a vertex $w$ outside $C$ with at least two neighbours in $C$.  Call the vertices in $C$ $v_1,\ldots,v_\omega$, and suppose $w$ is adjacent to $v_1$ and $v_2$.  
Let the neighbours of $v_1$ and $v_2$ outside $C\cup \{w\}$ be denoted $y$ and $z$, if they exist.  We may actually assume they exist, since adding them as pendant vertices does not affect our proof adversely.  

We choose $w$, $v_1$, and $v_2$ such that if possible, $w$ is in a $K_\omega$, and subject to that, if possible, $v_1$ and $v_2$ do not have a common neighbour outside $C\cup \{w\}$, i.e.\ $y\neq z$.  We construct one of two reduced graphs from $G$, depending on whether or not $y$ and $z$ are distinct.

\vspace{.5em}\noindent{\bf Case 1:} $y \neq z$.

Let $p$ and $p'$ be the neighbours of $v_3$ outside $C$.  Subject to whether or not we can choose $w$ to be in a $K_\omega$ and whether or not we can choose $v_1$ and $v_2$ such that $y\neq z$, we choose $w$, $v_1$, $v_2$, and $v_3$ such that $w$ and $v_3$ are nonadjacent and $|\{p,p'\}\cap \{y,z\}|$ is minimum. Choose $v_4$ nonadjacent to $w$ as well, noting that this is possible since by Lemma \ref{lem:disjoint}, $w$ has at least two non-neighbours in $C$. Construct the graph $G_1$ from $G-C$ by making $y$ adjacent to $z$ and making $w$ adjacent to $p$ and $p'$.  Clearly $\Delta(G_1)\leq \Delta$.

We claim that $G_1$ is not fractionally $(\Delta-\epsilon)$-colourable; if it is then we extend a $(\Delta-\epsilon)$-colouring $\kappa$ of $G_1$ to a colouring of $G$ as follows. First, we extend $\kappa$ by giving $v_3$ the same colours as $w$. Since all the coloured neighbours of $v_3$ are adjacent to $w$ in $G_1$, we have $\kappa[w]\subseteq \alpha(v_3)$, and so we may choose $\kappa[v_3]=\kappa[w]$. We now greedily extend to the vertices $v_4,\dots,v_{\omega}$, which is possible because $v_1$ and $v_2$ remain uncoloured; it now remains to colour $v_1$ and $v_2$. Since $\kappa[v_3]=\kappa[w]$, it follows that $|\alpha(v_1)| \geq (\Delta-\epsilon)-(\Delta-3)-1 \geq 2-\epsilon$ and $|\alpha(v_2)|\geq 2-\epsilon$. Further, since $|\kappa[y]\cap\kappa[z]|=0$ we have $|\kappa[N(v_1)]\cap\kappa[N(v_2)]| \leq \Delta-3$, and so $|\alpha(v_1)\cup\alpha(v_2)| \geq 2$. Thus we may apply Lemma \ref{lem:hall} to extend $\kappa$ to $v_1$ and $v_2$. It follows that $G$ is fractionally $(\Delta-\epsilon)$-colourable, a contradiction.
This proves the claim. 

Therefore by the minimality of $G$ we may assume that either $G_1$ contains a $\Delta$-clique, or $\Delta=5$ and $G_1$ contains a copy of $C_5\boxtimes K_2$.

We claim that if $\Delta=5$, $G_1$ does not contain a copy $X$ of $C_5\boxtimes K_2$. Suppose to the contrary that adding the edges $wp,wp',yz$ to $G$ yields a copy of $C_5\boxtimes K_2$. Since $G$ does not contain two intersecting copies of $K_4$, $X$ contains two disjoint edges that are not edges of $G$. It follows that $w,y,z\in V(X)$. Further, since $C_5\boxtimes K_2$ is $5$-regular, $wp$ and $wp'$ both belong to $E(X)$ and further no vertex in $V(X)$ has a neighbour in $G\setminus(X\cup\{v_1,v_2,v_3\})$. Therefore $\{v_1,v_2,v_3\}$ is a clique cutset of size three, contradicting the fact that every proper induced subgraph of $G$ is fractionally $(\Delta-\epsilon)$-colourable. This proves the claim.

We may now move on to the more complicated task of proving that $\omega(G_1)=\omega$.  Suppose $G_1$ contains a $\Delta$-clique $C'$.

\textbf{Our first claim} is that $\{w,y,z\}\in C'$ and $yz\notin E(G)$. By Lemma \ref{lem:disjoint}, adding a single edge to $G$ cannot create a $\Delta$-clique. It follows that $w\in V(C')$. Suppose that $|\{y,z\}\cap C'|\leq 1$ or that $yz\in E(G)$. Again by Lemma \ref{lem:disjoint}, $p$, $p'$ must be distinct and belong to $C'$. Now, in $G$, $w$ has $\omega-2$ neighbours in the $\omega$-clique $C'- w$, and so $w$ does not belong to an $\omega$-clique by Lemma \ref{lem:disjoint}. On the other hand, $v_3$ has two neighbours in a $\omega$-clique (namely $p$ and $p'$) and does belong to a maximum clique, contradicting our choice of $w$. This proves the first claim.

\textbf{Our second claim} is that $|\{p,p'\}\cap \{y,z\}|=1$.
Suppose $|\{p,p'\}\cap \{y,z\}|=0$.  Then the edges in $\{wp,wp',yz\}\setminus E(G)$ either consist of a single edge, a two-edge matching, or a $2$-edge path disjoint from a third edge.  By Lemmas \ref{lem:disjoint}, \ref{lem:reduce1}, and \ref{lem:reduce2}, we know that they consist of a $2$-edge path disjoint from a third edge, and that $\Delta=5$.  In particular, it follows from the first claim that $w$ is adjacent to both $p$ and $p'$.  Let $p''$ and $p'''$ denote the neighbours of $v_4$ outside $C$.  Since $G$ does not contain a bump by Lemma \ref{lem:bump}, both $y$ and $z$ have only one neighbour in $C$.  We may therefore exchange the roles of $v_3$ and $v_4$ without violating the disjointness of $\{p,p'\},\{y,z\}$.  By the minimality of $G$, the new resulting reduced graph $G_1'$ (constructed as was $G_1$, but with $v_3$ and $v_4$ swapped) has a $K_\Delta$.  Since $y$ is adjacent to $w$, $v_1$, $p$, $p'$, $p''$, and $p'''$, the sets $\{p,p'\}$ and $\{p'',p'''\}$ must intersect.  Since $\{p,p',v_3,v_4\}$ cannot be a clique by Lemma \ref{lem:disjoint}, $\{p,p'\} \neq \{p'',p'''\}$.  Therefore we may assume $p' = p'''$ and that $|\{p,p',p''\}|=3$.  But then $p'$ is adjacent to $p$, $p''$, $y$, $z$, $v_3$, and $v_4$, contradicting the fact that $\Delta=5$.  Therefore $|\{p,p'\}\cap \{y,z\}|\neq 0$.

Suppose $|\{p,p'\}\cap \{y,z\}|=2$. We may assume $p=y$ and $p'=z$. Recall that we have chosen $v_3$ so as to minimize $|\{p,p'\}\cap \{y,z\}|$. 
Each of $w,y,z$ has at most three neighbours in $C$, since $\{w,y,z\}\in C'$ by the first claim.  Furthermore, if $w$ belongs to a $K_\omega$ in $G$, then it has only two neighbours in $C$. If $\Delta=5$, by Lemma \ref{lem:disjoint}, $v_4$ sees neither $y$ nor $z$ (since $v_3$ sees $y$ and $z$), contradicting our choice of $v_3$.  If $\Delta\geq 6$ and $w$ is in a $K_\omega$ in $G$, then there is a vertex in $C\setminus N(w)$ that is adjacent to at most one of $y,z$ and nonadjacent to $w$, contradicting our choice of $v_3$.  If $\Delta\geq 6$ and $w$ is not in a $K_\omega$ in $G$, then either there is a vertex in $C\setminus N(w)$ adjacent to at most one of $y,z$, contradicting our choice of $v_3$, or else every vertex in $C\setminus N(w)$ sees both of $y,z$.  In this latter case we can relabel: relabel $y$ to $w'$, $v_1$ to $v_1'$, $v_3$ to $v_2'$, $w$ to $y'$, $z$ to $z'$, and $v_4$ to $v_3'$. Since $v_4$ was chosen to be nonadjacent to $w$, we have a labelling that contradicts the minimality of $|\{p,p'\}\cap \{y,z\}|$.
This proves the second claim. We may now assume that $y=p$ and that $|\{y,z,p'\}|=3$.

\textbf{Our third claim} is that $p'\in C'$.
Suppose to the contrary that $p'\notin C'$. Then in $G$, $w$ has $\omega-1$ neighbours in $V(C')$. Thus $w$ belongs to an $\omega$-clique in $G-C$, and therefore has exactly two neighbours in $C$. Also, since $wz\in E(G)$ and $G$ does not contain a bump by Lemma \ref{lem:bump}, $v_2$ is the only neighbour of $z$ in $C$. Further, $y$ belongs to an $(\omega-1)$-clique in $G-C$ and has at most three neighbours in $C$, and at most two if $\Delta=5$. Therefore, there is a vertex in $C$ with no neighbour in $\{w,y,z\}$, contradicting our choice of $v_3$.
This proves the third claim.

We now know that $y=p$ and $\{w,y,p',z\}\subseteq V(C')$.
Since $G$ does not contain a bump and since $wz\in E(G)$, we know that $z$ has only one neighbour in $C$.  Therefore by our choice of $v_3$ minimizing $|\{p,p'\}\cap\{y,z\}|$, every vertex in $C$ is adjacent to $w$ or $y$.  Thus $\Delta=6$ and each of $w$ and $y$ has three neighbours in $C$.

To complete the proof, we now fractionally colour $G$ directly, beginning with a fractional $(\Delta-\epsilon)$-colouring $\kappa$ of $G-C-\{w,y\}$.  
We first extend $\kappa$ by colouring $w$ and $v_3$ with the same set of colours. Since $v_3$ and $w$ together have at most four coloured neighbours, we have $|\alpha(v_3)\cap \alpha(w)| \geq (6-\epsilon)-4 \geq 1$, and so we may choose $\kappa[v_3]=\kappa[w]$.

Next we extend $\kappa$ by colouring $y$ and $v_2$ so that $|\kappa[y]\cap\kappa[v_2]|\geq \frac 12$, which is possible because at this point, $|\kappa[N(y)\cup N(v_2)]|=5$, since the only coloured vertices in $N(y)\cup N(v_2)$ are $C'-y$ and $v_3$ (which has the same colour as $w$). We now have $\kappa[\{v_2,v_3,w,y\}]\leq \tfrac 52$.

Next we extend $\kappa$ by colouring $v_4$ and $v_5$. Since each of $v_4,v_5$ is adjacent to either $w$ or $y$, we have $|\kappa[N(v_4)]|\leq \tfrac 72$ and $|\kappa[N(v_5)]|\leq \tfrac 72$. Thus $|\alpha(v_4)|, |\alpha(v_5)|\geq 2$ and so we may apply Lemma \ref{lem:hall} to choose $\kappa[v_4]$ and $\kappa[v_5]$ greedily.

Finally we greedily extend $\kappa$ to $v_1$. We have $\kappa[N(v_1)] \leq \tfrac 92$ since $v_1$ is adjacent to $v_2$, $v_3$, $w$, and $y$. Applying Lemma \ref{lem:hall}, we may choose $\kappa[v_1]$ from $\alpha(v_1)$. Thus $G$ is fractionally $(\Delta-\epsilon)$-colourable, a contradiction.

{\bf This completes the proof of Case 1.}

\vspace{.5em}\noindent{\bf Case 2:} $y = z$ and $w$ is in a $K_\omega$ in $G$.

In this case, we know that we can choose $w$ to be in a maximum clique, but we cannot make such a choice of $w,v_1,v_2$ for which $y\neq z$.  Since $w$ is in a maximum clique, it has only two neighbours in $C$.  Therefore we may choose $v_3$ and $v_4$ to be nonadjacent to both $w$ and $y$, since Lemma \ref{lem:disjoint} implies that $y$ has at least two non-neighbours in $C$.  But we need further conditions on our vertex labelling.  Denote by $p,p'$ and $q,q'$ the neighbours of $v_3$ and $v_4$ outside $C$, respectively.
We choose a labelling of the vertices satisfying the following conditions:
\begin{enumerate*}
\item[$L1$] $w$ is in a maximum clique.  Subject to this condition,
\item[$L2$] $y$ is in a maximum clique if possible.  Subject to this condition,
\item[$L3$] $v_3$ and $v_4$ are not adjacent to $w$ nor to $y$. Subject to satisfying the previous conditions,
\item[$L4$] $v_3$ is chosen so that $|N(p)\cap N(p')\cap N(y)|$ is maximized.
\end{enumerate*}

Construct the graph $G_2$ from $G-C$ by making $w$ adjacent to $p$ and $p'$ and making $y$ adjacent to $q$ and $q'$. Clearly $\Delta(G_1)\leq \Delta$.

We claim that $G_2$ is not fractionally $(\Delta-\epsilon)$-colourable; if it is then we extend a $(\Delta-\epsilon)$-colouring $\kappa$ of $G_2$ to a colouring of $G$ as follows.
We begin by extending $\kappa$ to colour $v_3$ with the same colour as $w$. Since $v_3$'s only coloured neighbours are $p$ and $p'$,  which are adjacent to $w$ in $G_2$, we may choose $\kappa[v_3]=\kappa[w]$.
We now extend $\kappa$ to the remaining vertices in $C$.
By the choice of $\kappa[v_3]$, we have $|\alpha(v_1)|, |\alpha(v_2)| \geq \Delta-\epsilon-2$. Since each of the $\Delta-4$ other uncoloured vertices has at most three coloured neighbours we find $|\alpha(v_i)|\geq \Delta-\epsilon-3$ for $4\leq i \leq \omega$. Third, the edges $yq,yq'$ in $G_2$ ensure that $|\alpha(v_1)\cup \alpha(v_4)|, |\alpha(v_2)\cup \alpha(v_4)| \geq \Delta-\epsilon-1$. Applying Lemma \ref{lem:hall} to $C\setminus\{v_3\}$ (which has size $\Delta-2$), we find a $(\Delta-\epsilon)$-colouring of $G$, a contradiction.
This proves the claim.

Therefore by the minimality of $G$ we may assume that either $G_2$ contains a $\Delta$-clique, or $\Delta=5$ and $G_2$ contains a copy of $C_5\boxtimes K_2$.
Let $F=E(G_2)\setminus E(G) \subseteq \{wp,wp',yq,yq'\}$. Let $F_w$ and $F_y$ denote the edges incident to $w$ and $y$ in $G_2$, respectively.

We claim that if $\Delta=5$, $G_2$ does not contain a copy $X$ of $C_5\boxtimes K_2$. Suppose to the contrary that adding the edges $wp,wp',yq,yq'$ to $G$ creates a copy of $C_5\boxtimes K_2$. Since $G$ does not contain two intersecting copies of $K_4$, $X$ contains at least two vertex-disjoint edges that are not edges of $G$. It follows that $w,y\in V(X)$. Further, since $C_5\boxtimes K_2$ is $5$-regular, $\{p,p',q,q'\}\subseteq V(X)$ and $F$ contains all four edges $wp,wp',yq,yq'$. Since $w$ belongs to a $K_4$ in $G$, $p$ and $p'$ must form the intersection of two $K_4$s in $X$. Since $G$ does not contain a pair of intersecting $K_4$s, $q$ and $q'$ do not form the intersection of two $K_4$s in $X$, and moreover, $y$ cannot be in $N(w)\cup N(p) \cup N(p')$ in $X$.  Hence $y$ does not belong to a $4$-clique in $G$.  See Figure \ref{fig:case2}, where $y$ is the bottom left vertex.
Observe that $v_3$ belongs to a maximum clique in $G$, and its neighbours $p$ and $p'$ belong to another maximum clique. Further, $p$ and $p'$ have a common neighbour in a third maximum clique.  Since $y$ is not in a maximum clique, this contradicts $L2$ in our choice of $w$ and $y$, and proves the claim.

\begin{figure}
\begin{center}
\includegraphics[scale=.7]{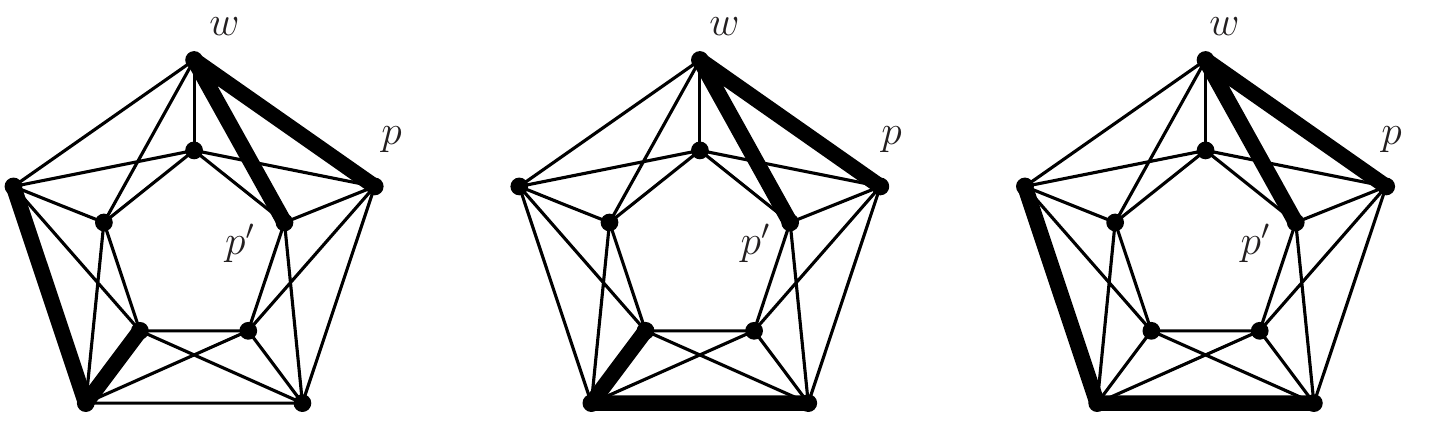}
\end{center}\caption{Three ways to form $C_5\boxtimes K_2$ in Case 2.}\label{fig:case2}
\end{figure}

We now move on to the task of proving that $\omega(G_2)=\omega$.  Suppose $G_2$ contains a $\Delta$-clique $C'$. 

\textbf{Our first claim} is that  $|E(C')\cap F_y|\geq 1$ and $|E(C')\cap F_w|\geq 1$.  We can see that $|E(C')\cap F_y|\geq 1$, otherwise $C'\setminus w$ is a maximum clique in $G$ intersecting a maximum clique containing $w$, contradicting Lemma \ref{lem:disjoint}.

Suppose now that $|E(C')\cap F_w|=0$. By the same argument, $y$ cannot belong to a maximum clique in $G$.  We know that $C'$ must contain at least two edges in $F$, so $yq,yq'\in F\cap E(C')$. Therefore $|N(q)\cap N(q')\cap N(y)|\geq \omega-2$ and these vertices, along with $y$ form an $(\omega-1)$-clique. Further $qq'\in E(G)$, and so $q,q'$ belong to an $\omega$-clique in $G$.

If $|\{p,p'\}\cap \{q,q'\}| = 1$, then we can relabel $v_4$ as $w'$; since $v_4$ is in a $K_\omega$ in $G$ and has two neighbours in $C'\setminus\{y\}$, one but not both of which are adjacent to $v_3$, contradicting the fact that we are not in Case 1.  If $|\{p,p'\}\cap \{q,q'\}| = 2$, this contradicts condition $L2$ in our choice of labelling, since $G$ contains two vertices in the $K_\omega$ $C$ having two neighbours in common in a disjoint $K_\omega$ $C'\setminus\{y\}$.  Therefore $|\{p,p'\}\cap \{q,q'\}| = 0$.

Note that by $L3$ we have chosen $v_3$, $v_4$ nonadjacent to both $w$ and $y$. In particular this means that $\{p,p'\}$ and $\{q,q',y\}$ are disjoint.
By $L4$, we know $|N(p)\cap N(p')\cap N(y)|\geq |N(q)\cap N(q')\cap N(y)| \geq \omega-2$. In particular this set must intersect $N(q)\cap N(q')\cap N(y)$.  But since $\{p,p'\}$ and $\{q,q',y\}$ are disjoint, if $\{p,p'\}\cap C' = \emptyset$, there is a vertex of degree $\Delta+1$, a contradiction.  Therefore we may assume without loss of generality that $p\in C'\setminus\{q,q',y\}$.  But then in $G$, $p$ is adjacent to every other vertex in $C'$, so its only other neighbour is $v_3$.  Since $y$ is nonadjacent to $v_3$, $q$, and $q'$, $N(p)\cap N(p')\cap N(y)\subseteq C'\setminus\{q,q',y,p\}$, contradicting the fact that its size is at least $\omega-2$.  This proves the first claim.

\textbf{Our second claim} is that $w$ and $y$ belong to an $\omega$-clique $W$ in $G$.  As a consequence, since this makes $\{w,y,v_1,v_2\}$ a clique, Lemma \ref{lem:disjoint} tells us that $\Delta \geq 6$.  To prove this, let $W$ be the maximum clique in $G$ containing $w$, and note that $W$ is the closed neighbourhood of $w$ in $G-C$.  By the first claim, $w\in V(C')$ and $y\in V(C')$.  By the choice of $v_3$, $y\notin\{p,p'\}$ and $w\notin \{q,q'\}$. It follows that $wy\in E(G)$, and so $y\in W$. This proves the second claim.

\textbf{Our third claim} is that the only edges between $C$ and $W$ are between $\{v_1,v_2\}$ and $\{w,y\}$.  To see this assume otherwise, and denote the vertices of $W$ $\{w,y,w_3,\ldots, w_\omega\}$.  By the maximum degree, there must exist $3\leq i,j\leq \omega$ such that $v_i$ and $w_j$ are adjacent.

To reach a contradiction we extend a fractional $(\Delta-\epsilon)$-colouring $\kappa$ of $G-W-C$ as follows.  First assign $w$ and $v_i$ the same colour, which is possible because together these vertices have at most weight $1$ of colour on (the union of) their neighbourhoods.  Then for some $i' \notin \{1,2,i\}$, give $y$ and $v_{i'}$ colour $\frac 12$ in common, leaving them only partially coloured, noting that this is possible because at this point $y$ and $v_{i'}$ have colour at most $1+2=3$ on their neighbourhoods (since $w$ and $v_i$ have the same colour).  Next we greedily extend to all vertices of $W\setminus \{w,y,w_j\}$, noting that this is possible because all these vertices are adjacent to $y$ and $w_j$, which together have only $\frac 12$ colour on them at this point.  We then greedily extend to $w_j$, which is possible because $w_j$ is adjacent to $w$, $y$, and $v_i$, which together have weight $\frac 32$ colour on them.  Next we greedily extend to complete the colouring of all vertices of $(C\cup \{y\}) \setminus \{v_1,v_2\}$, which is clearly possible because $v_1$ and $v_2$ are still uncoloured.  Finally we extend to $v_1$ and $v_2$, which is possible because both are complete to $\{w,y,v_i,v_{i'}\}$, a set of four vertices with at most $\frac 52$ colour on them.  This contradicts the fact that $G$ is not fractionally $(\Delta-\epsilon)$-colourable, and proves the third claim.

\textbf{Our fourth claim} is that $\{p,p'\}\cap \{q,q'\} \neq \emptyset$.  By the second claim, neither $w$ nor $y$ has any neighbours outside of $W$ in $G-C$.  By the first claim, $C'$ contains an edge in $F_w$ and an edge in $F_y$; we may assume without loss of generality that $p\in V(C')$.  By the third claim $\{p,p',q,q'\}\cap W = \emptyset$, so $G$ contains no edges between $\{w,y\}$ and $\{p,p',q,q'\}$.  Therefore since $C'$ is a clique in $G_2$, $yp$ must be in $F$, so $p\in\{q,q'\}$.  This proves the fourth claim.

Without loss of generality, for the remainder of Case 2 we assume $p\in V(C')$ and $p=q$. Thus we can also assume that $p$ is adjacent to $w_3$ and $w_4$ in $W$. By the third claim $p$ does not belong to $W$.

We now complete the proof of Case 2. To do so we fractionally colour $G$ by extending a $(\Delta-\epsilon)$-colouring $\kappa$ of $G-W-C-\{p\}$ as follows.
We begin to extend $\kappa$ by assigning $\kappa[w]=\kappa[v_5]$, noting that $v_5$ may or may not be adjacent to $p$. This is possible since together these vertices have at most two coloured neighbours.
Next we give $v_1$ and $w_5$ colour $\tfrac 12$ in common, leaving them partially uncoloured. This is possible since at this point $|\kappa[N(v_1)\cup N(w_5)]|\leq 3$.
Next, we extend $\kappa$ by giving $w_4$ and $v_4$ the same set of colours, noting that since both are adjacent to $p$, at this point at most $\frac 72\leq \Delta-\epsilon-1$ colour appears on their neighbourhoods, so this is possible.
Next, we give $w_3$ and $v_3$ common colour $\tfrac 12$, noting that both are adjacent to $p$. Since $\kappa[N(v_3)\cap C]=\kappa[(N(w_3)\cap W)\cup\{v_3\}]$ and $|\kappa[N(v_3)\cap C]|=\tfrac 52$, we have $|\alpha(v_3)\cap\alpha(w_3)| \geq (\Delta-\epsilon)-\tfrac 52 - 2 \geq 1$.
We now greedily extend $\kappa$ to colour $W-\{w,y,w_3,w_4\}$, which is possible since $y$ and $w_3$ together have weight $\frac 32$ not yet coloured.  Next we give $y$ and $v_3$ weight $\frac 12$ of colour in common and leave them partially uncoloured, which is possible because at this point $|\alpha(y)|\geq \frac 32$, and $|\kappa[\tilde N(v_3)]\setminus \kappa[\tilde N(y)]| \leq 1$.  We can now greedily extend to $C-\{v_1,\dots,v_5\}$, since $v_1$ and $v_2$ together have weight $\frac 32$ not yet coloured.  Next we can extend to complete the colouring of $w_3$, since $y$ and $p$ together have weight $\frac 32$ not yet coloured.  Next we can extend to complete the colouring of $y$, since $v_1$ and $v_2$ together have weight $\frac 32$ not yet coloured.

Finally we can complete the colouring by extending greedily to complete the colouring of $v_1$ and $v_2$, since each has weight at least $\frac 32$ of colour appearing twice on its neighbourhood.  This completes the proof of Case 2.

{\bf This completes the proof of Case 2.}

\vspace{.5em}\noindent{\bf Case 3:} $y = z$ and $w$ is not in a $K_\omega$ in $G$.

In this case, by the choice of $w$, there exists no vertex in $G$ belonging to a maximum clique that has two neighbours in a different maximum clique. Also, we know that every pair of vertices in $C$ has either zero or two common neighbours outside of $C$, for otherwise with a better choice of $w, v_1,v_2$ we would be in Case 1. Thus $N(w)\cap V(C) = N(y)\cap V(C)$. By Lemma \ref{lem:disjoint}, $|V(C)\setminus N(w)| \geq 2$.
Again denote by $p,p'$ and $q,q'$ the neighbours of $v_3$ and $v_4$ outside $C$, respectively. We choose $v_3$ and $v_4$ from $V(C)\setminus N(w)$ to maximize $|\{p,p',q,q'\}|$. Subject to this, $v_3$ and $v_4$ are chosen to maximize $|\{wp,wp',yq,yq'\}\cap E(G)|$. 
Note that $|\{p,p'\}\cap \{q,q'\}|\in \{0,2\}$, that $\{w,y\} \cap \{p,p',q,q'\} = \emptyset$, and that in particular, $y$ is nonadjacent to $v_4$.

Noting that $w,y \notin \{p,p',q,q'\}$, we construct the graph $G_2$ from $G-C$ as in Case 2 by making $w$ adjacent to $p$ and $p'$ and making $y$ adjacent to $q$ and $q'$.
As in Case 2, we may assume $G_2$ is not fractionally $(\Delta-\epsilon)$-colourable; if it is then we extend a $(\Delta-\epsilon)$-colouring $\kappa$ of $G_2$ to a colouring of $G$. (Observe that the colouring argument given in Case 2 does not make use of the fact that $w$ belongs to a maximum clique in that case.)

Therefore we may assume that either $G_2$ contains a $\Delta$-clique, or $\Delta=5$ and $G_2$ contains a copy of $C_5\boxtimes K_2$. As in the previous case, let $F=E(G_2)\setminus E(G) \subseteq \{wp,wp',yq,yq'\}$. Let $F_w$ and $F_y$ denote the edges of $F$ incident to $w$ and $y$ in $G_2$, respectively.

We claim that if $\Delta=5$, $G_2$ does not contain a copy $X$ of $C_5\boxtimes K_2$. Suppose to the contrary that adding the edges $wp,wp',yq,yq'$ to $G$ creates a copy of $C_5\boxtimes K_2$. Since $G$ does not contain two intersecting copies of $K_4$, $X$ contains two vertex-disjoint edges of $F$.  It follows that $w,y\in V(X)$, and since $\Delta=5$, Lemma \ref{lem:disjoint} tells us that $w$ and $y$ are not adjacent. Further, since $C_5\boxtimes K_2$ is $5$-regular, $\{p,p',q,q'\}\subseteq V(X)$ and $F$ contains all four edges $wp,wp',yq,yq'$.  Since $w$ does not belong to a $K_4$ in $G$, $p$ and $p'$ do not form the intersection of two $K_4$s in $X$.  Likewise, neither do $q$ and $q'$.  Also, if $\{p,p'\}\cap \{q,q'\}\neq \emptyset$ then $|\{p,p'\}\cap \{q,q'\}|=2$ (since we are not in Case 1), which is impossible because intersection of the neighbourhoods of two nonadjacent vertices in $C_5\boxtimes K_2$ is the intersection of two $K_4$s, a contradiction.  Therefore $w,y,p,p',q,q'$ are six distinct vertices.

\begin{figure}[h]
\begin{center}
\includegraphics[scale=.4]{./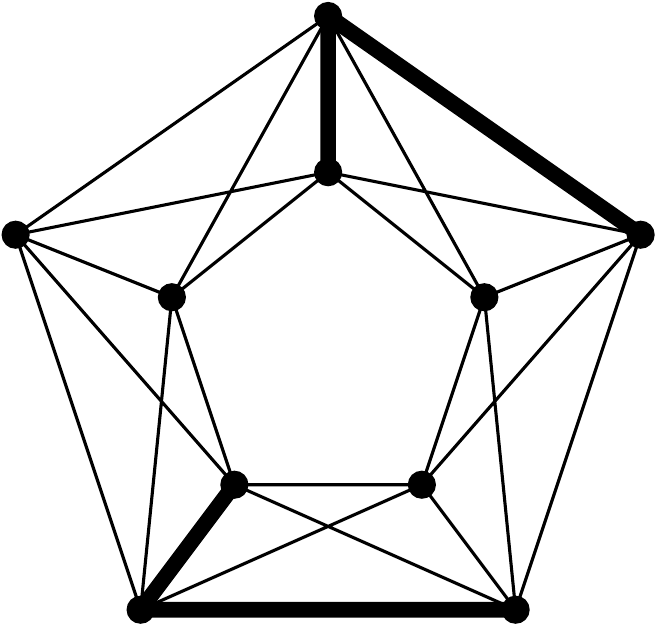}
\end{center}\caption{The only way to form $C_5\boxtimes K_2$ in Case 3.  If $w$ is the top vertex, we may instead choose $w$ as the vertex immediately below it to put us in Case 1.}\label{fig:case3}
\end{figure}

Since exchanging the roles of $v_3$ and $v_4$ cannot reduce $|F|$, $G$ contains no edges from $\{wq, wq', yp, yp'\}$.  It follows that $pp'\in E(G)$ and $qq'\in E(G)$.  Therefore by symmetry, bearing in mind that $w$ and $y$ are nonadjacent in both $G$ and $G_2$, the only possible case is shown in Figure \ref{fig:case3}.  Note here that there is a different choice of $w$ that would put us in Case 1, a contradiction.

We now proceed to prove that $\omega(G_2)<\Delta$.  Suppose $G_2$ contains a $\Delta$-clique $C'$. 

\textbf{Our first claim} is that $|E(C')\cap F_w|\geq 1$ and $|E(C')\cap F_y|\geq 1$.
Suppose that $|E(C')\cap F_w|=0$. Then clearly $y\in V(C')$, and by Lemma \ref{lem:disjoint}, both edges $yq,yq'$ belong to $E(C')$, and $qq'\in E(G)$. But then $C'-y$ is an $\omega$-clique containing two neighbours of $v_4$, which also belongs to an $\omega$-clique. This contradicts our choice of $w$.
By a symmetric argument, $|E(C')\cap F_y|\geq 1$.
This proves the first claim.

\textbf{Our second claim} is that $wy\in E(G)$ and $\Delta \geq 6$.
By the first claim, $w$ and $y$ belong to $V(C')$. By the choice of $v_3$ and $v_4$, $w,y \notin \{p,p',q,q'\}$. Thus $wy\in E(G)$, and so $w,y,v_1,v_2$ form a $K_4$. If $\Delta=5$ this contradicts Lemma \ref{lem:disjoint}. This proves the second claim.

\textbf{Our third claim} is that $|E(C')\cap F|\geq 3$.
Suppose that $|E(C')\cap F| =2$. By Lemma \ref{lem:reduce1}, the two edges in $E(C')\cap F$ do not form a matching, and so they form a two-edge path.  By the first claim, one of the edges must be between $w$ and $y$, contradicting the second claim.
This proves the third claim.

\textbf{Our fourth claim} is that $|E(C')\cap F|=4$.
Suppose that $|E(C')\cap F| =3$. By Lemma \ref{lem:reduce2}, at least two pairs of the edges in $E(C')\cap F$ intersect. 
Since $w,y \notin \{p,p',q,q'\}$ the edges $E(C')\cap F$ do not form a triangle, so they form a three-edge path.
By Lemma \ref{lem:reduce4} and the second claim, $\Delta= 6$. 

Since $wy\in E(G)$ and by symmetry between $w$ and $y$ and between $p$ and $p'$, we may assume $p=q$ and the edges of the path are $p'w,wp,py$.  Since $|\{p,p'\}\cap \{q,q'\}|\neq 1$, $p'=q'$ and $pp'\in E(G)$. By the choice of $v_3,v_4$ maximizing $|\{p,p',q,q'\}|$, $v_5$ must be complete to $\{p,p'\}$ or to $\{w,y\}$. But then $v_5$ belongs to two $5$-cliques in $G$, contradicting Lemma \ref{lem:disjoint}.
This proves the fourth claim.

We now know that $|E(C')\cap F|=4$. 
Suppose that the edges in $E(C')\cap F$ form two vertex-disjoint two-edge paths. Then by Lemma \ref{lem:reduce3}, $\Delta=6$. Now $|\{p,p',q,q'\}|=4$ and so $wq,wq',yp,yp' \in E(G)$. This contradicts the choice of $v_3$ and $v_4$, for reversing their roles would yield $|F|=0$.

Since we are in Case 3, the edges in $E(C')\cap F$ therefore form a cycle of length four. It follows that $\{p,p'\}=\{q,q'\}$ and $wy,pp'\in E(G)$. By the choice of $v_3$ and $v_4$ maximizing $|\{p,p',q,q'\}|$, each of $v_5,\dots,v_{\omega}$ is complete to either $\{w,y\}$ or $\{p,p'\}$.  Therefore by Lemma \ref{lem:disjoint}, $\Delta \geq 7$.  Since each of $w,y,p,p'$ is adjacent to $\Delta-3$ vertices of $C'$ in $G$, each has at most three neighbours in $C$.  Therefore $\Delta=7$, and $G$ is isomorphic to the graph $(C_5\boxtimes K_3) - 2v$ pictured in Figure \ref{fig:upperbounds}.  Thus $G$ is indeed fractionally $\tfrac{13}{2}$-colourable and thus fractionally $(\Delta-\epsilon)$-colourable, a contradiction.

This completes the proof of Case 3, and the proof of the lemma.
\end{proof}

\section{Future directions}

We have already given several open problems that are worthy of consideration, namely Conjectures \ref{con:1} and \ref{con:2}, which propose, respectively, that $f(6)=f(7)=f(8)=\tfrac 12$ and that $f(4)\geq f(3)$.  We conclude the paper with one more conjecture:

\begin{conjecture}\label{con:3}
Let $G$ be a graph with maximum degree $5$ and clique number $4$ such that no two $4$-cliques intersect and such that no vertex outside any maximum clique $C$ has more than one neighbour in $C$.  Then there is a fractional $4$-colouring of the vertices in $4$-cliques such that for any vertex $v$ not in a $4$-clique, $|\alpha(v)|\geq 1$.
\end{conjecture}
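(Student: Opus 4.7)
By Theorem~\ref{thm:aharoni} applied to $G_\omega$, which under the hypotheses has maximum degree at most $5$ and a partition into $4$-cliques $B_1, \dots, B_\ell$, we get $\chi_f(G_\omega) = 4$. For any optimal fractional $4$-colouring $\kappa$, summing $|\kappa[v]| \geq 1$ over a single clique $B_i$ gives $4 \leq \sum_{v \in B_i} |\kappa[v]| = \sum_{S} \mu(S)\,|S \cap B_i| \leq 4$, which forces every colour class $S$ with $\mu(S) > 0$ to meet each $B_i$ in exactly one vertex. Via Proposition~\ref{prop:tfae}(3), the conjecture becomes the following: find a probability distribution $\pi$ on transversals $S$ of $\{B_1, \dots, B_\ell\}$ such that $\Pr_\pi(v \in S) = \tfrac14$ for every $v \in V_\omega$, and $\Pr_\pi(S \cap N_\omega(v) = \emptyset) \geq \tfrac14$ for every $v \notin V_\omega$.

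\textbf{Step 2 — Trivial subcase.} For any $v \notin V_\omega$ with $d_\omega(v) \leq 3$, the condition holds automatically in any such $\kappa$ since $|\kappa[N_\omega(v)]| \leq \sum_{u \in N_\omega(v)} |\kappa[u]| = d_\omega(v) \leq 3$, giving $|\alpha(v)| \geq 1$. So only the problematic vertices $v$ with $d_\omega(v) \in \{4,5\}$ impose new constraints on $\pi$, and only finitely many linear inequalities need to be satisfied on top of the marginal constraints from Step 1.

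\textbf{Step 3 — LP duality and structural contradiction.} The required $\pi$ is a feasible point of an LP over the polytope of distributions on transversals. Assume no feasible $\pi$ exists; by Farkas' lemma there exist reals $\alpha_v$ for $v \in V_\omega$ and $\beta_v \geq 0$ for problematic $v \notin V_\omega$ such that every transversal $S$ violates
\[
\sum_{v \in S} \alpha_v + \sum_{v : N_\omega(v) \cap S = \emptyset} \beta_v \ \leq \ \tfrac14 \Big(\sum_{v \in V_\omega} \alpha_v + \sum_v \beta_v\Big).
\]
To derive a contradiction, exploit the fact that for each problematic $v$, every neighbour $u_i \in N_\omega(v)$ has at most $5 - 3 - 1 = 1$ non-$v$ neighbour outside $B_i$, so the edges among the blocks containing $N_\omega(v)$ form an extremely sparse subgraph. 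I would use this sparsity to construct, by a local swapping/matching argument on candidate transversals (in the spirit of Lemma~\ref{lem:hall}), a family of transversals whose $\alpha/\beta$-values collectively beat the right-hand side.

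\textbf{Main obstacle.} The case $d_\omega(v) = 5$ is where everything is tight: even if each block picked a transversal element independently and uniformly, the avoidance probability would be only $(3/4)^5 < 1/4$. Thus the proof cannot rely on independence — it must exploit the positive correlation that edges between blocks induce among the avoidance events (since a transversal forced away from $u_i$ by a neighbour in another block is simultaneously pushed away from other $u_j$'s). Making this correlation argument rigorous across every possible local configuration of a degree-$5$ external vertex, while simultaneously respecting the marginal constraint $\Pr(v\in S) = 1/4$ on every $v \in V_\omega$, is where the combinatorial heart of the proof will lie; it is also what explains why the authors state this as a conjecture rather than a theorem.
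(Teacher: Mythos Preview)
The paper does not prove this statement: it is explicitly labeled \emph{Conjecture} and appears in the ``Future directions'' section as an open problem, with the remark that if it held, the bounds $f(5)\geq 1/11$ and $f(6)\geq 1/8$ would follow. So there is no ``paper's own proof'' to compare against.

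Your proposal is not a proof either, and you say as much in your final paragraph. Steps 1 and 2 are correct and useful framing: Theorem~\ref{thm:aharoni} with $\omega=4$, $k=2$ does give $\chi_f(G_\omega)=4$, the reduction to transversals with uniform $1/4$ marginals is valid, and the observation that only vertices with $d_\omega(v)\in\{4,5\}$ are constraining is right. But Step 3 is only a Farkas setup followed by ``I would use this sparsity to construct\ldots''; no construction is given, and the ``Main obstacle'' paragraph correctly identifies exactly why the $d_\omega(v)=5$ case resists any independence-type argument (since $(3/4)^5<1/4$). That obstacle is real and is precisely why the authors left this open; nothing in your outline overcomes it.

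One small correction: in Step 3 you compute that each $u_i\in N_\omega(v)$ has at most one non-$v$ neighbour outside its block $B_i$, but that neighbour need not lie in $V_\omega$, so ``the edges among the blocks containing $N_\omega(v)$'' may be even sparser than you suggest --- possibly empty --- which makes the hoped-for positive correlation harder, not easier, to arrange.
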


If Conjecture \ref{con:3} were to hold, our fractional colouring method could be applied to greater effect.  In particular, we could easily prove that $f(5) \geq 1/11$ and $f(6) \geq 1/8$.  The improvements would be smaller for larger values of $\Delta$.

\section{Acknowledgements}

The authors are very grateful to the two referees for their thorough, helpful, and speedy reviews.

\bibliography{masterbib}
\end{document}